\numberwithin{equation}{section}
\theoremstyle{plain}
\newtheorem{theorem}{Theorem}[section]
\newtheorem{proposition}[theorem]{Proposition}
\newtheorem{lemma}[theorem]{Lemma}
\newtheorem{corollary}[theorem]{Corollary}
\theoremstyle{remark}
\newtheorem{remark}[theorem]{Remark}
\def\Tr{{\rm Tr}}
\def\eps{\varepsilon}
\def\Re{\operatorname{Re}}
\def\Im{\operatorname{Im}}
\def\1{{\ensuremath {\mathds 1} }}
\def\cF {\mathcal{F}}
\def\cH{\mathcal{H}}
\def\R {\mathbb{R}}
\def\cN {\mathcal{N}}
\def\cU {\mathcal{U}}
\def\cR{\mathcal{R}}
\def\rd{\text{d}}
\def\cF {\mathcal{F}}
\def\bR {\mathbb{R}}
\def\bC {\mathbb{C}}
\def\bN {\mathbb{N}}
\def\cL{\mathcal{L}}
\def\d{{\rm d}}
\def\ii{\mathrm{i}}
\def\dx{\mathrm{d}x}
\def\dy{\mathrm{d}y}
\def\dz{\mathrm{d}z}
\def\dr{\mathrm{d}r}
\def\ds{\mathrm{d}s}
\def\dt{\mathrm{d}t}
\def\dk{\mathrm{d}k}
\def\dxi{\mathrm{d}\xi}
\def\cW{\mathcal{W}}
\DeclareMathOperator{\ImPart}{Im}
\title{Microscopic  derivation of a Schr\"odinger equation in dimension one with a nonlinear point interaction}
\author[1]{Riccardo Adami}
\author[2]{Jinyeop Lee}
\affil[1]{Dipartimento di Scienze Matematiche ``G.L. Lagrange'', Politecnico di Torino, \newline
Corso Duca degli Abruzzi, 24, 10129, Torino, Italy\newline }
\affil[2]{Departement Mathematik und Informatik, Universität Basel, \newline
Spiegelgasse 1,CH-4051, Basel, Switzerland}
\date{}
\begin{document}

\maketitle

\begin{abstract}
	We derive an effective equation for the dynamics of many identical bosons in dimension one in the presence of a tiny impurity. The interaction between every pair of bosons is mediated by the impurity through a positive three-body potential. Assuming a simultaneous mean-field and short-range scaling with the short-range proceeding slower than the mean-field, and choosing an initial fully condensed state, we prove propagation of chaos and obtain an effective one-particle Schr\"odinger equation with a defocusing nonlinearity concentrated at a point. More precisely, we prove the convergence of one-particle density operators in the trace-class topology and estimate the fluctuations as superexponential. This is the first derivation of the so-called nonlinear delta model, widely investigated in the last decades, as a phenomenological model for several physical phenomena.
\end{abstract}

\section{Introduction}
\label{sec:intro}

Concentrated nonlinearities for the Schr\"odinger equation were introduced in the nineties of the twentieth century with the aim of effectively modeling several physical phenomena.
In \cite{PJ-LC91}, a model with a concentrated nonlinearity was proposed to describe the nonlinear effects arising when a bunch of electrons experiences resonant tunneling through a double-well potential. 
In that case, the effect of the trapped electrons inside the well was described by a nonlinearity that vanishes outside the well. 
Since the double well arises along one direction only, the problem resulted in the analysis of a one-dimensional system. 
In \cite{MA93}, the spatial range of the nonlinearity was reduced to a single point, so that the interaction was described as a nonlinear point interaction also called nonlinear delta, namely a delta potential whose strength depends on the function to which it applies. 
In \cite{N93}, a short-range nonlinearity was studied in the context of open quantum systems using a more abstract mathematical framework.

Generally speaking, a nonlinear point interaction is supposed to be useful for the sake of describing the action of a nonlinear layer whose transverse dimension is much smaller than the typical wavelength of the incoming particle.

Formally, the model is realized by
the Schr\"odinger equation
\begin{equation} \label{eq:concnlsformal}
	\ii \partial_t u_t \ = \ {\mbox{``}\,} - u_t'' + \mu | u_t|^p \delta u_t {\;\mbox{''}}.
\end{equation}
Here the Dirac delta on the right-hand side acts as a multiplicative factor, working as a potential concentrated at a single point. It is often said that such a term describes the effect of an impurity characterized by a range that is much smaller than the wavelength of the particle described by the wave function $u_t$.

Despite its apparent simplicity, this model proves to be nontrivial to define in dimensions two and three, where it is necessary to apply the theory of self-adjoint extensions of symmetric operators \cite{AGH-KH88}. The result of the rigorous construction does not coincide with the ordinary notion of a delta distribution, and this is the reason for the use of quotation marks in \eqref{eq:concnlsformal}.

The rigorous analysis of the Schrödinger equation with a nonlinear point interaction was initially carried out in dimension one  \cite{AdamiTeta2001class}, three  \cite{ADFT03}, and finally in dimension two \cite{CCT18}. 
In these works well-posedness is established in appropriate functional spaces and it is demonstrated that the main feature of the models is the reduction to nonlinear integral equations with singular kernels. Further investigations provided results on the existence of blow-up solutions \cite{Adami2020blow,ADFT04, Holmer2020blow}, on the presence of standing waves with their orbital and asymptotic stability \cite{Adami2021Stability2D, ANO13, ANO16, BKKS08, KKS12}, and on supercritical scattering \cite{AFH21}.

In contrast to the case of the standard nonlinear Schr\"odinger equation (NLS) or Hartree equation, however, a rigorous derivation of \eqref{eq:concnlsformal} as the effective dynamics for a quantum system has not been previously accomplished. 
Indeed, in the only available mathematical results of this kind, i.e. \cite{CFNT14} for dimension one and \cite{CFNT17} for dimension three, equation \eqref{eq:concnlsformal} is derived as the limiting behavior of a predefined short-range nonlinearity. 
Thus the question of how the concentrated nonlinearity arises from an underlying fundamental linear model remained untouched.

Here we present the first rigorous derivation of \eqref{eq:concnlsformal} from
a many-body quantum dynamics in the special case $p=2$, i.e., we rigorously deduce a cubic equation with a pointwise nonlinearity. We limit our analysis to the one-dimensional setting, however we plan to explore the corresponding problem in dimensions two and three in the near future. Furthermore, here we treat the case of defocusing nonlinearity. This choice is due to the fact that the target equation is $L^2$-critical and for the moment we prefer to focus on the derivation of the pointwise nonlinearity avoiding to deal with issues related to the existence of blow-up solutions.

The purpose of this paper is to demonstrate that, analogously to the standard NLS and Hartree equations, \eqref{eq:concnlsformal} can be constructed as the effective dynamics for a quantum system consisting of a large number $N$ of identical bosons. This result links the equation for an $N$-body linear quantum system with the equation for a one-body nonlinear dynamics.

Let us point out that the choice of bosonic symmetry in the initial state does not prevent the model from being applied to the resonant tunneling of electrons. As explained in \cite{PJ-LC91}, one can assume that the initial state of the electrons factorizes into a longitudinal 
and a transversal component.

One can assume that the longitudinal component exhibits bosonic symmetry, thus it can be factorized in the single-electron variables. Specifically, for $\Phi_{N,0} (\mathbf{x}_1,\dots, \mathbf{x}_N)$, the initial state of the system of $N$ electrons (where the boldface denotes $\mathbf{x}_j = (x_j, y_j, z_j)$, the position variable in three dimensions of the $j$-th electron), we have
\[
\Phi_{N,0} (\mathbf{x}_1,\dots, \mathbf{x}_N) = \varphi(x_1) \dots \varphi(x_N) \, \Omega(y_1, z_1, \dots, y_N, z_N).
\]
Here, $\varphi$ represents the longitudinal component of the initial state of each electron, while $\Omega$ is the collective transversal component, which is assumed to be totally antisymmetric under variable exchange, i.e.,
\[
\Omega(y_{\pi(1)}, z_{\pi(1)}, \dots, y_{\pi(N)}, z_{\pi(N)}) = (-1)^{\rm{sgn}(\pi)} \Omega(y_1, z_1, \dots, y_N, z_N),
\]
for every permutation $\pi$ of the set $\{1, \dots, N\}$. 
This ensures that the overall initial wave function $\Phi_{N,0}$ to be antisymmetric, thereby respecting the Pauli's principle for an $N$-electron system.

This shows that the one-dimensional nature in this model is intrinsic and not just a simplification of higher dimensional model.
Throughout the paper, we focus on describing the longitudinal evolution only, under the assumption that it decouples from the transversal component.

The characteristic feature of the present model is the way bosons interact with one another. 
Indeed, instead of experiencing a usual two-body interaction, every pair of particles undergoes an interaction mediated by a third body, that is an impurity, whose position is fixed. 
In other words, in order to derive a nonlinear point interaction, we use a three-body interaction potential that acts on triplets made of a pair of bosons and
the impurity.

In terms of scaling, we consider for the potential a mean-field regime together with a short-range limit. This means that the strength of the interaction scales as the inverse of the number of particles $N$, so that the kinetic and potential energies scale in the same way as $N$ grows and simultaneously the range of the interaction shrinks to a single point. However, the short-range limit is slower than the mean-field.

Concerning the initial data,
we make use of a factorized state, so that all bosons share the same quantum state $\varphi$. Then
the $N$-body dynamics we consider is
\begin{equation} \label{eq:manybody}
	\left\{
	\begin{array}{c}
		\ii \partial_{t} \Psi_{N,t} (X_N) = - \displaystyle \sum_{j=1}^N \Delta_{x_j} \Psi_{N,t} (X_N) + \frac \mu N \sum_{1 \leq k < \ell \leq N} W_\varepsilon (c, x_k, x_\ell)\Psi_{N,t} (X_N)
		\\
		\\
		\Psi_{N,0} (X_N) \ = \ \varphi^{\otimes N} (X_N) \ = \ \varphi(x_1) \dots \varphi (x_N)
	\end{array}
	\right.
\end{equation}
where:
\begin{itemize}
	\item $x_j$ is the spatial coordinates of the $j$-th boson and $\Delta_{x_j} = \partial_{x_j}^2$;
	
	\item
	$X_N = (x_1, \dots, x_N) \in \bR^N$ is the string of the coordinates of the $N$ bosons;
	\item $\Psi_{N,t}$ is the wave function at time $t$ of the system made of $N$ identical bosons;
	\item The initial data is the $N$.th tensor power of the same function $\varphi \in H^1(\R)$, that satisfies the normalization condition $\int_\R \d x\,| \varphi (x) |^2 = 1$;
	
	\item $\mu > 0$ denotes the strength of the interaction;
	\item The short-range parameter $\varepsilon$ depends on $N$. In order for our techniques to work, we assume that
	\[
		\varepsilon^{-1} = o(\log N).
	\]
	So, to fix ideas, one can take, e.g.,
	\begin{equation}
		\label{eq:fixepsilon}
		\varepsilon : = (\log N)^{-\frac 1 2};
	\end{equation}
	\item $W_\varepsilon$ is the three-body potential that describes the interaction between couples of bosons and the impurity, located at the point $c$. One can think of it as of
	\[
		W_\varepsilon (c,x_k,x_\ell) : = w_\varepsilon (c-x_k) w_\varepsilon (c-x_\ell),
		\quad\text{with}\quad
		w_{\varepsilon}(x):=
		\varepsilon^{-1} w ( \varepsilon^{-1} x ),
	\]
	where $w$ is positive, even, and in Schwartz class.
	
	Furthermore, placing the origin of the coordinates at the position of the impurity, one gets $c=0,$ so the potential simplifies to
	\begin{equation}
		\label{eq:simplepot}
		W_\varepsilon (x_k,x_\ell) :
		= W_\varepsilon (0,x_k,x_\ell)
		= w_\varepsilon (x_k) w_\varepsilon (x_\ell) ;
	\end{equation}
	\item Denoting by $H_N$ the Hamiltonian operator that generates the dynamics of the $N$-body system in the presence of the impurity, namely
	\begin{equation}\label{eq:manybodyH}
		H_N = - \sum_{j=1}^N \Delta_{x_j} + \frac \mu N \sum_{1 \leq k < \ell \leq N} W_\varepsilon (x_k, x_\ell) ,
	\end{equation}
	one can express the system \eqref{eq:manybody} in the shorthand way
	\[
	 \Psi_{N,t} : = e^{-\mathrm{i}tH_N} \varphi^{\otimes N};
	\]
	\item Since both the initial data and the Hamiltonian are symmetric under exchange of coordinates, at any time the solution $\Psi_{N,t}$ preserves the same symmetry.
	
	\item As we choose the one-particle initial data $\varphi$ as an element of $H^1 (\R)$, we are interested in the so-called mild solution to the problem \eqref{eq:manybody}, namely the solution in the energy space $H^1 (\R^N),$ that strictly speaking solves the integral version of \eqref{eq:manybody}. 
\end{itemize}
We can now state our main result.

\begin{theorem}\label{thm:main}
	Given a function $\varphi \in H^1 (\bR)$, normalized in $L^2(\R)$, let $\Psi_{N,t}$ be the mild solution to the Cauchy problem \eqref{eq:manybody} with $\mu>0$.
	Define the one-particle reduced density matrix $\gamma_{N,t}^{(1)}$ as the integral operator whose kernel is defined by
	\begin{equation}
		\label{reduceddensity}
		\gamma_{N,t}^{(1)} (x,y)
		: = \int_{\bR^{N-1}} \d Z \,
		\overline{\Psi_{N,t} (y,Z)}
		\Psi_{N,t} (x,Z) .
	\end{equation}
	Let $\varphi_{t}$ be the mild solution to the equation
	\begin{equation}
		\label{eq:limiteq}
		\ii \partial_t \varphi_t \ = \ - \varphi_t'' + \mu | \varphi_t |^2 \delta \varphi_t
	\end{equation}
	with initial data $\varphi_{0}=\varphi$, namely, the solution to the integral equation
	\begin{equation}\label{eq:integralequationdelta}
		\varphi_t \ = \ U(t) \varphi - \ii \mu \int_0^t \d s \, U (t-s) |\varphi_s|^2 \delta \varphi_s,
	\end{equation}
	where $U(t)$ denotes the free Schr\"odinger propagator in dimension one.
	
	Then, for $0 < \varepsilon < 1$, there exist positive constants $C$ and $K$ such that the following bound holds:
	\begin{equation}\label{eq:main}
		\begin{split}
			&\Tr\left|\gamma_{N,t}^{(1)}-|\varphi_{t}\rangle\langle\varphi_{t}|\right|
			\leq
			\frac{C}{N}\,\Big(1+\frac{1}{\varepsilon^{3/2}}+\frac{1}{\varepsilon^{\frac{37}{2}}N}+\frac{1}{\varepsilon^{37}N^2}\Big) \exp\left(K (1+\varepsilon^{-1}) t \right)
			+ C \varepsilon^{1/4} \, e^{Kt}
	\end{split}\end{equation}
	where $|\varphi_{t}\rangle\langle\varphi_{t}|$ denotes the projection operator to $\varphi_{t}$ in $L^2(\bR)$.
\end{theorem}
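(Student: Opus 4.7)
The plan is to combine Pickl's method of counting for the many-body-to-Hartree step with a separate analysis of the short-range limit from the regularized Hartree equation to the nonlinear delta equation. Exploiting the factorized structure $W_\varepsilon(x_k,x_\ell)=w_\varepsilon(x_k)w_\varepsilon(x_\ell)$, I would introduce the regularized Hartree flow
\begin{equation*}
\ii\,\partial_t\varphi_t^\varepsilon = -(\varphi_t^\varepsilon)'' + \mu\,\langle w_\varepsilon,|\varphi_t^\varepsilon|^2\rangle\, w_\varepsilon\,\varphi_t^\varepsilon,\qquad \varphi_0^\varepsilon=\varphi,
\end{equation*}
obtained by averaging one of the two factors $w_\varepsilon$ against the one-particle density. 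The triangle inequality then splits
\begin{equation*}
\Tr\bigl|\gamma_{N,t}^{(1)}-|\varphi_t\rangle\langle\varphi_t|\bigr| \le \Tr\bigl|\gamma_{N,t}^{(1)}-|\varphi_t^\varepsilon\rangle\langle\varphi_t^\varepsilon|\bigr| + 2\,\|\varphi_t^\varepsilon-\varphi_t\|_{L^2(\bR)},
\end{equation*}
which I would bound by the first and second summands of \eqref{eq:main}, respectively.

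For the many-body step I would set $\hat p:=|\varphi_t^\varepsilon\rangle\langle\varphi_t^\varepsilon|$, $\hat q:=\1-\hat p$ and define the counting functional $\alpha_N(t):=1-\langle\varphi_t^\varepsilon,\gamma_{N,t}^{(1)}\varphi_t^\varepsilon\rangle$. Differentiating along the two flows and using permutation symmetry, $\dot\alpha_N$ reduces to commutator expressions in which $w_\varepsilon(x_1)w_\varepsilon(x_2)$ is sandwiched between various combinations of the projectors $\hat p$ and $\hat q$ acting on particles $1$ and $2$. The fully-$\hat p$ contribution cancels by construction against the mean-field drift; the mixed $\hat p\hat q$ and fully-$\hat q$ pieces are controlled by Cauchy-Schwarz and the factorized structure, using the scaling $\|w_\varepsilon\|_{L^p}\sim\varepsilon^{-1+1/p}$. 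The most dangerous is the $\hat q\hat q$ piece, which naively carries $\|w_\varepsilon\|_{L^\infty}^2\sim\varepsilon^{-2}$, but can be reduced to size $\varepsilon^{-1}\alpha_N$ via a shifted-weight functional in the spirit of Knowles-Pickl. Collecting the estimates yields a differential inequality of the form
\begin{equation*}
\dot\alpha_N(t)\le K(1+\varepsilon^{-1})\alpha_N(t) + N^{-1}R(\varepsilon,N),
\end{equation*}
with $R(\varepsilon,N)$ matching the polynomial in brackets in \eqref{eq:main}; Gronwall and the standard inequality $\Tr|\gamma_{N,t}^{(1)}-|\varphi_t^\varepsilon\rangle\langle\varphi_t^\varepsilon||\le 2\sqrt{2\alpha_N(t)}$ complete this step. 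The subleading powers $\varepsilon^{-37/2}$ and $\varepsilon^{-37}$ in \eqref{eq:main} originate from the higher-order commutators generated by the shift.

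For the short-range step I would subtract the Duhamel representations of $\varphi_t$ and $\varphi_t^\varepsilon$ and estimate the difference in $L^2$, relying on the quantitative approximations
\begin{equation*}
\bigl|\langle w_\varepsilon,|\varphi|^2\rangle-|\varphi(0)|^2\bigr|\lesssim \varepsilon^{1/2}\|\varphi\|_{H^1}^2,\qquad \bigl\|w_\varepsilon\varphi-\varphi(0)\delta_0\bigr\|_{H^{-1}}\lesssim\varepsilon^{1/2}\|\varphi\|_{H^1},
\end{equation*}
which follow from the $1/2$-H\"older regularity of $H^1(\bR)$ functions at the origin. Turning the $H^{-1}$ gain into an $L^2$ bound on the Duhamel integral costs half a derivative via the one-dimensional dispersive estimate $\|U(t)f\|_{L^\infty}\lesssim|t|^{-1/2}\|f\|_{L^1}$; together with the fractional-calculus machinery around the integral equation for $\varphi_t(0)$ developed in \cite{AdamiTeta2001class}, this yields the announced rate $\varepsilon^{1/4}$, with the time growth $e^{Kt}$ coming from a Gronwall loop driven by the $H^1$-norms of $\varphi_t$ and $\varphi_t^\varepsilon$.

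The main obstacle, in my view, is the uniform-in-$\varepsilon$ control of $\|\varphi_t^\varepsilon\|_{L^\infty}$ and $\|\varphi_t^\varepsilon\|_{H^1}$ on arbitrary time intervals, indispensable for both steps. The assumptions $\mu>0$ and $w_\varepsilon\ge 0$ make the energy $\|\partial_x\varphi_t^\varepsilon\|_{L^2}^2+\tfrac{\mu}{2}\langle w_\varepsilon,|\varphi_t^\varepsilon|^2\rangle^2$ coercive and conserved, so $\|\varphi_t^\varepsilon\|_{H^1}$ is bounded uniformly in $\varepsilon$ and $t$, and the Gagliardo-Nirenberg inequality in one dimension upgrades this to a uniform $L^\infty$ bound. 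Propagating all such estimates through the Pickl iteration while keeping the exponential prefactor at the benign rate $\exp(K(1+\varepsilon^{-1})t)$, compatible with the assumption $\varepsilon^{-1}=o(\log N)$, is the technical heart of the proof.
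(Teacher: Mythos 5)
Your decomposition via the intermediate concentrated Hartree flow and the triangle inequality is exactly the one used in the paper, and your treatment of the short-range step (Duhamel comparison, H\"older-in-$\varepsilon$ estimates $|u(\varepsilon y)-u(0)|\lesssim\sqrt{\varepsilon|y|}\,\|u\|_{H^1}$, uniform $H^1$ and $L^\infty$ control via energy conservation and positivity of $\mu$, Abel--Gr\"onwall iteration for the term involving $|\varphi_s(0)|^2$) matches the paper's Section 3 in all essential respects, though the paper carries out the dispersive bounds via an explicit Fourier computation of $\int\overline{U(t-s,x-\varepsilon y)}U(t-s',x-\varepsilon y')\,\dx$ rather than via the $L^1\to L^\infty$ dispersive estimate you invoke. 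Where you genuinely diverge from the paper is the many-body step: you propose Pickl's counting functional $\alpha_N(t)=1-\langle\varphi_t^\varepsilon,\gamma_{N,t}^{(1)}\varphi_t^\varepsilon\rangle$ with a Knowles--Pickl weighted modification to tame the $\hat q\hat q$ contribution, whereas the paper works entirely in second quantization, conjugating the dynamics by Weyl operators $\cW(\sqrt N u_{\varepsilon,t})$, introducing the fluctuation generator $\cL=\cL_2+\cL_3+\cL_4$, a cutoff generator $\cL_N^{(M)}$ (with $M=N\varepsilon^3$) to control moments of $\cN$, and a parity-preserving modification $\widehat\cL=\widehat\cL_2+\cL_4$ so that $\langle\Omega,\widehat\cU^*a_x\widehat\cU\,\Omega\rangle=0$, together with the $d_N$-normalized comparison between $\varphi^{\otimes N}$ and the coherent state. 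The paper's route requires a priori energy bounds of the form $\langle\Psi,(H_N+N)^k\Psi\rangle\gtrsim N^k\langle\Psi,\prod_j(1-\partial_{x_j}^2)\Psi\rangle$ (their Proposition 4.2) to feed the moment estimates; your route would instead require a careful verification that the separable structure $w_\varepsilon(x_i)w_\varepsilon(x_j)$ lets the $\hat q\hat q$ piece close at cost $\varepsilon^{-1}\alpha_N$ rather than $\varepsilon^{-2}$, and you should be explicit that the specific exponents $\varepsilon^{-37/2}$, $\varepsilon^{-37}$ in \eqref{eq:main} are artifacts of the Fock-space estimate (they arise from Lemma \ref{lem:NjU} with $M=N\varepsilon^3$ applied twice, at $j=1$ and $j=6$), so a Pickl-style derivation would deliver a qualitatively similar bound but with different, likely milder, polynomial corrections in $\varepsilon^{-1}$ and $N^{-1}$ rather than exactly these. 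Your identification of the uniform-in-$\varepsilon$ $H^1/L^\infty$ bounds as the load-bearing a priori information is correct (it is the paper's Remark \ref{rem:epsuniform}), and the exponential prefactor $\exp(K(1+\varepsilon^{-1})t)$ is indeed what both approaches produce from the $\|w_\varepsilon\|_{L^2}\sim\varepsilon^{-1/2}$ scaling entering the respective Gr\"onwall constants.
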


\vspace{1em}

The following immediate Corollary points out the convergence in trace norm:
\begin{corollary}\label{cor:tr-convergence}
	With the same assumptions as stated in Theorem \ref{thm:main}, we have
	\[
		\Tr\left|\gamma_{N,t}^{(1)}-|\varphi_{t}\rangle\langle\varphi_{t}|\right| \to 0
		\text{ as }
		N \to \infty
	\]
	when $\varepsilon \to 0$ as $N\to\infty$ satisfies that
	$\varepsilon^{-1} = o(\log N)$
	and $0\leq t = o(\log \varepsilon)$.
\end{corollary}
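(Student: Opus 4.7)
The corollary is a direct consequence of the quantitative bound \eqref{eq:main}: the plan is simply to verify that, in the joint limit specified in the statement, each of the two summands on the right-hand side of \eqref{eq:main} tends to zero.

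For the second summand $C\varepsilon^{1/4}e^{Kt}$, I would parametrize $t = \eta\,|\log\varepsilon|$ with $\eta = \eta(\varepsilon)\to 0$ (as dictated by the hypothesis $t = o(\log\varepsilon)$). Then $e^{Kt} = \varepsilon^{-K\eta}$, and the summand is bounded by $C\,\varepsilon^{1/4-K\eta}$, which tends to zero as $\varepsilon\to 0$ as soon as $\eta < 1/(4K)$.

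For the first summand, the key step is to show that $\exp(K(1+\varepsilon^{-1})t)$ grows strictly subpolynomially in $N$. I would split this factor as $e^{Kt}\cdot e^{K\varepsilon^{-1}t}$: the first piece is again $\varepsilon^{-K\eta} = N^{o(1)}$ in view of $\varepsilon^{-1} = o(\log N)$. For the second piece, substituting $\varepsilon^{-1} = \delta(N)\log N$ with $\delta(N)\to 0$ yields $e^{K\varepsilon^{-1}t}\leq N^{K\delta(N)\, t}$; since $t = o(|\log\varepsilon|) = O(\log\log N)$ in the joint regime, the exponent $K\delta(N)\, t\to 0$, so this factor is also $N^{o(1)}$. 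The polynomial factors $\varepsilon^{-3/2}, \varepsilon^{-37/2}, \varepsilon^{-37}$ appearing inside the parentheses of \eqref{eq:main} are each dominated by a positive power of $\log N$ and are therefore $N^{o(1)}$ as well. Putting these estimates together, each of the four pieces of the first summand takes the form $C\,N^{-k + o(1)}$ with $k\in\{1,1,2,3\}$, and hence vanishes as $N\to\infty$.

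There is no substantive obstacle: the proof is a routine asymptotic bookkeeping applied to the explicit bound of Theorem \ref{thm:main}. The only conceptual point to verify is that the two rate conditions $\varepsilon^{-1} = o(\log N)$ and $t = o(|\log\varepsilon|)$ together entail $\varepsilon^{-1}t = o(\log N)$, which is precisely what allows the most dangerous exponential $e^{K\varepsilon^{-1}t}$ to be absorbed by the $1/N$ prefactor; this compatibility is built into the joint limit procedure of the statement and can be directly checked on the benchmark choice \eqref{eq:fixepsilon}.
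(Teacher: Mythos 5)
Your strategy of reading off the convergence directly from the explicit bound \eqref{eq:main} is the natural and correct one, and most of the bookkeeping is sound: the polylogarithmic factors $\varepsilon^{-3/2},\varepsilon^{-37/2},\varepsilon^{-37}$ are indeed $N^{o(1)}$, and $C\varepsilon^{1/4}e^{Kt}\to 0$ under $t=o(|\log\varepsilon|)$. The gap is exactly in the step you flag and then wave away: the assertion that $K\delta(N)\,t\to 0$, i.e.\ $\varepsilon^{-1}t=o(\log N)$. This does \emph{not} follow from the two hypotheses $\varepsilon^{-1}=o(\log N)$ and $t=o(|\log\varepsilon|)$. You are multiplying a sequence tending to zero, $\delta(N)=\varepsilon^{-1}/\log N$, by a sequence $t$ which is allowed to tend to infinity, and the product need not vanish. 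Concretely, take $\varepsilon^{-1}=(\log N)/\sqrt{\log\log N}$, so that $\varepsilon^{-1}=o(\log N)$ and $|\log\varepsilon|\sim\log\log N$, and take $t=\sqrt{\log\log N}=o(|\log\varepsilon|)$. Then $\varepsilon^{-1}t=\log N$, hence $e^{K\varepsilon^{-1}t}=N^{K}$, and the first summand on the right-hand side of \eqref{eq:main} is of order $N^{K-1+o(1)}$, which diverges whenever the constant $K$ of Theorem~\ref{thm:main} exceeds~$1$ (and nothing in the theorem gives control on $K$).

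So the "compatibility" you invoke is not built into the joint limit procedure, and verifying it only for the benchmark \eqref{eq:fixepsilon} — for which it does hold, since $\varepsilon^{-1}|\log\varepsilon|=\tfrac12\sqrt{\log N}\,\log\log N=o(\log N)$ — does not establish the corollary for the full range of rates admitted by the hypotheses. To close the gap you need the genuinely stronger joint condition $\varepsilon^{-1}t=o(\log N)$; equivalently, if one wants the conclusion to hold for every $t=o(|\log\varepsilon|)$, the rate constraint on $\varepsilon$ must be strengthened from $\varepsilon^{-1}=o(\log N)$ to $\varepsilon^{-1}|\log\varepsilon|=O(\log N)$. With either added hypothesis, the remainder of your computation goes through without change.
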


\vspace{2em}
\noindent\textit{Let us point out the following remarks:}
\begin{enumerate}[1)]
	
	\item As usual the notation $X = o(Y)$ is used to express that $X$ grows at a much smaller rate than $Y$ as $N\to \infty$.
	This is crucial for obtaining the convergence of the first term in \eqref{eq:main}.
	
	\item 
	The existence and uniqueness of the mild solution to \eqref{eq:integralequationdelta} have been proven in \cite{AdamiTeta2001class}.
	
	\item 
	For the second term in \eqref{eq:main}, the exponent $1/4$ of $\varepsilon$ can be chosen to be any $\eta\in (0,1/2)$ so that
	\begin{equation}\label{eq:rmk}
		\begin{split}
			&\Tr\left|\gamma_{N,t}^{(1)}-|\varphi_{t}\rangle\langle\varphi_{t}|\right|
			\leq
			\frac{C}{N}\,\Big(1+\frac{1}{\varepsilon^{3/2}}+\frac{1}{\varepsilon^{\frac{37}{2}}N}+\frac{1}{\varepsilon^{37}N^2}\Big) \exp\left(K (1+\varepsilon^{-1}) t \right)
			+ C \varepsilon^\eta \, e^{Kt}.
	\end{split}\end{equation}
	
	\item
	To describe the three-body interaction that takes place between the bosons and the impurity, one may employ an interaction potential that involves direct interaction between the bosons, like e.g.
	\begin{align*}
		&W_{\varepsilon}(c,x_{k},x_{\ell})\\
		&\quad:=
		q_c^2 q_k q_\ell\,
		w_{\varepsilon}(c-x_{k})w_{\varepsilon}(c-x_{\ell})
		+
		q_c q_k^2 q_\ell\,
		w_{\varepsilon}(c-x_{k})v_{\varepsilon}(x_{k}-x_{\ell})
		+
		q_c q_k q_\ell^2\,
		w_{\varepsilon}(c-x_{\ell})v_{\varepsilon}(x_{k}-x_{\ell})
	\end{align*}
	where $q_c, q_k, q_\ell$ denote the charge with which the impurity located at $c$ and the particles at $x_k$ and $x_\ell$ interact.
	Moreover, we considered two different potentials: $w_{\varepsilon}$ for the interaction between the impurity and the bosons, and  $v_{\varepsilon}$ the interaction among the bosons, to distinguish between  different physical cases.
	
	If we assume that the interaction is electromagnetic and the charge of the impurity considerably surpasses that of the bosons, e.g. $q_c = N$ and $q_k,q_\ell=N^{-1}$, we can disregard the interaction terms with $v_{\varepsilon}$. This results in the interaction potential described above.
	Furthermore, we expect that for the case where  $q_c$ and all $q_k$ are of same order, one can follow the strategy given in this paper to derive \eqref{eq:limiteq} with a coefficient slightly different from $\mu$.
	
	\item One can add to the $N$-body model a standard mean-field term as well as a short-scale potential, and obtain, as a result of the scaling limit, equation \eqref{eq:limiteq} with additional Hartree and NLS term, namely $(V * |\varphi_t^2)\varphi_t$ and $|\varphi_t|^2\varphi_t$. However, our aim is to rigorously derive from a microscopic dynamics the nonlinear point interaction, which is the novelty of our work, therefore we omit such terms.

	\item
	Analogolously, in order to focus on the mathematically rigorous derivation of the nonlinear delta term, we restricted to an interaction between the impurity and couples of particles, neglecting a possible action between the impurity and single particles.
	
	Including such a term in the dynamics one obtains the $N$-body Hamiltonian
	\[
		\widetilde H_N := 
		- \sum_{j=1}^N \Delta_{x_j}
		+ \sum_{j=1}^N \upsilon_\varepsilon (c-x_j)
		+ \frac \mu N \sum_{1 \leq k < \ell \leq N} W_\varepsilon (c, x_k, x_\ell),
	\]
	where the potential $\upsilon_\varepsilon$ describes the interaction of the impurity with every
	individual particle and tends to a Dirac's delta 
	in the limit $\varepsilon \to 0$. It is well-known
	from the theory of the approximation of delta interactions by regular potentials \cite[Section 1.3.2]{AGH-KH88}, that the one-particle Hamiltonian $- \Delta + \upsilon_\varepsilon$ converges in the strong resolvent sense to the delta-interaction Hamiltonian $- \Delta + \alpha \delta $, where
	$\alpha = \int \upsilon. $
	Including such interaction in the model 
	reduce then to adapting a known argument, so we decided not to consider this issue.
	
	\item 
	We model the fact that the {interaction range is small} by letting $\varepsilon$ go to $0$. A similar procedure was often followed in the literature on two-body interaction models in 2D or 3D, e.g. in \cite{BCS2016AHP,BOS2015,nam2020derivation,Pickl2011LMP}, where the gap between the mean-field and Gross-Pitaevskii regimes had to be bridged by considering an intermediate regime. More specifically, for the Hamiltonian
	\[
		H_{N\!,\varepsilon} := - \sum_{j=1}^N \Delta_{x_j} + \frac{\varepsilon^{-d}}{N} \sum_{1 \leq k \leq \ell \leq N} V(\varepsilon^{-1} (x_k - x_\ell)),
	\]
	the most common regimes are defined as follows:
	\begin{itemize}
		\item \textbf{Mean-field regime}: When $\varepsilon = 1$ for any dimension $d = 1, 2, 3$.
		\item \textbf{Gross-Pitaevskii regime}:
		\begin{itemize}
			\item For 2D: $d = 2$, $\varepsilon = e^{-N}$.
			\item For 3D: $d = 3$, $\varepsilon = N^{-1}$.
		\end{itemize}
		\item \textbf{Intermediate regime}:
		\begin{itemize}
			\item For 2D: $d = 2$, $\varepsilon = N^{-\alpha}$ with $\alpha > 0$.
			\item For 3D: $d = 3$, $\varepsilon = N^{-\alpha}$ with $0 < \alpha < 1$.
		\end{itemize}
	\end{itemize}
	
	We adopt a similar approach, replacing $N^{-\alpha}$ with $\varepsilon$. Then, since there is no agreed-upon Gross-Pitaevskii scaling in 1D, we prove that when $\varepsilon^{-1} \sim \sqrt{\log N}$, we achieve the limiting equation as shown in Corollary \ref{cor:tr-convergence}.
	
	\item The hypothesis of factorization is often referred to as condensation, as it describes the collective state of a Bose-Einstein condensate, one of the most prominent research areas where factorization naturally applies. From a theoretical perspective, the concept of exactly factorized states has been explored for several decades (see, e.g., \cite{S80} and the subsequent research on mean-field limits). Note that the factorization holds exactly only in the infinite-particle limit, although it can approximate well for a sufficiently large number of particles.
	
	In one-dimensional systems, where condensation is a more delicate phenomenon, the existence of factorized states was proven in \cite{LSY04}. The proof is provided for fully factorized initial data, and it is expected that this result could be extended to other initial data that approximate fully factorized states.
	
\end{enumerate}

\vspace{1em}

Let us show heuristically how equation
\eqref{eq:limiteq} is related to \eqref{eq:manybody}. Putting $c = 0$ as in Theorem \ref{thm:main},
the $N$-body energy for the factorized state
$\varphi_{t}^{\otimes N}$ reads
\begin{align*}
	\mathcal{E}(\varphi_{t}^{\otimes N})&= \frac 1 2 \sum_{j=1}^{N}\int\mathrm{d}x_{j}\,|\varphi'_{t}(x_{j})|^{2}
	+\frac{\mu}{2N}\sum_{1\leq k<\ell\leq N}\int\mathrm{d}x_{k}\mathrm{d}x_{\ell}\,W_{\varepsilon}(x_{k},x_{\ell})|\varphi_{t}(x_{k})|^{2}|\varphi_{t}(x_{\ell})|^{2}.
\end{align*}
Now we equally distribute the energy among the $N$ particles. To this aim, we rewrite the interaction term in the energy as follows:
\begin{align*}
	\mathcal{E}(\varphi_{t}^{\otimes N})&=\frac 1 2 	\sum_{j=1}^{N}\int\mathrm{d}x_{j}\,|\varphi'_{t}(x_{j})|^{2}
	+\frac{\mu}{4N}\sum_{
		\substack{
			1\leq k , \ell\leq N \\
			k \neq \ell}}
	\int\mathrm{d}x_{k}\mathrm{d}x_{\ell}\,W_{\varepsilon}(x_{k},x_{\ell})|\varphi_{t}(x_{k})|^{2}|\varphi_{t}(x_{\ell})|^{2}.
\end{align*}
Exploiting the symmetry of the system, we can focus on the contribution of the energy of the first particle, namely
\begin{align*}
	& \frac 1 2 \int\mathrm{d}x_{1}\,|\varphi'_{t}(x_{1})|^{2}+\frac{\mu}{4N}\sum_{\ell=2}^{N}\int\mathrm{d}x_{1}\mathrm{d}x_{\ell}\,W_{\varepsilon}(x_{1},x_{\ell})|\varphi_{t}(x_{1})|^{2}|\varphi_{t}(x_{\ell})|^{2}\\
	&=\frac 1 2 \int\mathrm{d}x_{1}\,|\varphi'_{t}(x_{1})|^{2}+\frac{\mu}{4}\Big(\int\mathrm{d}x_{1}\,w_{\varepsilon}(x_{1})|\varphi_{t}(x_{1})|^{2}\Big)\Big(\frac{1}{N}\sum_{\ell=2}^{N}\int\mathrm{d}x_{\ell}\,w_{\varepsilon}(x_{\ell})|\varphi_{t}(x_{\ell})|^{2}\Big)\\
	&= \frac 1 2\int\mathrm{d}x_{1}\,|\varphi'_{t}(x_{1})|^{2}+\frac{\mu}{4}\,\frac{N-1}{N} \Big(\int\mathrm{d}x_{1}\,w_{\varepsilon}(x_{1})|\varphi_{t}(x_{1})|^{2}\Big)^{2}.
\end{align*}
Now, taking the limit $N\to \infty$, and recalling that $\varepsilon \to 0$ as $N\to \infty$, the previous expression converges to:
\[
	\frac{1}{2}\int\mathrm{d}x_{1}\,\big|\varphi'_{t}(x_{1})\big|^{2}+\frac{\mu}{4}\big|\varphi_{t}(0)\big|^{4},
\]
which coincides with the one-particle energy of the solution of \eqref{eq:limiteq}.

However, as well understood in this kind of problems (see \cite{Pickl2011LMP}), the limit to be performed is
conveniently formulated in terms of the \emph{reduced density matrix}, because proving $L^2$-convergence for the wave function, i.e.
$\Psi_{N,t}\approx \varphi_t^{\otimes N}$ in the  $L^2$-norm, is out of reach: if we consider an $N$-body wave function $\Psi_N=\varphi^{\otimes (N-1)} \vee \varphi^\perp$ with $\varphi^\perp$ orthogonal to $\varphi$, we find indeed that the distance between $\Psi_N$ and $\varphi^{\otimes (N-1)}\vee \varphi^\perp$ equals $\sqrt{2}$ because the two $N$-particle states are orthogonal. Therefore, an uncontrolled behaviour of one sole particle could result in the maximal distance between two quantum states, even though all other particles reside  in the same state $\varphi$.
On the other hand, if one proceeds like in Theorem \ref{thm:main} and considers the trace norm distance between the one-particle reduced density matrix associated to $\varphi^{\otimes (N-1)}\vee \varphi^\perp$ with the one associated to $\varphi^{\otimes N}$, then it can be checked that the trace norm distance is bounded by $C N^{-1}$. Moreover, from the physical point of view
the trace class norm proves to be meaningful as it provides convergence of the expectation values of bounded observables.

\medskip

Our strategy consists in separating the mean-field and the short-range limit. 
Specifically, the mean-field limit $N \to \infty$ is achieved by adapting to our problem the techniques of  \cite{Xchen2012second,Lee2020rate}. Eventually, one obtains the following one-body equation
\begin{equation} \label{eq:preintermediateintro}
	\ii \partial_t u_t = - u_t'' + w_\varepsilon (w_\varepsilon * |u_t|^2) (0) u_t,
\end{equation}
to which we will refer as to   the {\em concentrated Hartree equation}, since it is a Hartree-type equation whose nonlinear term is concentrated by the presence of the factor $w_\varepsilon$.  Exactly because of this localization of the interaction, \eqref{eq:preintermediateintro} cannot be straightforwardly derived using previous results. 
Furthermore, notice that in \eqref{eq:preintermediateintro} the convolution term is evaluated at the origin, where the impurity is located.  
Introducing the bracket notation $\langle \cdot, \cdot \rangle$ for the hermitian product in $L^2 (\bR)$, we can rewrite equation \eqref{eq:preintermediateintro} as
\begin{equation}
	\label{eq:intermediateintro}
	\ii \partial_t u_t = - u_t'' + w_\varepsilon \langle w_\varepsilon , |u_t|^2 \rangle u_t.
\end{equation}
We stress that, notwithstanding the localization at zero of the convolution term, equation \eqref{eq:intermediateintro} remains nonlocal, since the value of $u_t$ at every point contributes to the hermitian product.
At this stage, equation \eqref{eq:intermediateintro} can be considered as an intermediate problem between \eqref{eq:manybody} and \eqref{eq:limiteq}.

Once obtained \eqref{eq:intermediateintro}, as a second step we perform the limit $\varepsilon \to 0$. Such a step relies on comparing two different one-particle evolutions, and this is carried out by employing a method inspired by \cite{CFNT14}. Again, the limit we perform is new and techniques already known have been suitably adapted.

Consistently with the two steps just described, we estimate the error by splitting the limit in two parts, through the triangular inequality
\begin{equation}\label{eq:triangular}
	\Tr\Big| \gamma_{N,t}^{(1)}- | \varphi_{t} \rangle \langle \varphi_{t}| \Big| 
	\ \leq \
	\Tr\Big| \gamma_{N,t}^{(1)}- | u_{\varepsilon,t} \rangle \langle u_{\varepsilon,t}| \Big|
	+
	\Tr\Big| | u_{\varepsilon,t} \rangle \langle u_{\varepsilon,t}| - |\varphi_{t}\rangle\langle\varphi_{t}|\Big|
\end{equation}
where $u_{\varepsilon,t}$ is the mild solution to the intermediate equation \eqref{eq:intermediateintro} with initial data $\varphi$, and by estimating the two terms in the r.h.s. separately, following a strategy similar to that in \cite{BCS2016AHP,BOS2015,nam2020derivation,Pickl2011LMP}.

We warn the reader that, even though conceptually the limit in $N$ precedes that in $\varepsilon$, in order to make the paper more readable we proceed in the inverse order.

Concerning the techniques, we follow the recent achievements on mean-field and Gross-Pitaevskii limits. In particular, we make use of the breakthrough results in
\cite{RodnianskiSchlein2009} where, inspired by \cite{GV79-1,GV79-2,Hepp}, the authors employed the coherent state approach in Fock space to derive the mean-field limit of the dynamics of many-body quantum systems with two-body interactions.
The gain with respect to previous derivations (e.g. \cite{S80}) lies in the estimate of the width of the fluctuations around the limit. Later, in \cite{KnowlesPickl2010meanfield} another strategy to derive mean-field  limit with an estimate of the error was developed.
The optimal convergence rate for Gross-Pitaevskii limits was finally obtained in \cite{BS19}, while in \cite{grillakis2010second,grillakis2011second} the authors employed the Bogoliubov transformation to derive second-order corrections to mean-field evolution of weakly interacting bosons.
Results on approximation in norm were obtained in \cite{BNNS19,NN17-1,NN17-2}.

The large $N$-limit of the dynamics of $N$-body quantum systems with three-body interaction was rigorously studied in \cite{ChenPavlovic2011}  where the Gross–Pitaevskii limit of a Bose gas with three-body interactions was achieved using the method of the BBGKY hierarchy, obtaining  the quintic nonlinear Schrödinger equation (qNLS).
Later, in \cite{XChenHolmer2019derivation,Lee2020rate} the mean-field limit of three-body interacting Bose gas was obtained, and in \cite{nam2020derivation}, Gross–Pitaevskii limit of three-body interactions was obtained using Fock space methode.
Complementarily, in the recent paper \cite{NamRicaudTriay2022}  the ground state energy of a low-density Bose gas with three-body interactions was investigated.

From the technical point of view it is worth remarking that we employ only the Weyl transformation rather than Bogoliubov transformations in the language of second quantization. 
We believe that, together with the two-step strategy illustrated before, this choice streamlines the discussion, reduces computational costs, and necessitates fewer concepts. While this approach sacrifices the speed of the scaling limit, in our opinion it enhances clarity and simplicity of the derivation. 

The paper is organized as follows: Section \ref{sec:1body} contains results on the well-posedness and on the conservation laws of the dynamics generated by the the target equation \eqref{eq:limiteq} and that generated by the concentrated Hartree equation \eqref{eq:intermediateintro}; in fact, while for the former we just quote results from \cite{AdamiTeta2001class}, for the latter the results are new and their proof is given in detail. We notice that, owing to the one-dimensional character of the model, the theory can be easily developed in the energy space $H^1 (\bR)$.

Section \ref{sec:1bodyConvergence} is devoted to the proof of the convergence from \eqref{eq:intermediateintro} to \eqref{eq:limiteq} as $\varepsilon$ vanishes, that encodes the transition from nonlocal to local nonlinearity at the one-body level.

In Section \ref{sec:derivation} we introduce the second quantized formalism and provide several notions and results that prove useful  for microscopic derivation. 
Since second quantization is sometimes considered as complicated and cumbersome, we privilege self-containedness and assume sometimes a pedagogical attitude.
In particular,  we establish some \emph{a priori} estimates, define Fock spaces and recall some of their basic properties, delve into unitary operators and their generators, and introduce the  fluctuation dynamics. Part of the content of this section has the character of a quick review.

Lastly, Section \ref{sec:pfmainthm} presents the proof of Theorem \ref{thm:main}.

\vspace{0.5em}
Along the paper we denote by $\langle \cdot, \cdot \rangle$ either the standard Hermitian product in $L^2(\mathbb{R})$ or  the inner product in Fock space. 
Consistently, $\|\cdot\|$ may represent either the standard $L^2(\mathbb{R})$-norm or the norm in Fock space. The context will avoid every possible confusion or ambiguity caused by  this abuse of notation. As in the statement of Theorem \ref{thm:main}, we use the Dirac ket-bra notation $| f \rangle \langle f | $ to denote the orthogonal projection in $L^2 (\R)$ on the linear span of the fucntion $f$.

We use the symbol $U(t,x)$ to denote the integral kernel of the unitary group generated by the free Schr\"odinger equation in one dimension, i.e.,
\begin{equation}
	\label{freeschrod}
	U (t,x) \ = \ \frac 
	{e^{\mathrm{i} \frac{x^2}{4t}}}
	{\sqrt{4 \pi \mathrm{i} t}}, 
\end{equation}
so that for every $f \in L^2 (\R)$ it holds
\[
	(U(t) f) (x) \ = \ \int_\R \mathrm{d}y\, U (t, x-y) \, f (y) \ = \ \frac{1}{\sqrt{4 \pi \mathrm{i} t}} \int_\R \mathrm{d}y\,  e^{\ii \frac{(x-y)^2} {4t}}\, f (y).
\]
It is well-known that for every $t \in \R$, $U(t)$ is a unitary operator in all Sobolev spaces $H^s (\R).$

We define the Fourier transform $\widehat f$ of the function $f$ as follows:
\[
	\widehat f (k) \ : = \ \frac{1}{\sqrt{2 \pi}} \int_\R f(x) \, e^{-\ii k x} \, \d x, 
\]
so that it is a unitary operator in $L^2 (\R).$

\section*{Acknowledgements}
This project started from Oberwolfach Mini-workshop ``Zero-Range and Point-Like Singular Perturbations: For a Spillover to Analysis, PDE and Differential Geometry''.
We would like to thank the referees for careful review and helpful suggestions.
R.A. thanks the project of the Italian Ministery of Research PRIN 20225ATSTP - Nonlinear dispersive equations in presence of singularities.
R.A., in leave of absence from Politecnico di Torino, has been hosted at Laboratoire Jacques -Louis Lions, Université La Sorbonne, Paris.
J.L. was supported by the European Research Council (ERC CoG RAMBAS, Project Nr. 101044249), 
the Swiss National Science Foundation through the NCCR SwissMAP,
the SNSF Eccellenza project PCEFP2\_181153, 
by the Swiss State Secretariat for Research and Innovation through the project P.530.1016 (AEQUA), and
Basic Science Research Program through the National Research Foundation of Korea(NRF) funded by the Ministry of Education (RS-2024-00411072).

\section{Properties of one-body dynamics}\label{sec:1body}

In this section, we give results on the two one-body nonlinear dynamics \eqref{eq:limiteq} and \eqref{eq:intermediateintro}. 
As anticipated in Section \ref{sec:intro} we are interested in finite energy states, therefore we aim at studying mild solution rather than strong solutions. 
Mild solutions are solutions to the integral version of the equations obtained through the Duhamel's formula, namely, for \eqref{eq:limiteq} one get \eqref{eq:integralequationdelta}, while for \eqref{eq:intermediateintro} one has
\begin{equation} \label{eq:duhamel}
	u_{\eps,t} \
	= \ U (t) \varphi - \ii \mu \int_0^t \ds \, U (t-s) w_\varepsilon \langle w_\varepsilon,
	| u_{\eps,s} |^2 \rangle u_{\eps,s}. 
\end{equation}
For the definition of the first model \eqref{eq:limiteq} as well as for the proof of global well-posedness in $H^1 (\bR)$ and of conservation laws of $L^2$-norm and energy, we refer to \cite{AdamiTeta2001class},
from which we borrow the following results:

\begin{lemma}[Global well-posedness and Conservation 
	Laws for the NLS with pointwise nonlinearity]
	Consider the Cauchy problem given by equation \eqref{eq:limiteq} with $\mu > 0$ and the initial data $\varphi \in H^1(\bR)$. 
	Then 
	\begin{enumerate}
		\item (Global well-posedness \textnormal{\cite[Theorem 14]{AdamiTeta2001class})} There exists a unique mild solution $\varphi_t \in C^0 (\R, H^1(\R))$.
		
		\item (Conservation of the $L^2$-norm \textnormal{\cite[Theorem 7]{AdamiTeta2001class})} The solution $\varphi_t$ satisfies
		\[
			\|\varphi_t\|_{L^2(\bR)}=\|\varphi\|_{L^2(\bR)}.
		\]
		
		\item (Conservation of the energy \textnormal{\cite[Theorem 13]{AdamiTeta2001class})} Defined the functional
		\[
			E(\varphi_t):=\frac{1}{2}\|\varphi'_t\|_{L^2(\bR)}^2 + \frac{\mu}{4} |\varphi_t(0)|^4,
		\]
		there holds
		\[
			E (\varphi_t) \ = \ E (\varphi).
		\]
	\end{enumerate}
\end{lemma}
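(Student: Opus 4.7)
The lemma packages three statements from \cite{AdamiTeta2001class} that the authors explicitly borrow, so my plan would be to reconstruct the proof following the strategy standard for one-dimensional pointwise nonlinearities: reduce the PDE to a Volterra-type integral equation for the boundary trace $q(t):=\varphi_t(0)$, solve it by a contraction argument, and then establish the conservation laws at the level of the regularized problem and pass to the limit. The key observation enabling this reduction is that, in dimension one, $H^1(\bR)\hookrightarrow C_b(\bR)$, so the evaluation $\varphi_t(0)$ is continuous in $\varphi_t$, and the right-hand side of \eqref{eq:limiteq} makes sense as the distribution $\mu |q(t)|^2 q(t)\,\delta_0$. Applying Duhamel's formula and using the explicit kernel \eqref{freeschrod} of $U(t)$, the integral equation \eqref{eq:integralequationdelta} becomes
\begin{equation*}
	\varphi_t(x) \ = \ (U(t)\varphi)(x) \ - \ \ii \mu \int_0^t U(t-s,x)\, |q(s)|^2 q(s)\, \ds,
\end{equation*}
and, evaluating at $x=0$, a closed Abel-type equation
\begin{equation*}
	q(t) \ = \ (U(t)\varphi)(0) \ - \ \frac{\ii \mu}{\sqrt{4\pi\ii}} \int_0^t \frac{|q(s)|^2 q(s)}{\sqrt{t-s}}\, \ds
\end{equation*}
for the single unknown $q\in C([0,T])$.

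The first step is to prove local well-posedness for this scalar Volterra equation. The kernel $(t-s)^{-1/2}$ is integrable, so for $\varphi\in H^1$, using that $(U(\cdot)\varphi)(0)$ is Hölder-$1/4$ in time, a standard contraction-mapping argument in a closed ball of $C([0,T_0];\bC)$ with $T_0$ small gives a unique continuous $q$. Inserting this $q$ back into the first Duhamel formula then recovers $\varphi_t\in C^0([0,T_0];H^1(\bR))$, the $H^1$-regularity following from a direct estimate of the spatial derivative of the source term, together with the $H^1$-stability of $U(t)$.

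The second step is to establish the two conservation laws on this local solution and use them to extend the solution to all times. For the $L^2$-norm, one multiplies \eqref{eq:limiteq} by $\overline{\varphi_t}$, takes the imaginary part, and integrates; the kinetic term vanishes by integration by parts, while the nonlinear term contributes $\mu\,\Im\big(|\varphi_t(0)|^4\big)=0$. For the energy $E(\varphi_t)=\tfrac12\|\varphi_t'\|_{L^2}^2+\tfrac{\mu}{4}|q(t)|^4$, one differentiates formally: the kinetic piece gives $\Re\langle \varphi_t',\partial_t\varphi_t'\rangle=-\Re\langle \varphi_t'',\partial_t\varphi_t\rangle$, while the boundary piece yields $\mu |q(t)|^2\,\Re\big(\overline{q(t)}\,\partial_t q(t)\big)$; using the equation these two contributions cancel. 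The delicate point here is that these manipulations are illegal for rough $H^1$ data, so one has to perform them on a regularized problem (e.g. for $\varphi\in H^2$ data and a smoothed nonlinearity like that of the intermediate equation \eqref{eq:intermediateintro}) and then pass to the limit via the continuity of the flow map in $H^1$. Once the conservation laws are in hand, $\|\varphi_t\|_{L^2}$ and $|q(t)|$ are a priori bounded uniformly in $t$, which feeds back into the local existence estimate and, by the usual continuation argument, gives global well-posedness.

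The main obstacle in this program is controlling the Abel-type nonlinearity: the $t^{-1/2}$ kernel is borderline for fixed-point arguments and, more importantly, the energy identity is only formal for $H^1$ data because $\varphi_t''$ contains a Dirac mass at the origin. Rigorously justifying the cancellation between the kinetic and the singular potential contribution — either via a careful regularization scheme or by working directly with the natural quadratic form associated with a self-adjoint extension encoding the $\delta$-boundary condition — is the technical heart of \cite{AdamiTeta2001class} and the step that would require the most care in a self-contained write-up.
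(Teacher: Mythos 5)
The paper gives no proof of this lemma; it imports all three statements verbatim from \cite{AdamiTeta2001class} with explicit theorem references and treats them as black boxes. Your sketch faithfully reconstructs the boundary-trace/Volterra-equation strategy used in that reference, and it parallels the paper's own self-contained treatment of the analogous statement for the concentrated Hartree equation in Section~\ref{subsec:conchartree} (local contraction, conservation laws via regularization, continuation). One point worth emphasizing in any write-up: for the delta nonlinearity, the solution $\varphi_t$ is never in $H^2(\bR)$ even for smooth data --- its second distributional derivative carries a point mass at the origin --- so smoothing only the initial data cannot justify the energy identity; the delta itself must be regularized (e.g.\ replacing $\delta$ by $w_\varepsilon$ as in the intermediate equation \eqref{eq:intermediateintro}), which you do note, but it is precisely where the technical core of \cite{AdamiTeta2001class} lies rather than an optional alternative.
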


In Section \ref{subsec:conchartree} we provide detailed proofs of well-posedness and conservation laws for the concentrated Hartree equation \eqref{eq:duhamel}.


\subsection{Concentrated Hartree equation in the energy domain}
\label{subsec:conchartree}

In this section we prove global existence and uniqueness of equation 
\eqref{eq:duhamel}  with the initial one-particle state $\varphi$ in $H^1 (\bR)$. We follow the usual path starting with local well-posedness, that we prove on $H^\sigma$ with $\sigma > 1/2$, then showing the conservation laws of $L^2$-norm end energy, and finally using them to prove global well-posedness. In order to prove the conservation of the energy we need to consider at first  initial data $\varphi \in H^2 (\bR)$, then through an approximation argument we obtain the conservation of the energy in $H^1 (\bR)$, from which we get uniform estimates  in $t$ which are used to show global well-posedness. 

\begin{lemma}[Local well-posedness] \label{lemma:localex}
	Let $\sigma > \frac{1}{2}$ and $\varphi \in H^\sigma(\mathbb{R})$. Moreover, let $\varepsilon > 0$ be fixed. Then the integral equation \eqref{eq:duhamel}  is locally well-posed, i.e. there exists $T>0$ such that a unique solution $u_\eps \in C^0 ([0,T), H^\sigma(\mathbb{R}))$ exists for \eqref{eq:duhamel}.
\end{lemma}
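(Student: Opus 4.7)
\medskip

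\noindent\textbf{Proof proposal.} The plan is a standard contraction-mapping argument on the map
\[
	\Phi(u)_t := U(t)\varphi - \ii \mu \int_0^t \ds\, U(t-s)\, w_\varepsilon \langle w_\varepsilon, |u_s|^2\rangle\, u_s,
\]
viewed as acting on the Banach space $X_T := C^0([0,T], H^\sigma(\R))$ equipped with the sup-in-time norm $\|u\|_{X_T}:=\sup_{t\in[0,T]}\|u_t\|_{H^\sigma}$. The two structural facts I would rely on are: (i) $U(t)$ is unitary on every $H^\sigma(\R)$, so $\|U(t-s)f\|_{H^\sigma}=\|f\|_{H^\sigma}$, allowing me to pull the Schr\"odinger propagator out of every norm estimate; and (ii) for $\sigma>1/2$, $H^\sigma(\R)$ is a Banach algebra and embeds continuously in $L^\infty(\R)$. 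Both of these are sharp for the present range of $\sigma$ and are exactly what makes the restriction $\sigma>1/2$ natural here.

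\medskip

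First, I would establish a bilinear-type bound on the nonlinearity $F(u):=w_\varepsilon \langle w_\varepsilon,|u|^2\rangle\, u$. The scalar factor is easy: by Cauchy--Schwarz (or H\"older with $L^1\cdot L^\infty$) and Sobolev embedding,
\[
	|\langle w_\varepsilon, |u|^2\rangle|\ \le\ \|w_\varepsilon\|_{L^1}\,\|u\|_{L^\infty}^2
	\ \le\ C\,\|w\|_{L^1}\,\|u\|_{H^\sigma}^2.
\]
Since $w$ is Schwartz, $w_\varepsilon\in H^\sigma$ with a norm that depends on $\varepsilon$ (harmlessly, since $\varepsilon$ is fixed in the statement). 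Using the algebra property of $H^\sigma$, I obtain
\[
	\|F(u)\|_{H^\sigma}\ \le\ C\,\|w_\varepsilon\|_{H^\sigma}\,\|w\|_{L^1}\,\|u\|_{H^\sigma}^3,
\]
and by writing $F(u)-F(v)=w_\varepsilon\langle w_\varepsilon, |u|^2-|v|^2\rangle u + w_\varepsilon\langle w_\varepsilon,|v|^2\rangle(u-v)$ and using $|u|^2-|v|^2=(\bar u+\bar v)(u-v)+v(\overline{u-v})$, I get the local Lipschitz estimate
\[
	\|F(u)-F(v)\|_{H^\sigma}\ \le\ C(\varepsilon)\,\bigl(\|u\|_{H^\sigma}+\|v\|_{H^\sigma}\bigr)^{2}\,\|u-v\|_{H^\sigma}.
\]

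\medskip

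Next I would set $M:=2\|\varphi\|_{H^\sigma}$ and restrict $\Phi$ to the closed ball
$B_{T,M}:=\{u\in X_T : \|u\|_{X_T}\le M\}$. Unitarity of $U$ and the previous bound yield, for $u,v\in B_{T,M}$,
\[
	\|\Phi(u)\|_{X_T}\ \le\ \|\varphi\|_{H^\sigma}+\mu\,T\,C(\varepsilon)\,M^{3},
	\qquad
	\|\Phi(u)-\Phi(v)\|_{X_T}\ \le\ \mu\,T\,C(\varepsilon)\,M^{2}\,\|u-v\|_{X_T}.
\]
Choosing $T=T(\varepsilon,\|\varphi\|_{H^\sigma})>0$ small enough that $\mu T C(\varepsilon) M^{2}\le \tfrac{1}{2}$ and $\mu T C(\varepsilon) M^{3}\le \|\varphi\|_{H^\sigma}$, $\Phi$ becomes a contraction on the complete metric space $B_{T,M}$. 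Banach's fixed point theorem then produces a unique $u_\varepsilon\in B_{T,M}$ with $u_\varepsilon=\Phi(u_\varepsilon)$, i.e.\ a mild solution to \eqref{eq:duhamel} on $[0,T)$. Uniqueness in the whole space $C^0([0,T),H^\sigma)$ follows from a standard Gronwall argument applied to any two solutions using the Lipschitz bound on $F$.

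\medskip

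I do not foresee any real obstacle: the only slightly delicate point is writing the contraction estimate with care so that $\varepsilon$ only enters through fixed constants $\|w_\varepsilon\|_{L^1}$ and $\|w_\varepsilon\|_{H^\sigma}$, which are finite for each fixed $\varepsilon>0$ thanks to $w$ being Schwartz. The nonlinearity is local in time and algebraically cubic, and the Banach algebra property of $H^\sigma$ for $\sigma>1/2$ converts the pointwise product $w_\varepsilon u$ directly into an $H^\sigma$ estimate, so no endpoint Strichartz machinery is needed. Globalisation in $H^1$ (and the resulting conservation laws) is postponed to subsequent lemmas, as announced in the paragraph preceding the statement.
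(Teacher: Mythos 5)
Your proof is correct and follows essentially the same contraction-mapping strategy as the paper: both arguments rest on the unitarity of $U(t)$ in $H^\sigma(\R)$, the algebra property and Sobolev embedding of $H^\sigma(\R)$ for $\sigma>1/2$, and a Banach fixed-point argument on a ball in $C^0([0,T],H^\sigma)$. The only cosmetic difference is that you center the ball at the origin (radius $2\|\varphi\|_{H^\sigma}$) while the paper centers it at $U(\cdot)\varphi$ (radius $\rho$); the estimates are otherwise identical, and your extra remark that full uniqueness in $C^0([0,T),H^\sigma)$ follows from a Gr\"onwall argument is a sound, standard observation.
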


\begin{proof}
	Let $T > 0$ to be specified later.
	In  the space $X : =  C^0 ([0,T], H^\sigma(\bR)) $, consider the closed ball $X_\rho$ of radius $\rho > 0$ centred at $U(\cdot) \varphi$, namely
	\[
		X_\rho : = \{ v \ \in \ X, \  \| v - U(\cdot) \varphi \|_X \leq \rho \},
	\]
	where we considered the standard norm $\| u \|_X := \sup_{t \in [0,T)} \| u_t \|_{H^\sigma(\bR)}$. As well-known, $X_\rho$ is complete as a metric space.
	
	We introduce in $X$ the map $\Phi$, defined by
	\begin{equation}
		\label{eq:mapphi}
		(\Phi v)_t \ : = \ U (t) \varphi - \ii \mu \int_0^t \ds \, U (t-s) w_\varepsilon \langle w_\varepsilon,
		| v_s |^2 \rangle v_s.
	\end{equation}
	Since $\sigma > 1/2$, the space
	$H^\sigma(\bR)$ is an algebra, therefore
	\begin{equation}
		\begin{split}
			\left\| \int_0^t \ds \, U (t-s) w_\varepsilon \langle w_\varepsilon,
			| v_s |^2 \rangle v_s \right\|_{H^\sigma(\bR)} \ \leq \ &  \int_0^t \ds
			\langle w_\varepsilon,
			| v_s |^2 \rangle
			\| w_\varepsilon   v_s \|_{H^\sigma(\bR)}  \\
			\ \leq \ &  C \| w \|_1 \| w_\varepsilon \|_{H^\sigma(\bR)} \int_0^t \ds \, \| v_s \|_\infty^2
			\| v_s \|_{H^\sigma(\bR)}
			\\
			\ \leq \ &  C T \| v \|_X^3.
		\end{split}
	\end{equation}
	Thus by \eqref{eq:mapphi}  one obtains
	\begin{equation}
		\begin{split}
			\| \Phi v - U(\cdot) \varphi \|_{X} \ & \leq \    C \| v \|_X^3  T
		\end{split}
	\end{equation}
	and by
	\[
		\| v \|_X \ \leq \ \| v -  U(\cdot) \varphi \|_{X}  + \|  U(\cdot) \varphi \|_{X} \ \leq \ \rho + \|\varphi \|_{H^\sigma(\bR)}, 
	\]
	one gets
	\begin{equation}
		\begin{split}
			\| \Phi v - U(\cdot) \varphi \|_{X} \ & \leq \    C  \left(\rho + \|\varphi \|_{H^\sigma(\bR)}\right)^3  T
		\end{split}
	\end{equation}
	so that, provided
	\begin{equation} \label{eq:phimap}
		T \ < \ \frac{\rho}{C \left(\rho + \|\varphi \|_{H^\sigma(\bR)}\right)^3},
	\end{equation}
	one concludes that $\Phi$ maps $X_\rho$ into itself. We look for further conditions on $T$ that guarantee that $\Phi$ is a contraction of $X_\rho$. Consider $\xi$ and $\zeta$ elements of $X_\rho$ and notice that
	\begin{equation}
		\label{eq:xizeta}
		\| \xi \|_X \ \leq \ \| \xi - U (\cdot) \varphi \|_X
		+ \| U (\cdot) \varphi \|_X \ \leq \ \rho +
		\| \varphi \|_{H^\sigma (\R)}
	\end{equation}
	and the same for $\| \zeta \|_X.$
	
	Then
	\begin{equation} \label{eq:contraction1}
		\begin{split}
			\left\| \left( \Phi \xi \right)_t - \left( \Phi \zeta \right)_t \right\|_{H^\sigma(\bR)}
			\ \leq \ & C \| w_\varepsilon \|_{H^\sigma(\bR)}  \int_0^t \ds \,
			\left\| \langle  w_\varepsilon, | \xi_s |^2 \rangle  \xi_s  -
			\langle  w_\varepsilon, | \zeta_s |^2 \rangle  \zeta_s
			\right\|_{H^\sigma(\bR)}.
		\end{split}
	\end{equation}
	We split the last factor in the integral as follows
	\begin{equation} \label{eq:contraction2}
		\begin{split}
			\big\| \langle  w_\varepsilon, | \xi_s |^2 \rangle  \xi_s  -
			\langle  w_\varepsilon, | \zeta_s |^2 \rangle  \zeta_s
			\big\|_{H^\sigma(\bR)}
			\leq \, &  \left\| \langle  w_\varepsilon, | \xi_s |^2 \rangle  ( \xi_s  - \zeta_s )
			\right\|_{H^\sigma(\bR)} +  \left\| \langle  w_\varepsilon, | \xi_s |^2  -| \zeta_s |^2 \rangle  \zeta_s
			\right\|_{H^\sigma(\bR)} \\
			\leq \, & \langle  w_\varepsilon, | \xi_s |^2 \rangle \| \xi_s  - \zeta_s \|_{H^\sigma(\bR)}
			+ \left| \langle  w_\varepsilon, | \xi_s |^2  -| \zeta_s |^2 \rangle \right| \left\| \zeta_s
			\right\|_{H^\sigma(\bR)}.
		\end{split}
	\end{equation}
	Using $\int dx \, w_\varepsilon (x) = 1$
	\begin{equation}
		\begin{split}
			\left| \langle  w_\varepsilon, | \xi_s |^2  -| \zeta_s |^2 \rangle \right| \ = \ &
			\left| \int_\bR \dx \,  w_\varepsilon (x) \left( | \xi_s (x) | + | \zeta_s (x) | \right)
			\left( | \xi_s (x) | - | \zeta_s (x) | \right) \right|
			\\
			\ \leq \ &  \int_\bR \dx \,  w_\varepsilon (x) \left( | \xi_s (x) | + | \zeta_s (x) | \right)
			| \xi_s (x) - \zeta_s (x) |
			\\
			\ \leq \ & \left( \| \xi_s \|_\infty + \| \zeta_s \|_\infty \right) \| \xi_s  - \zeta_s  \|_\infty
			\\
			\ \leq \ & C \left( \| \xi \|_X + \| \zeta \|_X \right) \| \xi  - \zeta  \|_X,
		\end{split}
	\end{equation}
	that, together with \eqref{eq:contraction1} and \eqref{eq:contraction2} yields
	\begin{equation} \label{eq:contraction3}
		\begin{split}
			\left\| \left( \Phi \xi \right)_t - \left( \Phi \zeta \right)_t \right\|_{H^\sigma(\bR)}
			\ \leq \ & C \| w_\varepsilon \|_{H^\sigma(\bR)} T \left( \| \xi \|_X^2 \| \xi  - \zeta  \|_X
			+ \| \zeta \|_X \left( \| \xi \|_X + \| \zeta \|_X \right)\| \xi  - \zeta  \|_X
			\right) \\
			\ \leq \ & C  T \left( \| \xi \|_X^2 + \| \zeta \|_X^2 \right) \| \xi  - \zeta  \|_X
			\\
			\ \leq \ & C T \left( \rho + \| \varphi \|_{H^\sigma(\bR)} \right)^2  \| \xi  - \zeta  \|_X.
		\end{split}
	\end{equation}
	Thus, owing to  \eqref{eq:phimap} and\eqref{eq:contraction3}, if
	\begin{equation} \label{existencetime}
		T \ < \ \min \left( \frac{1}{C \left(\rho + \| \varphi \|_{H^\sigma(\bR)} \right)^2}, \frac{\rho}{C \left(\rho + \|\varphi \|_{H^\sigma(\bR)}\right)^3} \right),
	\end{equation}
	then the map $\Phi$ is a contraction of $X_\rho$ into itself and by Banach-Caccioppoli fixed point theorem it admits a unique fixed point, that is a unique function $u_{\eps,t}$ such that 
	\begin{equation} 
		u_{\eps,t} \ = \
		(\Phi u_\eps)_t
		\ =   \ U (t) \varphi - \ii \mu \int_0^t \ds \, U (t-s) w_\varepsilon \langle w_\varepsilon,
		| u_{\eps,s} |^2 \rangle u_{\eps,s} 
		,\end{equation}
	then $u_{\eps,t}$ is the unique solution to \eqref{eq:duhamel}, which is then well-posed in the space $H^\sigma (\bR)$ 
	and the proof is complete.
\end{proof}
Notice that the existence time $T$ depends on $\varepsilon$ through the constant $C$, which in turn contains the norm in $H^\sigma (\bR)$ of the function $w_\varepsilon$, that diverges as $\varepsilon$ vanishes. This is unavoidable as we need to obtain a delta potential in the limit. Of course, this feature is absent in standard Hartree models.

In order to pass from local to global existence, one needs to exploit the conservation laws of the $L^2$-norm and of the energy. For the latter it is required to have more regularity, namely it must be $\sigma \geq 1$.

\begin{lemma}[Conservation laws, Global well-posedness]
	Let $\varepsilon > 0$ be fixed. Consider $\varphi \in H^1(\mathbb{R})$ and let $u_{\varepsilon,t}$ denote the solution at time $t$ of equation \eqref{eq:duhamel} with initial data $u_{\varepsilon,0} = \varphi,$ whose existence is guaranteed by Lemma \ref{lemma:localex}. Then, the following statements hold:
	\begin{enumerate}
		\item The $L^2$-norm is conserved by the flow, i.e. $\| u_{\varepsilon,t} \| = \| \varphi \|.$
		
		\item  The energy
		\begin{equation}\label{energy}
			E_{\eps}(u_{\varepsilon,t}) = \frac{1}{2}\|u_{\varepsilon,t}'\|^2 + \frac{\mu}{4} \langle w_\varepsilon, |u_{\varepsilon,t}|^2 \rangle^2
		\end{equation}
		is conserved by the flow, i.e. $E_\eps(u_{\eps,t}) = E_\eps (\varphi).$
		
		\item The solution $u_{\varepsilon,t}$ can be extended to a global solution.
	\end{enumerate}
\end{lemma}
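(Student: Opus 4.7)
The plan is to prove the two conservation laws first for strong ($H^2$) solutions, upgrade them to $H^1$ by approximation, and then use the resulting uniform $H^1$-bound to iterate the local theorem indefinitely. Throughout, $\varepsilon>0$ is fixed, so every constant may depend on $\varepsilon$ and on $\|w_\varepsilon\|_{H^\sigma}$.

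First I would take $\varphi \in H^2(\bR)$ and invoke Lemma \ref{lemma:localex} with $\sigma=2$ to obtain a local solution $u_{\varepsilon,t} \in C^0([0,T),H^2)$. At this regularity the nonlinearity $w_\varepsilon \langle w_\varepsilon, |u_{\varepsilon,t}|^2\rangle u_{\varepsilon,t}$ lies in $L^\infty_t L^2_x$ and is continuous in $t$, so $u_{\varepsilon,t}$ solves the equation strongly. Writing it as $\ii \partial_t u_{\varepsilon,t} = -u_{\varepsilon,t}'' + V_t u_{\varepsilon,t}$ with the real multiplication operator $V_t := \mu\, w_\varepsilon \langle w_\varepsilon, |u_{\varepsilon,t}|^2\rangle$, pairing with $u_{\varepsilon,t}$ and taking imaginary parts gives $\tfrac{d}{dt}\|u_{\varepsilon,t}\|^2 = 0$; differentiating the energy \eqref{energy} in time and using $\tfrac{d}{dt}\langle w_\varepsilon, |u_{\varepsilon,t}|^2\rangle = 2\Re\langle w_\varepsilon u_{\varepsilon,t}, \partial_t u_{\varepsilon,t}\rangle$ recombines the quartic contribution into
\[
\tfrac{d}{dt} E_\varepsilon(u_{\varepsilon,t}) \ = \ \Re\langle -u_{\varepsilon,t}'' + V_t u_{\varepsilon,t},\, \partial_t u_{\varepsilon,t}\rangle \ = \ \Re\langle \ii \partial_t u_{\varepsilon,t},\, \partial_t u_{\varepsilon,t}\rangle \ = \ 0.
\]

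To reach $\varphi \in H^1(\bR)$, I would approximate by $\varphi_n \in H^2$ with $\varphi_n \to \varphi$ in $H^1$. Since the lifespan \eqref{existencetime} with $\sigma=1$ depends only on $\|\varphi_n\|_{H^1}$, all the $u_{\varepsilon,n,t}$ and the limit solution $u_{\varepsilon,t}$ are defined on a common interval $[0,T_*]$; applying the contraction estimate \eqref{eq:contraction3} to two solutions instead of two iterates yields Lipschitz continuous dependence in $C([0,T_*],H^1)$, so $u_{\varepsilon,n,\cdot} \to u_{\varepsilon,\cdot}$ in $H^1$ uniformly on $[0,T_*]$. Since $H^1(\bR)\hookrightarrow L^\infty(\bR)$, the maps $u \mapsto \|u\|$ and $u \mapsto E_\varepsilon(u)$ are continuous on $H^1$, so I can pass to the limit in the $H^2$-level identities to obtain the two conservation laws for $u_{\varepsilon,t}$.

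Finally, nonnegativity of both terms in \eqref{energy} together with $L^2$-conservation gives the uniform a priori bound $\|u_{\varepsilon,t}\|_{H^1}^2 \le \|\varphi\|^2 + 2 E_\varepsilon(\varphi)$ on the existence interval. Since the lifespan in \eqref{existencetime} is bounded below by a positive quantity depending only on the current $H^1$-norm, restarting Lemma \ref{lemma:localex} from any time gives the same time step, so iteration extends $u_{\varepsilon,\cdot}$ to all $t \ge 0$, and time reversal to $\R$. The step I expect to be most delicate is the continuous-dependence claim in the approximation argument: one must reuse the contraction arithmetic of Lemma \ref{lemma:localex} on the difference of two solutions with different initial data, checking that both the Lipschitz constant and the lifespan are controlled by $\|\varphi\|_{H^1}$ alone. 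Once this is in place, the remaining arguments are routine, and the $\varepsilon$-dependence of the various constants is harmless because $\varepsilon$ is held fixed.
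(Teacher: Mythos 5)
Your proposal is correct and follows essentially the same strategy as the paper's proof: prove the conservation laws first for $H^2$ data, where the equation holds strongly, then pass to $H^1$ via an approximation argument that reuses the contraction/Gr\"onwall estimate for continuous dependence, and finally iterate the local well-posedness theorem using the resulting $\varepsilon$-independent (but $\varphi$-dependent) $H^1$ a priori bound. The only minor organizational difference is that the paper establishes $L^2$-conservation directly for $H^1$ solutions using the $H^1$--$H^{-1}$ duality pairing (so only energy conservation goes through the $H^2$ detour), whereas you carry both conservation laws through the $H^2$ approximation; both routes are valid.
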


\begin{proof}
	As a solution to \eqref{eq:duhamel}, the  function $u_{\varepsilon,t}$ satisfies $\partial_t u_{\varepsilon,t} \in H^{-1} (\bR)$. Then, one can compute
	\begin{equation}
		\begin{split}
			\partial_t \| u_{\varepsilon,t} \|^2 \ = \ & 2 \Re \langle u_{\varepsilon,t}, \partial_t u_{\varepsilon,t} \rangle \ = \ 2 \Im \langle u_{\varepsilon,t}, - u_{\varepsilon,t}'' \rangle +
			2 \Im \mu \langle w_\varepsilon, | u_{\varepsilon,t} |^2 \rangle^2
			\ = \  0,
		\end{split}
	\end{equation}
	so conservation of the $L^2$-norm is proven at every time of existence of the solution.
	
	To prove the conservation of energy, let us first suppose more regularity for the initial data, say $\varphi \in H^2 (\bR)$. Then, given the definition \eqref{energy}, straightforward computations yield
	\begin{equation} \label{eq:dtenergy}
		\begin{split}
			\partial_t E_\eps (u_{\varepsilon,t}) \, : = \, & \Re \left( \langle u_{\varepsilon,t}',
			\partial_t u_{\varepsilon,t}' \rangle + \mu \langle w_\varepsilon, | u_{\varepsilon,t} |^2 \rangle
			\langle w_\varepsilon, \overline{ u_{\varepsilon,t}} \partial_t u_{\varepsilon,t} \rangle \right)
		\end{split}
	\end{equation}
	The first term in parentheses can be computed as follows
	\begin{equation} \label{eq:dtenergy1}
		\begin{split}
			\Re \langle u_{\varepsilon,t}',
			\partial_t u_{\varepsilon,t}' \rangle \ = \ & - \Im \langle u_{\varepsilon,t}'', -u_{\varepsilon,t}''
			+ \mu c w_\varepsilon  u_{\varepsilon,t} \rangle
			\ = \ - \mu c \Im \langle u_{\varepsilon,t}'', w_\varepsilon  u_{\varepsilon,t} \rangle
		\end{split}
	\end{equation}
	where we denoted $c:=\langle w_\varepsilon, | u_{\varepsilon,t} |^2 \rangle$ to avoid cumbersome expressions and used the fact that $u_{\varepsilon,t} \in H^2 (\bR)$. On the other hand, for the second term in the parentheses of \eqref{eq:dtenergy} one immediately has
	\begin{equation} \label{eq:dtenergy2}
		\begin{split}
			\mu c \Re \langle w_\varepsilon, \overline{ u_{\varepsilon,t}} \partial_t u_{\varepsilon,t} \rangle \ = \ &
			\mu c \Im  \langle w_\varepsilon, \overline{ u_{\varepsilon,t}} \left( -u_{\varepsilon,t}''
			+ \mu c w_\varepsilon  u_{\varepsilon,t}  \right) \rangle \\
			\ = \ &
			\mu c \Im  \langle w_\varepsilon, - \overline{ u_{\varepsilon,t}}  u_{\varepsilon,t}''\rangle \\
			\ = \ &
			\mu c \Im  \langle  u_{\varepsilon,t}'', w_\varepsilon  u_{\varepsilon,t} \rangle
		\end{split}
	\end{equation}
	Therefore, from \eqref{eq:dtenergy}, \eqref{eq:dtenergy1}, and \eqref{eq:dtenergy2}, one concludes
	\begin{equation}
		\partial_t E_\varepsilon (u_{\varepsilon,t}) \, \equiv \, 0.
	\end{equation}
	Owing to the positivity of the interaction, conservation of energy immediately gives that the solutions are global. Indeed, from \eqref{existencetime} it appears that a solution in $H^1$ can always be prolonged for a time $T$ that depends on the $H^1$-norm of the solution only. Now, due to the expression \eqref{energy} of the conserved energy of the system and to conservation of the $L^2$-norm, one has
	\begin{equation}
		\label{h1estimate}\|  u_{\varepsilon,t} \|_{H^1(\bR)} \ \leq \ \sqrt{2 E_\varepsilon (\varphi) + \| \varphi \|^2}
	\end{equation}
	so that the solution can be always extended for a further time given by
	\begin{equation} \label{eq:extension}
		T_\eps \ = \ \frac{1}{2}\min \left( \frac{1}{C \left(\rho + \sqrt{2 E_\varepsilon (\varphi) + \| \varphi \|^2}
			\right)^2}, \frac{\rho}{C \left(\rho +\sqrt{2 E_\varepsilon (\varphi) + \| \varphi \|^2}
			\right)^3} \right),
	\end{equation}
	which depends on $\varphi$ and $\eps$. Thus, iterating the extension, at each step the solution is extended by the same amount of time $T_\varepsilon$, so globality in time is eventually reached.
	
	To allow for initial data in $H^1(\bR)$ we exploit the continuity of the solution with respect to initial data. Let us drop the auxiliary hypothesis $\varphi \in H^2 (\R)$ and consider a sequence $\varphi^{(n)} \in H^2(\bR)$ such that $\varphi^{(n)} \to \varphi$ in the topology of $ H^1(\bR)$. 
	Denote by $u_{\varepsilon,t}^{(n)}$  the solution to \eqref{eq:duhamel} with initial data  $\varphi^{(n)}$ and as usual by $u_{\eps,t}$ the solution to \eqref{eq:duhamel} with initial data $\varphi$. We stress that the solution $u_{\varepsilon,t}$ is initially defined in an interval $[0,T]$, with  $T$ satisfying the condition \eqref{existencetime}. On the other hand, being solutions in $H^2 (\bR)$, the functions $u_{\varepsilon,t}^{(n)}$ are defined at every $t\in \bR$.
	Furthermore, the quantity $\| u_{\varepsilon,t}^{(n)} \|_{H^1(\bR)}$ can be bounded by a constant $C$ independent of $\eps, \, n$ and $t$:
	\[
		\| u_{\varepsilon,t}^{(n)} \|_{H^1(\bR)}^2
		\  \leq \ 2 E_\eps (\varphi^{(n)}) + \| \varphi^{(n)} \|^2 \ \leq \
		\| (\varphi^{(n)})' \|^2 +\frac \mu 2 \| \varphi^{(n)} \|_\infty^4 + \| \varphi^{(n)} \|^2
		\ \leq \ C,
	\]
	where we used \eqref{h1estimate}, $\int w_\eps \, dx = 1$, and the fact that the $H^1$- convergence of the sequence $\varphi^{(n)}$ entails the boundedness of $\| (\varphi^{(n)})' \|, \, \| \varphi^{(n)} \|_\infty$, and $\| \varphi^{(n)} \|.$
	
	On the other hand, by Lemma \ref{lemma:localex} one has  $\| u_{\varepsilon,t}\|_{H^1(\bR)} \ \leq \ C_\varepsilon$ for every $t \in [0,T]$, but the dependence on $\varepsilon$ cannot be ruled out until we prove conservation of energy.
	
	From \eqref{eq:duhamel} one  has
	\begin{equation*}
		\begin{split}
			\| u_{\varepsilon,t}^{(n)} &- u_{\varepsilon,t} \|_{H^1(\bR)} \\
			\leq \ & \| \varphi^{(n)} - \varphi \|_{H^1(\bR)} + \mu \int_0^t \ds \, \left\| w_\varepsilon \left( \langle w_\varepsilon, |  u_{\varepsilon,s}^{(n)} |^2 \rangle u_{\varepsilon,s}^{(n)} - \langle w_\varepsilon, |  u_{\varepsilon,s} |^2 \rangle u_{\varepsilon,s}
			\right)
			\right\|_{H^1(\bR)} \\
			\ \leq \ & C_n + \mu \|  w_\varepsilon \|_{H^1(\bR)} \int_0^t \ds \, \left( \langle w_\varepsilon, |  u_{\varepsilon,s}^{(n)} |^2\rangle \|
			u_{\varepsilon,s}^{(n)} -  u_{\varepsilon,s} \|_{H^1(\bR)} + \langle w_\varepsilon, |
			|  u_{\varepsilon,s}^{(n)} |^2 - |  u_{\varepsilon,s} |^2  | \rangle \|  u_{\varepsilon,s} \|_{H^1(\bR)}
			\right) \\
			\ \leq \ & C_n + C_\varepsilon  \int_0^t \ds \, \left( |  u_{\varepsilon,s}^{(n)} (0) |^2
			\|
			u_{\varepsilon,s}^{(n)} -  u_{\varepsilon,s} \|_{H^1(\bR)}  + (  \|
			u_{\varepsilon,s}^{(n)}\|_{H^1(\bR)}^2 +  \|
			u_{\varepsilon,s} \|_{H^1(\bR)}^2 )  |
			u_{\varepsilon,s}^{(n)} (0) -  u_{\varepsilon,s} (0)| \right)
			\\
			\ \leq \ & C_n + C_\varepsilon  \int_0^t \ds \, \|
			u_{\varepsilon,s}^{(n)} -  u_{\varepsilon,s} \|_{H^1(\bR)} , \qquad \forall t \in [0,T].
		\end{split}
	\end{equation*}
	Now, by Gr\"onwall's estimate
	\begin{equation*}
		\begin{split}
			\| u_{\varepsilon,t}^{(n)} - u_{\varepsilon,t} \|_{H^1(\bR)} \ \leq \ C_n e^{C_\varepsilon  t}
			, \qquad \forall t \in [0,T]
		\end{split}
	\end{equation*}
	and, since $C_n : = \| \varphi^{(n)} - \varphi \|_{H^1(\bR)}$ vanishes as $n \to \infty$, we conclude that
	\[
		u_{\varepsilon,t}^{(n)} \to u_{\varepsilon,t}, \qquad n \to \infty, \, , \qquad \forall t \in [0,T] . 
	\]
	Finally, since the energy functional is continuous in $H^1(\bR)$, one has
	\[ 
		E (u_{\varepsilon,t}) \ = \ \lim_{n \to \infty}  E (u_{\varepsilon,t}^{(n)}) \ = \
		\lim_{n \to \infty}  E (\varphi^{(n)}) \ = \  E (\varphi), \qquad \forall t \in [0,T],
	\]
	then conservation of energy is proven up to time $T$. It is now possible to proceed as we did for the
	$H^2$-solutions, namely extend the solution $u_{\varepsilon,t}$ in $H^1 (\bR)$ for a time $T$ given by \eqref{eq:extension} infinitely many times, obtaining then a global solution in $H^1 (\bR)$.
\end{proof}

\section{From nonlocal to local nonlinearity}\label{sec:1bodyConvergence}

In the present section we prove that, as the number $N$ of particles grows to infinity and then $\varepsilon$ vanishes, the solution $u_{\varepsilon,t}$ of \eqref{eq:intermediateintro} converges to the solution $\varphi_t$ of the mild version \eqref{eq:integralequationdelta} of the target equation
\eqref{eq:limiteq}. There are some similarities with the analogous result proved in \cite{CFNT14}, but 
first, here the starting equation \eqref{eq:duhamel} is different from that of \cite{CFNT14}, and second, here we aim at estimating the error for large times, whereas in that work the result was given for times of order 1.


Preliminarily, we highlight an important uniform estimate:

\begin{remark} \label{rem:epsuniform}
	For the sake of taking the limit of $u_{\varepsilon,t}$ as $\varepsilon \to 0$ while keeping fixed the initial data $\varphi$,
	it is essential to notice that
	\[
		\| u_{\varepsilon,t} \|_{H^1(\bR)} \ \leq \ \sqrt{ \| u_{\varepsilon,t} \|^2 + 2 E ( u_{\varepsilon,t}) }\ = \ \sqrt{ \| \varphi \|^2 + 2 E ( \varphi) } \ \leq \ \sqrt{ \| \varphi\|^2_{H^1(\bR)} + \frac 1 2
		\| \varphi \|_{\infty}^4},
	\]
	that is a uniform bound in $\varepsilon$ and $t$. As a consequence, one has the uniform estimate
	\[
		\langle w_\eps, |u_{\eps,t}|^2 \rangle \ \leq \ 
		\| u_{\eps,t} \|_\infty^2 \ \leq \ C.
	\]
\end{remark}

Before proving the main result of the section, we give a Gr\"onwall-type inequality, suited for our purposes.
\begin{lemma} \label{gronwalltype}
	Let $w : [0,+ \infty) \to \R$ be a non-negative and continuous function satisfying the bound
	\begin{equation}\label{gronwall-abel}
		w (t) \ \leq \ A (t) + B \int_0^t \d s \, \frac {w(s)}{\sqrt{t-s}},
	\end{equation}
	where $A : [0, + \infty) \to [0, + \infty) $ is  continuous and $B \geq 0.$
	
	Then, the following inequality holds:
	\begin{equation}
		\label{quasigronwall}
		w (t) \ \leq \ D (t) + \pi B^2 e^{\pi B^2 t} \int_0^t \d s \, D(s) e^{- \pi B^2 s} ,
	\end{equation}
	where
	\[
		D (t) = A (t) + B \int_0^t \d s \,\frac{A (s)} {\sqrt{t-s}} .
	\]
\end{lemma}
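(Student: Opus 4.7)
The plan is to remove the weak singularity by iterating the inequality once, reducing the problem to a standard Gr\"onwall situation. Concretely, I substitute the bound \eqref{gronwall-abel} into itself to get
\[
w(t) \ \le \ A(t) + B\int_0^t \frac{A(s)}{\sqrt{t-s}}\,\d s + B^2 \int_0^t \!\!\int_0^s \frac{w(r)}{\sqrt{(t-s)(s-r)}}\,\d r\,\d s .
\]
The first two terms are exactly $D(t)$. For the third, I swap the order of integration so that $r$ runs from $0$ to $t$ and $s$ from $r$ to $t$, and I evaluate the inner integral explicitly using the substitution $s = r + u(t-r)$ with $u \in [0,1]$. This converts it into the Euler beta integral
\[
\int_r^t \frac{\d s}{\sqrt{(t-s)(s-r)}} \ = \ \int_0^1 \frac{\d u}{\sqrt{u(1-u)}} \ = \ \pi ,
\]
independently of $r$ and $t$. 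This is the only slightly nonroutine step, and the key observation that makes the proof work.

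After this computation the iterated inequality collapses to the ordinary (non-singular) Volterra form
\[
w(t) \ \le \ D(t) + \pi B^2 \int_0^t w(r)\,\d r .
\]
At this point I apply the standard linear Gr\"onwall lemma: if $f(t) \le g(t) + c\int_0^t f(s)\,\d s$ for a continuous non-negative $f$ and continuous $g$, then $f(t) \le g(t) + c\,e^{ct}\int_0^t g(s)\,e^{-cs}\,\d s$. Using $c = \pi B^2$ and $g = D$ yields
\[
w(t) \ \le \ D(t) + \pi B^2 e^{\pi B^2 t} \int_0^t D(s)\,e^{-\pi B^2 s}\,\d s ,
\]
which is exactly \eqref{quasigronwall}.

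The only potential obstacle is justifying the swap of integration order (Fubini/Tonelli) and the existence of the improper integral near the singularities at $s = r$ and $s = t$; both are handled by the non-negativity of $w$ and the integrability of $1/\sqrt{u(1-u)}$ on $[0,1]$, together with the continuity of $A$ on $[0,T]$ for arbitrary $T > 0$. Continuity of $D$ inherited from $A$ then makes the final Gr\"onwall step immediate.
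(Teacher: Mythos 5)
Your proof is correct and follows exactly the same strategy as the paper's: iterate the Abel--Gr\"onwall inequality once, swap the order of integration, use the Euler beta integral $\int_{r}^{t} \d s / \sqrt{(t-s)(s-r)} = \pi$ to eliminate the singular kernel, and then apply the classical Gr\"onwall inequality. The additional remarks on Fubini/Tonelli and continuity are sound but standard; the paper simply leaves them implicit.
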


\begin{proof}
	Iterating the Gr\"onwall-Abel inequality \eqref{gronwall-abel} one finds
	\[
		w(t) \ \leq \ A (t) + B \int_0^t \d s \,\frac{A(s)}{\sqrt{t-s}} + B^2
		\int_0^t \frac{\d s}{\sqrt{t-s}} \int_0^s \d s' \,\frac{w (s')}{\sqrt{s-s'}}.  
	\]
	Exchanging the integrals in $s$ and $s'$ and recalling that
	\[ 
		\int_{s'}^t  \frac{\d s}{\sqrt{t-s}\sqrt{s-s'}} \ = \ \pi,
	\]
	one gets
	\[
		w(t) \ \leq \ D (t) + \pi B^2 \int_0^t \d s' \, w (s')
	\]
	that, by the classical Gr\"onwall inequality, yields the result.
\end{proof}

\noindent
We prove the following 
\begin{theorem}[Convergence of the one-body dynamics]\label{thm:Convergence}
	Let $\varphi \in H^1 (\R)$ and $w$ in the Schwartz class. Then, for any $t > 0$,
	\begin{equation}
		\label{eq:onedconvergence}
		\| u_{\varepsilon, t} - \varphi_t \|^2_{L^2 (\R)} \ \leq \ C \varepsilon^{2 \eta}  \, e^{Ct},
		\qquad \forall \eta, \, t , \ \, 0 < \eta < \frac 1 2, \, t \geq 0,
	\end{equation}
	where $C$ is independent of $\varepsilon$ and $t$.
\end{theorem}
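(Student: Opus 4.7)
The plan is to first control the pointwise ``charge difference'' $e_t(0) := u_{\varepsilon,t}(0) - \varphi_t(0)$ by a Gronwall--Abel argument in the style of Lemma \ref{gronwalltype}, and then to upgrade this scalar bound to the full $L^2$-estimate through a careful Duhamel decomposition. The crucial technical input is Remark \ref{rem:epsuniform}: combined with the one-dimensional Sobolev embedding $H^{1/2+\eta}(\R) \hookrightarrow C^{0,\eta}(\R)$, it provides the uniform H\"older bound $|u_{\varepsilon,s}(y) - u_{\varepsilon,s}(0)| \leq C|y|^\eta$ and similarly for $\varphi_s$, with $C$ independent of $\varepsilon,s,t$. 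This is what produces the $\varepsilon^\eta$-smallness after convolution against $w_\varepsilon$, since $\int w_\varepsilon(y)|y|^\eta \,\d y = C\varepsilon^\eta$.

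For step one, I would evaluate both \eqref{eq:integralequationdelta} and \eqref{eq:duhamel} at $x=0$ and subtract. The identity $U(\tau,y) = U(\tau,0)\, e^{\ii y^2/(4\tau)}$ combined with the interpolation $|e^{\ii \sigma}-1| \leq C|\sigma|^\eta$ yields
\[
|K_\varepsilon(\tau) - U(\tau,0)| \leq C\varepsilon^{2\eta}\tau^{-1/2-\eta}, \qquad K_\varepsilon(\tau) := \int U(\tau,y)\,w_\varepsilon(y)\,\d y,
\]
which is integrable in $\tau \in (0,t]$ thanks to $\eta < 1/2$. Adding and subtracting $u_{\varepsilon,s}(0)$ inside the $y$-integral of the nonlocal Duhamel, and $|u_{\varepsilon,s}(0)|^2$ inside $A(u_{\varepsilon,s}) := \langle w_\varepsilon, |u_{\varepsilon,s}|^2\rangle$, decomposes the integrand into a concentration term of size $C\varepsilon^\eta/\sqrt{t-s}$, a kernel-smoothing term of size $C\varepsilon^{2\eta}/(t-s)^{1/2+\eta}$, and a Lipschitz term $C|e_s(0)|/\sqrt{t-s}$. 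Applying Lemma \ref{gronwalltype} to $|e_t(0)|$ yields $|e_t(0)| \leq C\varepsilon^\eta e^{Ct}$.

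For step two, I would introduce the effective charge $\wt q_\varepsilon(s) := u_{\varepsilon,s}(0)\, A(u_{\varepsilon,s})$ and $q(s) := |\varphi_s(0)|^2\varphi_s(0)$, and write $e_t = A_1 + A_2 + A_3$ with
\begin{align*}
A_1 &:= -\ii\mu \int_0^t U(t-s)\bigl[w_\varepsilon A(u_{\varepsilon,s})(u_{\varepsilon,s} - u_{\varepsilon,s}(0))\bigr]\,\d s, \\
A_2 &:= -\ii\mu \int_0^t \wt q_\varepsilon(s)\, U(t-s)(w_\varepsilon - \delta_0)\,\d s, \\
A_3 &:= -\ii\mu \int_0^t \bigl(\wt q_\varepsilon(s) - q(s)\bigr)\,U(t-s,\cdot)\,\d s.
\end{align*}
The $L^2$-norm of $A_1$ is controlled by the mixed-norm Strichartz inequality $\|\int_0^t U(t-s) G\,\d s\|_{L^\infty_t L^2_x} \lesssim \|G\|_{L^{4/3}_s L^1_x}$, applied to the $\varepsilon^\eta$-small $L^1_x$-bound $\|w_\varepsilon A(u_{\varepsilon,s})(u_{\varepsilon,s}-u_{\varepsilon,s}(0))\|_{L^1_x} \leq C\varepsilon^\eta$. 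For $A_2$ and $A_3$ I would pass to the $x$-Fourier transform: from $\int w = 1$ and $w$ even one derives $|\sqrt{2\pi}\,\hat w(\varepsilon k) - 1| \leq C(\varepsilon|k|)^{2\eta}$, and then Plancherel in $x$ together with the change of variables $u = k^2$ and the time-Plancherel identity $\|\int_0^t e^{\ii k^2 s}g(s)\,\d s\|_{L^2_k(\R)}^2 = 2\pi \|g\|_{L^2([0,t])}^2$ yield $\|A_2\|_{L^2} \leq C\varepsilon^{2\eta}\sqrt{t}$ and $\|A_3\|_{L^2}^2 \leq C(1+t)\|\wt q_\varepsilon - q\|_{L^2([0,t])}^2$. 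Combining with step one, the bound $|\wt q_\varepsilon(s) - q(s)| \leq C\varepsilon^\eta + C|e_s(0)| \leq C\varepsilon^\eta e^{Cs}$ turns the latter into $\|A_3\|_{L^2}^2 \leq C\varepsilon^{2\eta}e^{Ct}$, and summing the three pieces yields \eqref{eq:onedconvergence}.

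The main obstacle is the handling of $U(t-s)*w_\varepsilon - U(t-s)$, which fails to lie in $L^2_x$ at each fixed $t-s$ but becomes square-integrable after time-averaging. Working in Fourier variables turns this into an explicit integrability question where the $\varepsilon^{2\eta}$-savings come entirely from the quadratic vanishing of $\sqrt{2\pi}\,\hat w(\xi) - 1$ at $\xi = 0$, a consequence of $w$ being even with unit mass. This strategy covers $\eta \in (0,1/4)$ without further work; the full range $\eta \in (0,1/2)$ requires quantitative decay of the time Fourier transform of $\wt q_\varepsilon$, obtainable through integration by parts using the equation satisfied by $u_\varepsilon$.
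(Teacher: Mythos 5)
Your overall strategy is aligned with the paper's: first a Gr\"onwall--Abel argument at the origin to control $e_t(0)=u_{\varepsilon,t}(0)-\varphi_t(0)$, then an upgrade to an $L^2$-estimate through a Duhamel decomposition. The scalar step and the treatment of $A_3$ reproduce the paper's handling of (II)+(III). Where you diverge is the decomposition of the ``$w_\varepsilon\to\delta$'' piece and the tools used to bound it. The paper rescales $y\mapsto\varepsilon y$ and splits into (Ia), where the Fresnel kernel difference $U(t-s,x-\varepsilon y)-U(t-s,x)$ carries the smallness, and (Ib), where the propagator $U(t-s,x)$ factors out and the smallness sits in $u_{\varepsilon,s}(\varepsilon y)-u_{\varepsilon,s}(0)$. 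You instead peel off $u_{\varepsilon,s}(0)$ in physical $x$-space, so your $A_1$ is the \emph{propagated} function $w_\varepsilon A(u_{\varepsilon,s})(u_{\varepsilon,s}-u_{\varepsilon,s}(0))$, and you control it by the retarded Strichartz estimate $L^{4/3}_s L^1_x\to L^\infty_t L^2_x$ rather than by direct Fresnel correlation. That works, and it is a genuinely different (and in some ways cleaner) route for that piece.

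The genuine gap is in $A_2$. Your treatment passes to the spatial Fourier transform and uses the multiplier bound $|\sqrt{2\pi}\widehat w(\varepsilon k)-1|\leq C(\varepsilon|k|)^{2\eta}$, reducing $\|A_2\|^2$ to $\varepsilon^{4\eta}\int_0^\infty\omega^{2\eta-1/2}|\widehat{(\chi_{[0,t]}\widetilde q_\varepsilon)}(\omega)|^2\,\d\omega$. For $\eta<1/4$ the weight is bounded on $[1,\infty)$ and Plancherel closes the estimate, but for $\eta\in[1/4,1/2)$ the weight grows and you need quantitative high-frequency decay of the time Fourier transform of $\widetilde q_\varepsilon$. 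You acknowledge this and propose integration by parts in $s$, but that route is delicate: $s\mapsto u_{\varepsilon,s}(0)$ inherits a $|s-\tau|^{-1/2}$ singularity from the Duhamel kernel, so differentiability (and hence $1/\omega$-decay of $\widehat{\widetilde q_\varepsilon}$) is not free. The cleaner fix, and the one the paper effectively uses for its term (Ia), is to stay in time-space: expand $\|A_2\|^2$ as a double $\d s\,\d s'$ integral, correlate the propagator kernels via $\int_\R \overline{U(t-s,x-\varepsilon y)}\,U(t-s',x-\varepsilon y')\,\d x=\sqrt{2\pi}\,U(s-s',\varepsilon(y-y'))$, and apply $|e^{\ii z}-1|\leq|z|^\eta$ to obtain singularities of the form $|s-s'|^{-1/2-\eta}$, which are time-integrable for all $\eta<1/2$ and require no regularity of $\widetilde q_\varepsilon$ beyond boundedness. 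Adopting that technique for $A_2$ closes the range claimed in the theorem; as written, your argument proves a slightly weaker version restricted to $\eta<1/4$.
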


\begin{proof}
	First, we split $u_{\varepsilon,t} - \varphi_t$ into \textcolor{black}{four} terms, namely
	\begin{equation} \begin{split}
			u_{\varepsilon,t} - \varphi_t & =  \mathrm{(I)} + \mathrm{(II)} + \mathrm{(III)} + \mathrm{(IV)},
		\end{split}
	\end{equation}
	where
	\begin{equation} \begin{split}\label{eq123}
			\mathrm{(I)} & \ = \ - \ii \mu \int_0^t \ds \, U (t-s)
			\left( w_\varepsilon u_{\varepsilon,s}
			\langle w_\varepsilon, | u_{\varepsilon,s}|^2 \rangle - \delta
			u_{\varepsilon,s}
			\langle w_\varepsilon, | u_{\varepsilon,s}|^2 \rangle \right)
			\\
			\mathrm{(II)} & \ = \ - \ii \mu \int_0^t \ds \, U (t-s)
			\left( \delta
			u_{\varepsilon,s}
			\langle w_\varepsilon, | u_{\varepsilon,s}|^2 \rangle
			- \delta
			u_{\varepsilon,s} |u_{\varepsilon,s} (0)|^2 \right) \\
			\mathrm{(III)}  & \ = \ - \ii \mu \int_0^t \ds \, U (t-s)
			\left( \delta
			u_{\varepsilon,s} |u_{\varepsilon,s} (0)|^2  - \delta \varphi_s | \varphi_s (0) |^2 \right) \\
			\mathrm{(IV)}  & \ = \ - \ii \mu \int_0^t \ds \, U (t-s)
			\left( \delta \varphi_s | \varphi_s |^2  - \delta \varphi_s | \varphi_s (0) |^2 \right).
		\end{split}
	\end{equation}
	We preliminarily show that
	that $\mathrm{(IV)} = 0$:
	\begin{equation}
		\begin{split}
			\mathrm{(IV)}&=-\ii\mu\int_{0}^{t}\ds\,U(t-s)\left(\delta\varphi_{s}|\varphi_{s}|^{2}-\delta\varphi_{s}|\varphi_{s}(0)|^{2}\right)\\
			&=-\ii\mu\int_{0}^{t}\ds\,\varphi_{s}\left(\int_{\bR}\dy\,U(t-s,x-y)\left(\delta(y)|\varphi_{s}(y)|^{2}-\delta(y)|\varphi_{s}(0)|^{2}\right)\right)\\
			&=-\ii\mu\int_{0}^{t}\ds\,\varphi_{s}\left(U(t-s,x)\left(|\varphi_{s}(0)|^{2}-|\varphi_{s}(0)|^{2}\right)\right)\\
			&=0,
		\end{split}
	\end{equation}
	thus the quantity we have to estimate reduces to 
	$\mathrm{(I)+(II)+(III)}.$

	Let us first estimate $\mathrm{(I)}$. To this aim, we use the strategy outlined in \cite{CFNT17}:
	\begin{equation} \begin{split} \label{eq:IaIb}
			\mathrm{(I)}&\ =\ -\ii
			\mu\int_{0}^{t}\ds\,\langle w_{\varepsilon},|u_{\varepsilon,s}|^{2}\rangle\int_{\R}\dy\,U(t-s,x-y)\,\big(w_{\varepsilon}(y)u_{\varepsilon,s}(y)-\delta(y)u_{\varepsilon,s}(y)\big)\\
			&\ =\ -\ii \mu\int_{0}^{t}\ds\,\langle w_{\varepsilon},|u_{\varepsilon,s}|^{2}\rangle\left(\Big(\int_{\R}\dy\, U(t-s,x-y)w_{\varepsilon}(y)u_{\varepsilon,s}(y)\Big)-U(t-s,x)u_{\varepsilon,s}(0)\right)\\
			&\ =\ -\ii \mu\int_{0}^{t}\ds\,\langle w_{\varepsilon},|u_{\varepsilon,s}|^{2}\rangle\left(\Big(\int_{\R}\dy\, U(t-s,x-\varepsilon y)w(y)u_{\varepsilon,s}(\varepsilon y)\Big)-U(t-s,x)u_{\varepsilon,s}(0)\right)
		\end{split}
	\end{equation}
	where we performed the change of variable $y \mapsto \varepsilon y$ and used the definition of $w_\varepsilon$. Furthermore, since $\int w = 1$, we have
	\begin{equation} \begin{split} \label{eq:IaIb2}
			\mathrm{(I)}  = & \  - \ii \mu \int_0^t \ds \langle w_\varepsilon, |u_{\varepsilon,s} |^2 \rangle \int_\R \dy \, w (y) u_{\varepsilon,s} (\varepsilon y)
			\left( U (t-s, x- \varepsilon y) - U (t-s,x)
			\right) \\
			& \ - \  \ii \mu \int_0^t \ds \langle w_\varepsilon, |u_{\varepsilon,s} |^2 \rangle U (t-s,x) \int_\R \dy \, w (y)
			\left( u_{\varepsilon,s} (\varepsilon y) -
			u_{\varepsilon,s} (0)
			\right) \\
			= & \ \mathrm{(Ia)} + \mathrm{(Ib)}
		\end{split}
	\end{equation}
	We estimate the $L^2$-norm in the variable $x$ of the terms $\mathrm{(Ia)}$ and $\mathrm{(Ib)}$.
	We preliminarily introduce the shorthand notation
	\begin{equation}
		\label{def:f}
		f (y) : = \mu\, w(y)\, u_{\varepsilon,s} (\varepsilon y)\, \langle w_\varepsilon, |u_{\varepsilon,s} |^2 \rangle,
	\end{equation}
	so that
	\begin{equation} \label{eq:Ia-init}
		\begin{split}
			\| \mathrm{(Ia)} \|^2 \ =  \ \int_\R \dx \int_{[0,t]^2} \ds \, \ds' \int_\R \dy \int_\R \dy' \overline{f(y)} f(y')
			& \ \left( \overline{U (t-s,x-\varepsilon y)} - \overline{U (t-s,x)} \right) \times \\
			& \ \times \left( {U (t-s',x-\varepsilon y')} - {U (t-s',x)} \right).
		\end{split}
	\end{equation}
	Since $f$ does not depend on $x$, by exchanging the integrals one is led to compute 
	\[
		\int_\R \left( \overline{U (t-s,x-\varepsilon y)} - \overline{U (t-s,x)} \right) \, \left( {U (t-s',x-\varepsilon y')} - {U (t-s',x)} \right) \, \dx. 
	\]
	We notice that
	\[
		U (\tau, \eta) = \frac{1}{\sqrt{2 \pi}} \int_\R \dk \, e^{\ii k \eta} \, e^{-\ii k^2 \tau}.
	\]
	Then, by standard computations
	\begin{align*}
		&\int_\R \overline{U (t-s,x-\varepsilon y)}  \, {U (t-s',x-\varepsilon y')}  \,
		\dx \\
		& = \frac{1}{2\pi} \int \dx \, \dk \, \dxi \, e^{-ik (x - \varepsilon y)} \, e^{\ii k^2 (t-s)} \, e^{\ii  \xi (x- \varepsilon y')} \, e^{-\ii \xi^2 (t-s')} \\
		& = \int \dk \, e^{\ii  \varepsilon k(y- y')} \, e^{-\ii  k^2 (s-s')} \\
		& = \sqrt{2\pi} \, U (s-s', \varepsilon (y -y')),
	\end{align*}
	thus
	\begin{eqnarray}
		\int_\R  \dx\, \overline{U (t-s,x-\varepsilon y)} \,   {U (t-s',x)}
		& = & \sqrt{2\pi} \, U (s-s', \varepsilon y) \nonumber \\
		\int_\R \dx\, \overline{U (t-s,x)} \,   {U (t-s',x - \varepsilon y')}
		& = & \sqrt{2\pi} \, U (s-s', -\varepsilon y') \nonumber \\
		\int_\R \dx\, \overline{U (t-s,x)} \,   {U (t-s',x)}
		& = & \sqrt{2\pi} \, U (s-s', 0). \label{comput-propag}
	\end{eqnarray}
	Plugging the previous estimates in  \eqref{eq:Ia-init}, one gets
	\begin{equation}
		\label{eq:Ia-est1} \begin{split}
			\| \mathrm{(Ia)} \|^2 \ = \ & \sqrt{2\pi}\int_{[0,t]^2} \ds \ds' \int \dy \, \dy' \, \overline{f(y)} f(y') \bigg( U(s-s',\varepsilon (y-y')) - U (s-s', \varepsilon y) \\
			&\hspace{7cm} - U (s-s', - \varepsilon y') + U (s-s',0) \bigg) \\
			\leq \ &  \sqrt{2\pi}\int_{[0,t]^2} \ds \ds' \int \dy \, \dy' \,
			{|f(y)|} |f(y')| \bigg( | U(s-s',\varepsilon (y-y')) - U (s-s', \varepsilon y) | \\
			&\hspace{7cm} + | U (s-s', - \varepsilon y') - U (s-s',0)| \bigg)
		\end{split}
	\end{equation}
	In order to estimate the two last terms we use the elementary inequality
	\[
		| e^{\ii z} - 1 | \leq | z |^\eta, \quad \forall z \in \R, \, 0 < \eta < \frac{1}{2}
	\]
	and obtain
	\begin{equation} \label{expsuboptimal} \begin{split}
			| U(s-s',\varepsilon (y-y')) - U (s-s', \varepsilon y) | & \, \leq \,  C \varepsilon^{2 
				\eta} \frac{ ( 2 |y||y'| + |y'|^2)^{\eta}}{|s-s'|^{\frac 1 2 + 
					\eta}}\, \leq \, C \varepsilon^{2 
				\eta} \frac{|y|^{2 \eta} + |y'|^{2 \eta}}{|s-s'|^{\frac 1 2 + 
					\eta}}
			\\
			| U (s-s', - \varepsilon y') - U (s-s',0)| & \, \leq \, C \varepsilon^{2 
				\eta} \frac{|y'|^{2 \eta}}{|s-s'|^{\frac 1 2 + 
					\eta}}
		\end{split}
	\end{equation}
	so that
	\begin{equation}
		\label{eq:Ia-est2}
		\| \mathrm{(Ia)} \|^2 \ \leq \  C  \varepsilon^{2 \eta} \int_{[0,t]^2} \frac{\ds \, \ds'}{|s-s'|^{\frac 1 2 + \eta}} \int \dy \, \dy' \,
		| f (y) | \, | f (y') | \, (| y |^{2 
			\eta} + | y' |^{2 \eta}).
	\end{equation}
	By \eqref{def:f} and since $\| u_{\varepsilon,s} \|_\infty \, \leq \, C$,
	\[
		| f (y ) | \ \leq \  C | w (y) |
	\]
	and, since $w$ is rapidly decaying, the integral in \eqref{eq:Ia-est2} is finite, so that we conclude
	\begin{equation}\label{eq:Ia-final}
		\| \mathrm{(Ia)} \|^2 \ \leq \  C  \varepsilon^{2 \eta} t^{\frac 3 2 - \eta}.
	\end{equation}
	Let us now estimate the term
	\[
		\mathrm{(Ib)} \ = \   \mu \int_0^t \ds \langle w_\varepsilon, |u_{\varepsilon,s} |^2 \rangle U (t-s,x) \int_\R \dy \, w (y)
		\left( u_{\varepsilon,s} (\varepsilon y) -
		u_{\varepsilon,s} (0)
		\right).
	\]
	Using Cauchy-Schwarz inequality we notice that
	\begin{equation} \label{eq:upoint}
		\begin{split}
			\left| u_{\varepsilon,s}(\varepsilon y) -u_{\varepsilon ,s}(0)
			\right| \  =  \ &  \left| \int_0^{\varepsilon y} \dt \, u'_{\varepsilon,s}(t)
			\right| \ \leq \ \sqrt{ \varepsilon | y |} \, \|  u'_{\varepsilon,s} \|
			\ \leq \ \sqrt{ \varepsilon | y |} \, \|  u_{\varepsilon,s} \|_{H^1}
			\ \leq \ C \, \sqrt{ \varepsilon | y |}\;,
		\end{split}
	\end{equation}
	so that
	\begin{equation} \label{eq:Ib}
		\| \mathrm{(Ib)} \| \ \leq \ C \sqrt \varepsilon
		\left| \int \dy \, \sqrt{|y|} \, w (y)\, \right| 
		\left\| \int_0^t \ds \langle w_\varepsilon, |u_{\varepsilon,s} |^2 \rangle U (t-s,\cdot)
		\right\|.
	\end{equation}

	To estimate the second factor in \eqref{eq:Ib} we pass to the Fourier space:
	\begin{equation} \label{eq:Fresneltype}
		\begin{split}
			\left\| \int_0^t \ds \langle w_\varepsilon, |u_{\varepsilon,s} |^2 \rangle U (t-s,\cdot)
			\right\|^2 \ = \ & \left\| \int_0^t \ds \langle w_\varepsilon, |u_{\varepsilon,s} |^2 \rangle
			e^{-\ii k^2 (t-s)}
			\right\|^2_{L^2_k} \\
			= \ & \int_0^t \ds \, \langle w_\varepsilon, |u_{\varepsilon,s} |^2 \rangle  \int_0^t \ds'
			\langle w_\varepsilon, |u_{\varepsilon,s'} |^2 \rangle \int \dk \,  e^{-\ii k^2 (s-s')} \\
			\leq \ & C \int_0^t \ds  \int_0^t \ds' \frac{\langle w_\varepsilon, |u_{\varepsilon,s} |^2 \rangle
				\langle w_\varepsilon, |u_{\varepsilon,s'} |^2 \rangle}
			{\sqrt{|s-s'|}}  \ \leq \  C  \int_{[0,t]^2} \frac{\ds \, \ds'} {\sqrt{|s-s'|}} 
			\\
			= \ & C \
			t^{\frac 3 2}  \;.
		\end{split}
	\end{equation}
	Therefore, by \eqref{eq:Ib} and \eqref{eq:Fresneltype} one gets
	\[
		\| \mathrm{(Ib)} \|^2 \ \leq \ C  \varepsilon \, t^{\frac 3 2}
	\]
	that, together with \eqref{eq:Ia-final}, yields
	\begin{equation}
		\label{eq:I-final}
		\| \mathrm{(I)} \|^2 \ \leq \  C  \varepsilon^{2 \eta} (t^{\frac 3 2 - \eta} + t^{\frac 3 2}), \qquad 0 < \eta < \frac 1 2.
	\end{equation}
	Let us now estimate term $\mathrm{(II)}$  in \eqref{eq123}. We rewrite it as
	\begin{equation} \label{eq:2rewrite}
		\begin{split}
			\mathrm{(II)} \ = \  &- \ii \mu \int_0^t \ds \int \dy \, U (t-s, x - y) \left( \delta (y) u_{\varepsilon,s} (y)
			\Big( \int \dz\, w_\varepsilon (z) |u_{\varepsilon,s} (z)|^2  \Big) - \delta (y) u_{\varepsilon,s} (y)
			| u_{\varepsilon,s} (0) |^2 \right) \\
			\ = \  &- \ii\mu \int_0^t \ds  \, U (t-s, x)  u_{\varepsilon,s} (0) \left( \Big( \int \dz\,  w_\varepsilon (z) |u_{\varepsilon,s}(z) |^2  \Big) - | u_{\varepsilon,s} (0) |^2 \right) \\
			\ = \ &
			- \ii \mu \int_\R \d s \, U(t-s,x) \, \varsigma_{\eps}^t (s)
			\;,
		\end{split}
	\end{equation}
	where we denoted 
	\[
		\varsigma_{\varepsilon}^t (s) :=  \chi_{[0,t]}
		u_{\varepsilon,s} (0) \left( \Big( \int \dz  w_\varepsilon (z) |u_{\varepsilon,s} (z)|^2 \Big) - | u_{\varepsilon,s} (0) |^2 \right) =\chi_{[0,t]}  u_{\varepsilon,s} (0)  \int \dz\,  w_\varepsilon (z) \left( |u_{\varepsilon,s} (z)|^2   - | u_{\varepsilon,s} (0) |^2 \right) \; ,
	\]
	with $\chi_I$ the characteristic function of the interval $I$.
	
	As a first step we prove a uniform bound for $\varsigma_{\varepsilon}^t:$
	\begin{equation} \label{eq:etauniform}
		\begin{split}
			|\varsigma_{\varepsilon}^t (s)| \ \leq \ & | u_{\varepsilon,s} (0) | \int \dz\,  w_\varepsilon (z) \left| |u_{\varepsilon,s}(z) |^2   - | u_{\varepsilon,s} (0) |^2 \right|  \\
			\ = \ & | u_{\varepsilon,s} (0) |  \int \dz\,  w_\varepsilon (z) \left( |u_{\varepsilon,s} (z)|   + | u_{\varepsilon,s} (0) |\right) \left| |u_{\varepsilon,s} (z)| - | u_{\varepsilon,s} (0) | \right|
			\\
			\ \leq \ & C \| u_{\varepsilon,s} \|_{H^1_x}^2  \int \dz\,  w_\varepsilon (z)\left| u_{\varepsilon,s} (z) -  u_{\varepsilon,s} (0) \right|.
	\end{split} \end{equation}
	Then, proceeding like in \eqref{eq:upoint},
	\begin{equation} \label{cuesta}
		\begin{split}
			|\varsigma_{\varepsilon}^t (s)| 	\ \leq \ & C \| u_{\varepsilon,s} \|_{H^1_x}^3  \int \dz \, w_\varepsilon (z)
			\sqrt{|z|} \\
			\ = \ & C \sqrt{\varepsilon}\| u_{\varepsilon,s} \|_{H^1_x}^3  \int \dz \, w (z)\sqrt{|z|}
			\\
			\ = \ & C \sqrt{\varepsilon},
		\end{split}
	\end{equation}
	Going back to the estimate of the $L^2$-norm of $\mathrm{(II)}$, one obtains
	\begin{equation} 
		\begin{split}
			\| \mathrm{(II)} \|^2 \ = \ & \mu^2 \int_\R \dx \, \left|
			\int_\R \ds \, U(t-s,x) \, \varsigma_\varepsilon^t (s)
			\right|^2 \ = \ \mu^2 \int_\R \dk \, \left| \int_\R \ds \, e^{\ii k^2s} \, \varsigma_\varepsilon^t (s) \right|^2
			\\
			\ = \ & 2 \mu^2 \int_0^\infty 
			\dk \, \left| \int_\R \ds \, e^{\ii k^2s} \, \varsigma_\varepsilon^t (s) \right|^2 \ = \ 
			2 \mu^2 \int_0^1 
			\dk \, \left| \int_\R \ds \, e^{\ii k^2s} \, \varsigma_\varepsilon^t (s) \right|^2 +
			2 \mu^2 \int_1^\infty 
			\dk \, \left| \int_\R \ds \, e^{\ii k^2s} \, \varsigma_\varepsilon^t (s) \right|^2 \\
			\ = \ & C \varepsilon t^2 +
			2 \pi \mu^2 \int_1^\infty \rd\omega \, \frac {\left| \widehat \varsigma_\varepsilon^t (-\omega)\right|^2} {\sqrt{\omega}},
		\end{split}
	\end{equation}  
	where for the first term we used \eqref{eq:etauniform} and the fact that $\varsigma_\varepsilon^t$ vanishes outside the interval $[0,t]$, while in the second 
	we denoted by $\widehat \varsigma_\varepsilon^t$ the Fourier transform of $\varsigma_\varepsilon^t.$ Now, exploiting the Plancherel's formula and the fact that $\omega \geq 1$,
	\begin{equation}
		\begin{split}
			\label{eq:IIestimate}
			\|\mathrm{(II)} \|^2 \ = \ &
			C \varepsilon \, t^2 + 2 \pi \mu^2 \| \varsigma_\varepsilon^t \|^2 \ \leq \
			C \varepsilon (t + t^2),
		\end{split}
	\end{equation}
	where in the last inequality we used \eqref{cuesta} abd that $\varsigma_\varepsilon^t$ is supported in $[0,t]$.
	
	Let us now estimate $\| \mathrm{(III)} \|$. Define the function
	\[
		h_\varepsilon (s) \ : = \ | u_{\varepsilon,s } (0) |^2  u_{\varepsilon,s } (0) - | \varphi_s (0) |^2 \varphi_s (0),
	\] 
	so that
	\[
		{\rm{(III)}} = - \ii  \mu \int_0^t \d s \, U (t-s) \, (\delta u_{\eps,s} |u_{\eps,s} (0)|^2 - \delta \varphi_s |\varphi_s (0)|^2) \ = \ -\ii \mu \int_0^t \d s \, U(t-s,x) h_\eps (s)
	\]
	then, by \eqref{comput-propag} we bound
	\begin{equation} \label{eq:estimIIInorm}
		\begin{split}
			\| \mathrm{(III)} \|^2 \ = \ & \mu^2 \int \dx \left|\int_0^t \ds \,
			U (t-s, x) \, h_\varepsilon (s)
			\right|^2 \\
			\ = \ & \mu^2 \int_{[0,t]^2} \ds \, \d s' \, h_\varepsilon (s) \overline{h_\varepsilon (s')} \int \dx \, U (t-s,x) \overline{U(t-s',x)}
			\\
			\ = \  & \sqrt{2 \pi}\mu^2 \int_{[0,t]^2} \ds \, \ds' \, h_\varepsilon (s) \overline{h_\varepsilon (s')}  \, U (s'-s,0) \\
			\ \leq \ & C \int_{[0,t]^2} \ds \, \ds' \, \frac{|h_\varepsilon (s)||h_\varepsilon (s')|}{\sqrt{|s-s'|}}.
		\end{split}
	\end{equation}
	To estimate  $|h_\varepsilon(s)|$ we first use the triangular inequality and Remark \ref{rem:epsuniform}:
	\begin{equation}
		\label{eq:estimh}
		\begin{split}
			|h_\varepsilon (s) | \ \leq \ &
			\left| | u_{\varepsilon,s } (0) |^2  u_{\varepsilon,s } (0) - | \varphi_s (0) |^2 \varphi_s (0) \right| \\
			\ = \ &
			\left| \left(| u_{\varepsilon,s } (0) |^2  - |\varphi_s (0) |^2 \right)
			u_{\varepsilon,s } (0) + | \varphi_s (0) |^2 (u_{\varepsilon,s } (0) - \varphi_s (0) ) \right| \\
			\ \leq \ &  \left| | u_{\varepsilon,s } (0) |^2  - |\varphi_s (0) |^2 \right|
			| u_{\varepsilon,s } (0)| + | \varphi_s (0) |^2 \left| u_{\varepsilon,s } (0) - \varphi_s (0) \right| \\
			\ \leq \ &  \left| | u_{\varepsilon,s } (0) |  - |\varphi_s (0) | \right|
			\left( | u_{\varepsilon,s } (0) |  + |\varphi_s (0) | \right)
			| u_{\varepsilon,s } (0)| + | \varphi_s (0) |^2 \left| u_{\varepsilon,s } (0) - \varphi_s (0) \right| \\
			\ \leq \ &  \left| u_{\varepsilon,s } (0)   - \varphi_s (0)  \right|
			\left( | u_{\varepsilon,s } (0) |  + |\varphi_s (0) | \right)
			| u_{\varepsilon,s } (0)| + | \varphi_s (0) |^2 \left| u_{\varepsilon,s } (0) - \varphi_s (0) \right| \\
			\ \leq \ & \left(  | u_{\varepsilon,s } (0)|^2 +  | u_{\varepsilon,s } (0)| | \varphi_s (0) |+| \varphi_s (0) |^2 \right)
			\left| u_{\varepsilon,s } (0) - \varphi_s (0) \right| \\
			\ \leq \ & C \left| u_{\varepsilon,s } (0) - \varphi_s (0) \right|,
		\end{split}
	\end{equation}
	then, in order to bound $\left| u_{\varepsilon,s } (0) - \varphi_s (0) \right|$, we evaluate at $x = 0$ and $t=s$ the terms in \eqref{eq123}. We start by $\mathrm{(I)}$. Specializing \eqref{eq:IaIb2} at $x=0$ and $t=s$ we obtain
	\begin{equation} \begin{split} \label{eq:IaIb2zero}
			\mathrm{(I0)}  = & \ -\ii \mu \int_0^s \ds_1 \langle w_\varepsilon, |u_{\varepsilon,s_1} |^2 \rangle \int_\R \dy \, w (y) u_{\varepsilon,s_1} (\varepsilon y)
			\left( U (s-s_1, - \varepsilon y) - U (s-s_1,0)
			\right)  \\
			&  \ -\ii \mu \int_0^s \ds_1 \langle w_\varepsilon, |u_{\varepsilon,s_1} |^2 \rangle \; U (s-s_1,0) \int_\R \dy \, w (y)
			\left( u_{\varepsilon,s_1} (\varepsilon y) -
			u_{\varepsilon,s_1} (0)
			\right) \\
			= & \ \mathrm{(Ia0)} + \mathrm{(Ib0)}
		\end{split}
	\end{equation}
	Proceeding like in \eqref{expsuboptimal} one has
	\[
		\left| U (s-s_1, - \varepsilon y) - U (s-s_1, 0) \right| \ \leq \ C \frac {\varepsilon^{2 \eta} | y|^{2 \eta}} {|s-s_1|^{\frac 1 2 + \eta}}, \qquad 0 < \eta < \frac 1 2, 
	\]
	so we immediately conclude
	\begin{equation}
		\label{eq:estimIazero}
		\left| \mathrm{(Ia0)} \right| \ \leq \ C \varepsilon^{2 \eta} \int_0^s  \ds_1 \, \frac{\| u_{\varepsilon,s_1} \|_{H^1}^3}{|s-s_1|^{\frac{1}2 + \eta}} \int \dy \, |y|^{2 \eta} w (y) \ \leq \ C \varepsilon^{2 \eta} s^{\frac 1 2 - \eta}.
	\end{equation}
	To estimate $|\mathrm{(Ib0)}|$, by \eqref{eq:upoint} we obtain
	\begin{equation}
		\label{eq:estimIbzero}
		|\mathrm{(Ib0)}| \ \leq \ C \sqrt{\varepsilon}
		\int_0^s \ds_1 \, \frac{\| u_{\varepsilon,s_1} \|_{H^1}^3}{\sqrt{s-s_1}}
		\int \dy \, \sqrt{|y|} w (y)
		\ \leq \ C \sqrt {\varepsilon s}.
	\end{equation}
	Let us now estimate $| \mathrm{(II0)} |$, namely the value of $\mathrm{(II)}$ at $x = 0$ and $t=s$. From \eqref{eq:2rewrite} we get
	\[
		\mathrm{(II0)} = -\ii \mu \int_0^s \ds_1 \, U (s-s_1, 0) \, \varsigma_\varepsilon^s (s_1)
	\]
	and by \eqref{eq:etauniform}
	\begin{equation}
		\label{eq:II0estimate}
		| \mathrm{(II0)} | \ \leq \ C \sqrt \varepsilon \int_0^s \frac {\ds_1}{\sqrt{s-s_1}} \ = \ C \sqrt{\varepsilon s}.
	\end{equation}
	It remains to estimate the term (III0) defined as the evaluation of (III) in \eqref{eq123} at 
	$x=0$ and $t=s$, namely
	\[
		\mathrm{(III0)} \ = \ - \ii \mu \int_0^s \ds_1 \, \frac{h_\varepsilon(s_1)}{\sqrt{4 \pi \ii (s-s_1)}}.
	\]
	Thus, collecting \eqref{eq:estimh}, \eqref{eq:estimIazero}, \eqref{eq:estimIbzero}, and \eqref{eq:II0estimate}, we get
	\begin{equation}
		\begin{split}
			| h_\varepsilon (s) | \ \leq \ & C
			\left| u_{\varepsilon,s } (0) - \varphi_s (0) \right| \\ \ \leq \ & C \left( \mathrm{|(Ia0)|+|(Ib0)|+|(II0)|+|(III0)|} \right) \\
			\ \leq \ & C \varepsilon^{2 \eta} s^{\frac 1 2 - \eta} + C {\sqrt{\varepsilon s}} +
			C \int_0^s \ds_1 \, \frac{|h_\varepsilon (s_1)|}{\sqrt{s-s_1}},
			\qquad \eta < \frac 1 2 
			\\ \leq & C \sqrt \eps s^{\frac 1 4}		
			+
			C \int_0^s \ds_1 \, \frac{|h_\varepsilon (s_1)|}{\sqrt{s-s_1}},
		\end{split}
	\end{equation}
	where 
	we set $\eta = \frac 1 4.$
	By Lemma \ref{gronwalltype} with $A(t) = 
	C \sqrt{\varepsilon} t^{\frac 1 4} 
	$, and since by well-known properties of Abel integral operators \cite{Gorenflo-Vessella}
	$D (t) = C \sqrt \varepsilon (t^{\frac 1 4} + t^{\frac 3 4})
	$, one gets
	\begin{equation} \label{esthgronwall}
		|h (s)| \ \leq \ C \sqrt{\varepsilon} \left(s^{\frac 1 4 } + s^{\frac 3 4} +  e^{Cs} \int_0^s ds_1 \,  s_1^{\frac 1 4}  
		e^{-C s_1} \right) \ \leq \ C \sqrt \eps \left( s^{\frac 1 4} e^{Cs} + s^{\frac 3 4} \right) \ \leq \ C \sqrt \eps \, s^{\frac 1 4} e^{Cs}
	\end{equation}
	where we used $s \geq 0$, $\int_0^s e^{-C s_1} \, \d s_1 \leq C$, and possibly modified the value of the constant $C$ from line to line.
	Now, plugging \eqref{esthgronwall} in \eqref{eq:estimIIInorm} for $h(s)$ and for $h(s')$
	yields
	\begin{equation}
		\begin{split}
			\| {\rm{(III)}} \|^2  \ \leq & \ C
			\int_0^t \ds \, | h (s) |
			\int_0^s \ds' \, \frac{|h(s')|}{\sqrt{s-s'}}\\ \ \leq & \ C \eps
			\int_0^t  {\d s} \, s^{\frac 1 4} e^{Cs}\int_0^s \d s' \frac{(s')^{\frac 14 } e^{C s'}}{\sqrt{s-s'}} \\
			\leq & \ C \varepsilon \, t^{2} e^{Ct}
		\end{split}
	\end{equation}
	Together with \eqref{eq:I-final} and \eqref{eq:IIestimate}, the last inequality  gives
	\[
		\| u_{\varepsilon,t} - \varphi_t \|_2^2 \ \leq \ C \varepsilon^{2 \eta} (t + t^2) \, e^{Ct} \qquad 0 < \eta < \frac 1 2, \, t \geq 0.
	\]
	By using the fact that exponential function dominates polynomial, we get the result.
\end{proof}

\section{From the $N$-body to the one-body problem: microscopic derivation}\label{sec:derivation}

This section is the core of the present work, as it is devoted to the  proof of the transition from the $N$-body, linear dynamics driven by \eqref{eq:manybody} to the one-body regime 
identified with the concentrated Hartree equation \eqref{eq:intermediateintro}. As proved in Section \ref{sec:1bodyConvergence}, such a solution converges to the solution of \eqref{eq:limiteq} as $\varepsilon$ vanishes. 


As the structure of the proof is quite involved, for the convenience of the reader we list here the main steps to be followed in the next sections.

\begin{enumerate}[Step 1:]
	
	\item 
	First, in Section \ref{sec:apriori}, we prove an a priori lower bound for the powers of the $N$-body energy. 
	
	\item  Second,
	we introduce the formalism of second quantization. For the sake of self-containedness, we review some established results on Fock spaces and on creation and annihilation operators. 
	This is detailed in Section \ref{sec:Fock_space}.
	
	\item 
	We rephrase the $N$-body dynamics and the reduced density matrix 
	in the formalism of the  Fock space. After introducing coherent states and Weyl operators,  we modify  the reduced density matrix 
	by introducing the coherent states (Section \ref{sec:unitary}).
	
	\item Exploiting the fact that the coherent states are eigenstates of the annihilation operator, we express both creation and annihilation operators in terms of the associated eigenvalues. This method is often referred to as the c-number substitution (see \eqref{eq:introducingU}).
	
	
	\item After the c-number substitution, in order to  prove  Theorem \ref{thm:main} we are led to control the number operator fluctuation dynamics of an alternative generator, related to the dynamics of Weyl operators (outlined in \eqref{eq:derivative_decomposition}).
	
	\item Because this new generator does not conserve the parity of particle numbers and this makes it harder to control the number operator fluctuation dynamics, we further modify the new generator as in \eqref{eq:def_mathcalU}.
	
	\item \label{step:previous} Since the modified generator is still hard to be investigated directly, we consider another modified generator which approximates the modified generator given in \eqref{eq:def_mathcalUM} in the large $N$ limit, see Section \ref{sec:fluctuations}.
	
	\item By controlling the number operator fluctuation dynamics associated with the generator given in Step \ref{step:previous} and providing the vicinity of each approximation to its predecessor, we conclude the proof of the main theorem.
	
\end{enumerate}

\subsection{A priori estimate}
\label{sec:apriori}

We second quantize the system. Then we consider a Fock space evolution
with a coherent state. To do so, we will need to modify either Proposition 3.1 of \cite{kirkpatrick2011derivation} or Proposition
2.1 of \cite{ChenPavlovic2011}.

\begin{lemma}[Lemma 2.2 of \cite{ChenPavlovic2011}]
	\label{lem:PotentialBound_by_Sobolev}
	For any $W:\mathbb{R}\times\mathbb{R}\to\mathbb{R}$ in $L^p(\bR^2)$, the estimate
	\[
	\langle\psi_{1},W(x_{1},x_{2})\psi_{2}\rangle\leq C_{p}\|W\|_{L_{x_{1},x_{2}}^{p}}\|\psi_1\|_{L_{x_{1},x_{2}}^{2}}\|\langle\partial_{x_{1}}\rangle\langle\partial_{x_{2}}\rangle\psi_{2}\|_{L_{x_{1},x_{2}}^{2}}
	\]
	holds for any $p>1$.
\end{lemma}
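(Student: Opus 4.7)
The natural approach is Hölder's inequality followed by a tensor-product Sobolev embedding that exploits the $H^1$-in-each-variable regularity of $\psi_2$. The only genuinely one-dimensional input is the fact that $H^1(\bR)$ embeds into $L^\infty(\bR)$; by using this in each coordinate separately the anisotropic norm $\|\langle \partial_{x_1}\rangle\langle \partial_{x_2}\rangle\psi_2\|_{L^2}$ will control $\|\psi_2\|_{L^q(\bR^2)}$ for every $q \in [2,\infty]$.

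\textbf{Step 1 (Hölder).} By the triple Hölder inequality with exponents $2,\, p,\, q$ with $\tfrac{1}{2}+\tfrac{1}{p}+\tfrac{1}{q}=1$,
\[
|\langle \psi_1, W \psi_2\rangle| \ \leq \ \|\psi_1\|_{L^2_{x_1,x_2}}\,\|W\|_{L^p_{x_1,x_2}}\,\|\psi_2\|_{L^q_{x_1,x_2}},
\]
where $q=\tfrac{2p}{p-2}\in[2,\infty]$ for $p\in[2,\infty]$.

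\textbf{Step 2 (Sobolev bound on $\psi_2$).} The key ingredient is the estimate
\[
\|\psi_2\|_{L^\infty(\bR^2)} \ \leq \ C\,\|\langle \partial_{x_1}\rangle\langle \partial_{x_2}\rangle \psi_2\|_{L^2(\bR^2)}.
\]
I would prove this by Fourier inversion and Cauchy--Schwarz:
\[
|\psi_2(x_1,x_2)| \ \leq \ \|\wh\psi_2\|_{L^1} \ \leq \ \bigl\|\langle \xi_1\rangle^{-1}\langle \xi_2\rangle^{-1}\bigr\|_{L^2(\bR^2)}\,\bigl\|\langle \xi_1\rangle\langle \xi_2\rangle \wh\psi_2\bigr\|_{L^2(\bR^2)},
\]
where the first factor on the right is a finite product of two 1D integrals, and Plancherel identifies the second factor with $\|\langle \partial_{x_1}\rangle\langle \partial_{x_2}\rangle \psi_2\|_{L^2}$. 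Combined with the trivial bound $\|\psi_2\|_{L^2}\leq \|\langle \partial_{x_1}\rangle\langle \partial_{x_2}\rangle \psi_2\|_{L^2}$ and interpolation $\|\psi_2\|_{L^q}\leq \|\psi_2\|_{L^2}^{2/q}\|\psi_2\|_{L^\infty}^{1-2/q}$, this yields $\|\psi_2\|_{L^q}\leq C_p\|\langle \partial_{x_1}\rangle\langle \partial_{x_2}\rangle\psi_2\|_{L^2}$ for every $q\in[2,\infty]$. Plugging this into Step 1 concludes the proof for $p\geq 2$.

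\textbf{Main obstacle.} The delicate range is $p\in(1,2)$, where the dual exponent $q=\tfrac{2p}{p-2}$ becomes negative and the direct triple-Hölder arrangement fails. To reach this range one must redistribute more of the burden onto $\psi_2$: since Step 2 gives $\psi_2\in L^\infty$ uniformly, I would decompose $|\psi_1||W||\psi_2| = (|\psi_1||\psi_2|^{1-\theta})\cdot(|W||\psi_2|^{\theta})$ and then apply Hölder with paired exponents so that the power $\theta\in(0,1)$ is tuned to place $|W||\psi_2|^\theta$ in $L^{p}$ and $|\psi_1||\psi_2|^{1-\theta}$ in $L^{p'}$, using $\|\psi_2\|_{L^\infty}\leq C\|\langle\partial_{x_1}\rangle\langle\partial_{x_2}\rangle\psi_2\|_{L^2}$ to dispose of the extra factors of $\psi_2$ and $\|\psi_1\|_{L^2}$ together with the $L^\infty$-bound on $\psi_2$ to control $\|\psi_1|\psi_2|^{1-\theta}\|_{L^{p'}}$ by $\|\psi_1\|_{L^2}\|\psi_2\|_{L^\infty}^{1-\theta}$ when $p'\geq 2$. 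This is the only step that requires some care; once it is set up, the conclusion follows as before, with the constant $C_p$ blowing up as $p\downarrow 1$ through the $\|\langle\xi_1\rangle^{-1}\langle\xi_2\rangle^{-1}\|_{L^2}$-type factors absorbed into the Sobolev embedding of Step 2.
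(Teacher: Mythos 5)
For $p\ge 2$ your argument is correct and essentially the standard one: triple H\"older with exponents $(2,p,q)$ where $\tfrac12+\tfrac1p+\tfrac1q=1$, the mixed Sobolev bound $\|\psi_2\|_{L^\infty}\le\|\langle\xi_1\rangle^{-1}\langle\xi_2\rangle^{-1}\|_{L^2}\,\|\langle\xi_1\rangle\langle\xi_2\rangle\wh\psi_2\|_{L^2}$, and Lyapunov interpolation with the trivial $L^2$ bound to cover all $q\in[2,\infty]$. (The paper only cites \cite{ChenPavlovic2011} for this lemma and gives no in-text proof, so there is nothing to compare against line by line.)

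The sketch for $1<p<2$, however, does not close. After the rebalancing $\int|\psi_1||W||\psi_2|\le\|W|\psi_2|^\theta\|_{L^p}\,\|\psi_1|\psi_2|^{1-\theta}\|_{L^{p'}}$, peeling off $|\psi_2|^{1-\theta}$ in $L^\infty$ leaves you with $\|\psi_1\|_{L^{p'}}$; for $p<2$ one has $p'>2$, a norm that is not controlled by $\|\psi_1\|_{L^2}$ on $\bR^2$ (there is no inclusion $L^2\hookrightarrow L^{p'}$ for $p'>2$). The obstacle is structural, not technical: take $\psi_2$ a fixed bump with $\psi_2(0,0)\ne 0$ and $W_\delta(x_1,x_2)=\delta^{-2/p}\chi(x_1/\delta)\chi(x_2/\delta)$ with $\chi$ a bump, so $\|W_\delta\|_{L^p}$ is $\delta$-independent. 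Choosing $\psi_1$ proportional to $W_\delta\psi_2$ with $\|\psi_1\|_{L^2}=1$ turns the left-hand side into $\|W_\delta\psi_2\|_{L^2}\sim\delta^{1-2/p}$, which diverges as $\delta\to 0$ whenever $p<2$, while the right-hand side stays bounded. So with $\psi_1$ carrying only $L^2$ regularity the estimate cannot hold on the full range $p>1$: one either needs $p\ge 2$, or $\psi_1$ must also carry derivatives, as in the symmetric bound $\langle\psi,w(x)\psi\rangle\le C\|w\|_{L^p}\langle\psi,(1-\partial_x^2)\psi\rangle$ that the paper actually invokes inside the proof of Proposition~\ref{prop:enest}. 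Before investing further effort in the range $p<2$, re-check the precise hypotheses of Lemma~2.2 in \cite{ChenPavlovic2011}.
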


\begin{proposition}[A priori energy bounds]\label{prop:enest}
	There exists a constant $C>0$, and for every $k$, there exists $N_{0}(k)$ such that for all $N\geq N_{0}(k)$,
	\[
	\langle\Psi,(H_{N}+N)^{k}\Psi\rangle\geq C^{k}N^{k}\langle\Psi,(1-\partial^2_{x_{1}})\dots(1-\partial^2_{x_{k}})\Psi\rangle
	\]
	for all $\Psi\in L_{s}^{2}(\bR^{N})$.
\end{proposition}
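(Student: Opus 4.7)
The statement is an a priori bound of a kind that is by now classical in the mean-field and Gross--Pitaevskii literature: see Proposition 3.1 of \cite{kirkpatrick2011derivation} and Proposition 2.1 of \cite{ChenPavlovic2011}, of which this is the one-dimensional three-body counterpart. I would proceed by induction on $k$. Introduce the shorthand $S_j:=1-\partial^2_{x_j}$, $K_N:=\sum_{j=1}^{N}(-\Delta_{x_j})$, and $V_N:=\frac{\mu}{N}\sum_{1\leq i<\ell\leq N}W_\varepsilon(x_i,x_\ell)$. Since $W_\varepsilon\geq 0$ and $H_N+N=\sum_j S_j+V_N$, one immediately has the operator inequality $H_N+N\geq \sum_j S_j$. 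Applied to a bosonic symmetric $\Psi$ this yields the base case $k=1$,
\[
\langle\Psi,(H_N+N)\Psi\rangle\;\geq\;\sum_{j=1}^{N}\langle\Psi,S_j\Psi\rangle\;=\;N\,\langle\Psi,S_1\Psi\rangle,
\]
and the case $k=0$ is trivial.

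\textbf{Induction step.} I would run the induction in steps of two, which is the standard pattern as it allows a clean symmetric sandwich. Assuming the bound is known for $k$, write
\[
\langle\Psi,(H_N+N)^{k+2}\Psi\rangle\;=\;\langle (H_N+N)\Psi,\,(H_N+N)^{k}\,(H_N+N)\Psi\rangle,
\]
observing that $(H_N+N)\Psi$ remains symmetric. The inductive hypothesis, applied to $(H_N+N)\Psi$, then reduces the problem to producing a constant $C'>0$, independent of $N$, such that for all $N\geq N_0(k)$ and all symmetric $\Psi$,
\[
\langle\Psi,(H_N+N)\,S_1\cdots S_k\,(H_N+N)\Psi\rangle\;\geq\;(C')^{2}\,N^{2}\,\langle\Psi,S_1\cdots S_{k+2}\Psi\rangle.
\]
Expanding $H_N+N=\sum_j S_j + V_N$, the leading contribution is the square of the kinetic factor: by bosonic symmetry, the terms with outer indices $j,j'\notin\{1,\dots,k\}$ and $j\neq j'$ deliver exactly $(N-k)(N-k-1)\,\langle\Psi,S_1\cdots S_{k+2}\Psi\rangle$, which is comparable to $N^2\,\langle\Psi,S_1\cdots S_{k+2}\Psi\rangle$ once $N\geq N_0(k)$.

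\textbf{Main obstacle.} The real work lies in absorbing all the remaining terms into this main contribution with the correct sign: the outer diagonal $j=j'$ summands, the summands in which some outer index coincides with $\{1,\dots,k\}$, and the three mixed contributions $(\sum_j S_j)\,S_1\cdots S_k\,V_N$, its adjoint, and $V_N\,S_1\cdots S_k\,V_N$. Summands that are manifestly non-negative are dropped. The potentially dangerous ones are handled by commuting $V_N$ through $S_1\cdots S_k$; each commutator pulls down one or two derivatives of $W_\varepsilon$, which lies in $L^{p}(\mathbb{R}^{2})$ for every $p>1$ because $w\in\cS(\mathbb{R})$, and is therefore estimated by Lemma \ref{lem:PotentialBound_by_Sobolev} in terms of $\|\Psi\|$ and lower-order Sobolev norms already bounded by inductive hypothesis. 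The decisive combinatorial observation is that only $O(kN)$ of the $\binom{N}{2}$ pairs $(i,\ell)$ in $V_N$ involve some index from $\{1,\dots,k\}$, so after the prefactor $\mu/N$ these contributions are $O(k)$; the remaining pairs commute with $S_1\cdots S_k$ and enter with a non-negative sign because $V_N\geq 0$. I expect the main difficulty to concentrate in this bookkeeping: balancing the $\varepsilon$-dependent constants coming from Lemma \ref{lem:PotentialBound_by_Sobolev} against the $1/N$ combinatorial gains so that all error terms are bounded by $\delta\,N^{2}\,\langle\Psi,S_1\cdots S_{k+2}\Psi\rangle$ for a small $\delta$, yielding the desired bound with a constant $C$ that is uniform in $k$.
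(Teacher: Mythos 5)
Your plan follows essentially the same route as the paper: a two-step induction applying the hypothesis to $(H_N+N)\Psi$, the split of $H_N+N$ into a kinetic part over indices outside $\{1,\dots,k\}$ and the remainder, the identification of the leading $(N-k)(N-k-1)$ kinetic cross term, the use of Lemma \ref{lem:PotentialBound_by_Sobolev} after commuting the potential through the $S_j$'s, and the absorption of $\varepsilon$-dependent error constants for $N\geq N_0(k)$. One small correction: the term $V_N\,S_1\cdots S_k\,V_N$ you list among the ``potentially dangerous'' ones is in fact manifestly non-negative (it equals $\langle V_N\Psi,S_1\cdots S_k\,V_N\Psi\rangle$ with $S_1\cdots S_k\geq 0$) and is simply dropped, as the paper does with the whole $h_2\cdots h_2$ block.
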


\begin{proof}
	We proceed by a two-step induction over $k\geq0$. For $k=0$ the statement is trivial and for $k=1$ it follows from the positivity of the potential. Suppose the claim holds for all $k\leq n$. 
	We prove it holds for $k=n+2$. In fact, from the induction assumption, and using the notation $S_{i}=(1-\partial^2_{x_{i}})^{1/2}$, we find
	\begin{equation}
		\langle\psi,(H_{N}+N)^{n+2}\psi\rangle\geq C^{n}N^{n}\langle\psi,(H_{N}+N)S_{1}^{2}\dots S_{n}^{2}(H_{N}+N)\psi\rangle\,.
	\end{equation}
	Now, writing $H_{N}+N=h_{1}+h_{2}$, with
	\[
	h_{1}=\sum_{j=k+1}^{N}S_{j}^{2}\qquad\text{and } \qquad h_{2}=\sum_{j=1}^{k}S_{j}^{2}+\frac{1}{N}\sum_{i<j}^{N}\varepsilon^{-2}w(\varepsilon^{-1}(x_{i}))\,w(\varepsilon^{-1}(x_{j})),
	\]
	it follows that
	\begin{equation}\begin{split}
			\langle\psi, & (H_{N}+N)^{n+2}\psi\rangle\\
			&\; \geq \;  C^{n}N^{n}\langle\psi,h_{1}S_{1}^{2}\dots S_{n}^{2}h_{1}\psi\rangle\\
			& +C^{n}N^{n}\left(\langle\psi,h_{1}S_{1}^{2}\dots S_{n}^{2}h_{2}\psi\rangle+\langle\psi,h_{2}S_{1}^{2}\dots S_{n}^{2}h_{1}\psi\rangle\right)\\
			&\; \geq \;  C^{n}N^{n}(N-n)(N-n-1)\langle\psi,S_{1}^{2}\dots S_{n+2}^{2}\psi\rangle+C^{n}N^{n}(N-n)\langle\psi,S_{1}^{4}S_{2}^{2}\dots S_{n+1}^{2}\psi\rangle\\
			& +C^{n}N^{n}\frac{(N-n)}{N}\varepsilon^{-2}\sum_{i<j}^{N}\left(\langle\psi,S_{1}^{2}\dots S_{n+1}^{2}w(\varepsilon^{-1}(x_{i}))\,w(\varepsilon^{-1}(x_{j}))\,\psi\rangle+\text{complex conjugate}\right).
	\end{split}\end{equation}
	Because of the permutation symmetry of $\psi$, we obtain
	\begin{equation}\label{eq:bd}\begin{split}
			\langle\psi, & (H_{N}+N)^{n+2}\psi\rangle\\
			&\; \geq \;  C^{n+2}N^{n+2}\langle\psi,S_{1}^{2}\dots S_{n+2}^{2}\psi\rangle+C^{n+1}N^{n+1}\langle\psi,S_{1}^{4}S_{2}^{2}\dots S_{n+1}^{2}\psi\rangle\\
			& +C^{n}N^{n-1}\varepsilon^{-2}(N-n)^{2}(N-n-1)\left(\langle\psi,S_{1}^{2}\dots S_{n+1}^{2}w(\varepsilon^{-1}(x_{n+2}))\,w(\varepsilon^{-1}(x_{n+3}))\,\psi\rangle+\text{c.c.}\right)\\
			=& +C^{n}N^{n-1}\varepsilon^{-2}(N-n)^{2}(n+1)\left(\langle\psi,S_{1}^{2}\dots S_{n+1}^{2}w(\varepsilon^{-1}(x_{1}))\,w(\varepsilon^{-1}(x_{n+2}))\,\psi\rangle+\text{c.c.}\right)\\
			& +C^{n}N^{n-1}\varepsilon^{-2}(N-n)(n+1)n\left(\langle\psi,S_{1}^{2}\dots S_{n+1}^{2}w(\varepsilon^{-1}(x_{1}))\,w(\varepsilon^{-1}(x_{2}))\,\psi\rangle+\text{c.c.}\right).
	\end{split}\end{equation}
	The last three terms are the errors we need to control. 
	First of all, we remark that the first error term is positive, and thus can be neglected (because we assumed $w\geq0$). 
	In fact, since $w(\varepsilon^{-1}(x_{n+2}))\,w(\varepsilon^{-1}(x_{n+3}))\,\psi\rangle$ commutes with all derivatives $S_{1},\dots,S_{n}$, we have
	\[\begin{split}
		&\langle\psi,S_{1}^{2}\dots S_{n+1}^{2} w(\varepsilon^{-1}(x_{n+2}))\,w(\varepsilon^{-1}(x_{n+3}))\,\psi\rangle\psi\rangle\\
		&\quad =\int\mathrm{d}\mathbf{x}\;w(\varepsilon^{-1}(x_{n+2}))\,w(\varepsilon^{-1}(x_{n+3}))\,\psi\rangle|(S_{1}\dots S_{n+1}\psi)(\mathbf{x})|^{2}\geq0\,.
	\end{split}\]
	As for the second error term on the r.h.s. of \eqref{eq:bd}, we bound it from below by
	\begin{equation}\begin{split}
			&C^{n}N^{n-1}\varepsilon^{-2} (N-n)^{2}(n+1)\left(\langle\psi,S_{1}^{2}\dots S_{n+1}^{2}w(\varepsilon^{-1}(x_{1}))\,w(\varepsilon^{-1}(x_{n+2}))\,\psi\rangle+\text{c.c.}\right)\\
			&\; \geq \;  -C(n)N^{n+1}\varepsilon^{-2}\left|\langle\psi,S_{n+1}\dots S_{2}S_{1}\,[S_{1},w(\varepsilon^{-1}(x_{1}))]\,S_{2}\dots S_{n+1}\,w(\varepsilon^{-1}(x_{n+2}))\,\psi\rangle\right|\\
			&\; \geq \;  -C(n)N^{n+1}\varepsilon^{-3}\left|\langle\psi,S_{n+1}\dots S_{2}S_{1}\,w'(\varepsilon^{-1}(x_{1}))\,S_{2}\dots S_{n+1}\,w(\varepsilon^{-1}(x_{n+2}))\,\psi\rangle\right|\\
			&\; \geq \;  -C(n)N^{n+1}\varepsilon^{-3}\left(\mu\langle\psi,S_{n+1}^{2}\dots S_{2}^{2}S_{1}^{2}\,|w(\varepsilon^{-1}(x_{n+2}))|^{2}\,\psi\rangle\right.\\
			& \hspace{4cm}\left.+\mu^{-1}\langle\psi,S_{n+1}\dots S_{2}\,\left|\,w'(\varepsilon^{-1}(x_{1}))\,\right|^{2}S_{2}\dots S_{n+1}\,\psi\rangle\right)\\
			&\; \geq \;  -C(n)N^{n+1}\varepsilon^{-3+\frac{1}{p}}\langle\psi,S_{n+2}^{2}S_{n+1}^{2}\dots S_{2}^{2}S_{1}^{2}\psi\rangle
	\end{split}\end{equation}
	for a constant $C(n)$ independent of $N$. Using Lemma \ref{lem:PotentialBound_by_Sobolev} and
	\[
	\langle\psi,w(x)\psi\rangle\leq C\|w\|_{p}\langle\psi,(1-\partial^2_x)\psi\rangle
	\]
	for every $p>1$, we find
	\begin{equation}\label{eq:err0}\begin{split}
			&C^{n}N^{n-1}\varepsilon^{-2} (N-n)^{2}(n+1)\left(\langle\psi,S_{1}^{2}\dots S_{n+1}^{2}w(\varepsilon^{-1}(x_{1}))\,w(\varepsilon^{-1}(x_{n+2}))\,\psi\rangle+\text{c.c.}\right)\\
			&\; \geq \;  -C(n)N^{n+1}\varepsilon^{-2+\frac{2}{p}}\langle\psi,S_{1}^{4}\dots S_{n+2}^{2}\psi\rangle
	\end{split}\end{equation}
	for arbitrary $\eps>0$. On the other hand, the last term on the r.h.s. of \eqref{eq:bd}, can be controlled by
	\begin{equation}\label{eq:bd2}\begin{split}
			&C^{n}N^{n-1}\varepsilon^{-2} (N-n)(n+1)n\left(\langle\psi,S_{1}^{2}\dots S_{n+1}^{2}w(\varepsilon^{-1}(x_{1}))\,w(\varepsilon^{-1}(x_{2}))\,\psi\rangle+\text{c.c.}\right)\\
			&\; \geq \;  -C(n)N^{n}\varepsilon^{-2}\left|\langle\psi,S_{n+1}\dots S_{2}S_{1}\,[S_{1}S_{2},w(\varepsilon^{-1}(x_{1}))\,w(\varepsilon^{-1}(x_{2}))\,]\,S_{3}\dots S_{n+1}\psi\rangle\right|\\
			&\; \geq \;  -C(n)N^{n}\varepsilon^{-4}\left|\langle\psi,S_{n+1}\dots S_{2}S_{1}\,w'(\varepsilon^{-1}(x_{1}))\,w'(\varepsilon^{-1}(x_{2}))\,S_{3}\dots S_{n+1}\psi\rangle\right|\\
			&\; \geq \;  -C(n)N^{n}\varepsilon^{-4}\bigg(\left|\langle\psi,S_{n+1}^{2}\dots S_{1}^{2}\psi\rangle\right|\\
			& \hspace{3cm}-\langle\psi,S_{n+1}\dots S_{3}\,|w'(\varepsilon^{-1}(x_{1}))\,w'(\varepsilon^{-1}(x_{2}))|^{2}\,S_{3}\dots S_{n+1}\psi\rangle\bigg).
	\end{split}\end{equation}
	The second term is bounded by
	\begin{equation}\label{eq:err1}\begin{split}
			&-C(n) N^{n}\varepsilon^{-4}\langle\psi,S_{n+1}\dots S_{3}\,|w'(\varepsilon^{-1}(x_{1}))\,w'(\varepsilon^{-1}(x_{2}))|^{2}\,S_{3}\dots S_{n+1}\psi\rangle\\
			&\; \geq \;  -C(n)N^{n}\varepsilon^{-4+\frac{2}{p}}\langle\psi,S_{1}^{2}\dots S_{n+1}^{2}\psi\rangle.
	\end{split}\end{equation}
	Inserting \eqref{eq:err1} on the r.h.s. of \eqref{eq:bd2}, we find
	\begin{equation}\label{eq:err4}\begin{split}
			&C^{n}N^{n-1}N^{2\beta} (N-n)(n+1)n\left(\langle\psi,S_{1}^{2}\dots S_{n+1}^{2}w(\varepsilon^{-1}(x_{1}))\,w(\varepsilon^{-1}(x_{n+2}))\,\psi\rangle+\text{c.c.}\right)\\
			&\; \geq \;  -C(n)N^{n}\varepsilon^{-4}\langle\psi,S_{1}^{2}\dots S_{n+1}^{2}\psi\rangle-C(n)N^{n}\varepsilon^{-4+\frac{2}{p}}\langle\psi,S_{1}^{2}S_{2}^{2}\dots S_{n+1}^{2}\psi\rangle.
	\end{split}\end{equation}
	Inserting \eqref{eq:err0} and \eqref{eq:err4} on the r.h.s. of \eqref{eq:bd},
	we see that all error terms can be controlled by the two positive
	contributions, and the proposition follows.
\end{proof}

\subsection{Definitions and properties of Fock space\label{sec:Fock_space}}

To investigate the system of $N$-bosons, we want to embed our the system into bosonic Fock space as in \cite{RodnianskiSchlein2009,ChenLeeSchlein2011}.
The bosonic Fock space is a Hilbert space defined by
\[
\mathcal{F}=\bigoplus_{n\geq0}L^{2}(\bR)^{\otimes_{s}n}=\bC\oplus\bigoplus_{n\geq1}L_{s}^{2}(\bR^{n}),
\]
where $L_{s}^{2}(\bR^{n})$ is a subspace of $L^{2}(\bR^{n})$ that is the space of all functions symmetric under any permutation of $x_{1},x_{2},\dots,x_{n}$. Note that we let $L_{s}^{2}(\bR)^{\otimes0}=\bC$ for convenience. 
An element (or state) $\psi\in\mathcal{F}$ is a sequence $\psi=\{\psi^{\left(n\right)}\}_{n\geq0}$ of $n$-particle wave functions $\psi^{\left(n\right)}\in L_{s}^{2}(\bR^{n})$.
The inner product on $\mathcal{F}$ is defined by
\[\begin{aligned}
	\langle\psi_{1},\psi_{2}\rangle & =\sum_{n\geq0}\langle\psi_{1}^{(n)},\psi_{2}^{(n)}\rangle_{L^{2}(\bR^{n})}\\
	& =\overline{\psi_{1}^{(0)}}\psi_{2}^{(0)}+
	\sum_{n\geq0}\int\dx_{1}\dots\dx_{n}\;\overline{\psi_{1}^{(n)}(x_{1},\dots,x_{n})}\,\psi_{2}^{(n)}(x_{1},\dots,x_{n}).
\end{aligned}\]
The vacuum $\Omega:=\left\{ 1,0,0,\dots\right\} \in\mathcal{F}$ is describing zero particle state. Note that an element $\psi\in\mathcal{F}$ is a many body quantum state which can has uncertainty of number of particles. 
Fock space includes the information of the number of particles. Using the following operators, we can add and remove particle from a state in Fock space. For $f\in L^{2}(\bR)$, we define the creation operator $a^{*}(f)$ and the annihilation operator $a(f)$ on $\mathcal{F}$ by
\begin{equation}
	\left(a^{*}(f)\psi\right)^{(n)}(x_{1},\dots,x_{n})=\frac{1}{\sqrt{n}}\sum_{j=1}^{n}f(x_{j})\psi^{(n-1)}(x_{1},\dots,x_{j-1},x_{j+1},\dots,x_{n})\label{eq:creation}
\end{equation}
and
\begin{equation}\label{eq:annihilation}
	\left(a(f)\psi\right)^{(n)}(x_{1},\dots,x_{n})=\sqrt{n+1}\int\dx\overline{f(x)}\psi^{(n+1)}(x,x_{1},\dots,x_{n}).
\end{equation}
By definition, the creation operator $a^{*}(f)$ is the adjoint of the annihilation operator of $a(f)$, and in particular, $a^{*}(f)$ and $a(f)$ are not self-adjoint. We will use the self-adjoint operator $\phi(f)$ defined as
\begin{equation}\label{eq:phi}
	\phi(f)=a^{*}(f)+a(f).
\end{equation}
We also use operator-valued distributions $a_{x}^{*}$ and $a_{x}$ satisfying
\[
	a^{*}(f)=\int\dx\,f(x)a_{x}^{*},\qquad a(f)=\int\dx\,\overline{f(x)}a_{x}
\]
for any $f\in L^{2}(\bR)$. The canonical commutation relation between these operators is
\[
	[a(f),a^{*}(g)]=\langle f,g\rangle,\quad[a(f),a(g)]=[a^{*}(f),a^{*}(g)]=0,
\]
which also assumes the form
\[
[a_{x},a_{y}^{*}]=\delta(x-y),\quad[a_{x},a_{y}]=[a_{x}^{*},a_{y}^{*}]=0.
\]
Starting from the the creation operator and the annihilation operator, we define other useful operators on $\mathcal{F}$. 
For each non-negative integer $n$, we introduce the projection operator onto the $n$-particle sector of the Fock space,
\[
	P_{n}(\psi):=(0,0,\dots,0,\psi^{(n)},0,\dots)
\]
for $\psi=(\psi^{(0)},\psi^{(1)},\dots)\in\mathcal{F}$. 
For simplicity, with a slight abuse of notation, we will use $\psi^{(n)}$ to denote $P_{n}\psi$. 
The {\em number operator} $\mathcal{N}$ is given by
\begin{equation}
	\mathcal{N}=\int\dx\,a_{x}^{*}a_{x},\label{eq:number operator}
\end{equation}
and it satisfies $\left(\mathcal{N}\psi\right)^{(n)}=n\psi^{(n)}$.
In general, for an operator $J$ defined on the one-particle sector $L^{2}(\bR)$, its second quantization $d\Gamma(J)$ is the operator on $\mathcal{F}$ whose action on the $n$-particle sector is given by
\[
	\left(d\Gamma\left(J\right)\psi\right)^{(n)}=\sum_{j=1}^{n}J_{j}\psi^{(n)}
\]
where $J_{j}=1\otimes\dots\otimes J\otimes\dots\otimes1$ is the operator $J$ acting on the $j$-th variable only. The number operator defined above can be understood as the second quantization of the identity, i.e., $\mathcal{N}=d\Gamma(1)$. With a kernel $J(x;y)$ of the operator $J$, the second quantization $d\Gamma(J)$ can be written as
\begin{equation}\label{eq:dgjop}
	d\Gamma(J)=\int\dx\dy\,J(x;y)a_{x}^{*}a_{y},
\end{equation}
which is consistent with \eqref{eq:number operator}.

Since the annihilation and the creation operator form the number operator, it is natural to control the former by the latter. 
To this purpose, we provide the following lemma.
\begin{lemma}\label{lem:basifock-bounds}
	For $\alpha>0$, let $D(\mathcal{N}^{\alpha})=\{\psi\in\mathcal{F}:\sum_{n\geq1}n^{2\alpha}\|\psi^{(n)}\|^{2}<\infty\}$
	denote the domain of the operator $\mathcal{N}^{\alpha}$. 
	For any $f\in L^{2}(\bR)$ and any $\psi\in D(\mathcal{N}^{1/2})$, we have
	\begin{equation}\label{eq:bd-a}\begin{split}
			\|a(f)\psi\| & \leq\|f\|\,\|\mathcal{N}^{1/2}\psi\|,\\
			\|a^{*}(f)\psi\| & \leq\|f\|\,\|(\mathcal{N}+1)^{1/2}\psi\|,\\
			\|\phi(f)\psi\| & \leq2\|f\|\|\left(\mathcal{N}+1\right)^{1/2}\psi\|\,.
	\end{split}\end{equation}
	Moreover, for any bounded one-particle operator $J$ on $L^{2}(\bR)$ and for every $\psi\in D(\mathcal{N})$, we find
	\begin{equation}
		\|d\Gamma(J)\psi\|\leq\|J\|\|\mathcal{N}\psi\|\,.\label{eq:J-bd}
	\end{equation}
\end{lemma}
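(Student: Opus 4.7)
The plan is to handle each of the four inequalities in turn, all by reducing to sector-by-sector estimates on Fock space and applying either Cauchy--Schwarz or the canonical commutation relation $[a(f),a^*(f)] = \|f\|^2$. None of the steps should present a serious obstacle; this is a standard calculation, and the only mild care needed is in the summation indices for the shifted number operator.

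First I would prove the bound on $a(f)$ directly from the definition \eqref{eq:annihilation}. For $\psi \in D(\mathcal{N}^{1/2})$ and each $n\geq 0$, Cauchy--Schwarz in the integration variable $x$ gives
\begin{equation*}
\bigl|(a(f)\psi)^{(n)}(x_1,\dots,x_n)\bigr|^2 \le (n+1)\,\|f\|^2 \int \dx \,\bigl|\psi^{(n+1)}(x,x_1,\dots,x_n)\bigr|^2 .
\end{equation*}
Integrating in $x_1,\dots,x_n$ and summing over $n$ yields
\begin{equation*}
\|a(f)\psi\|^2 \le \|f\|^2 \sum_{n\ge 0}(n+1)\,\|\psi^{(n+1)}\|^2 = \|f\|^2 \sum_{m\ge 1} m\,\|\psi^{(m)}\|^2 = \|f\|^2\,\|\mathcal{N}^{1/2}\psi\|^2,
\end{equation*}
which is the first inequality. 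For the second, I would use the canonical commutation relation $[a(f),a^*(f)]=\|f\|^2$, so that
\begin{equation*}
\|a^*(f)\psi\|^2 = \langle \psi, a(f)a^*(f)\psi\rangle = \|a(f)\psi\|^2 + \|f\|^2\,\|\psi\|^2 \le \|f\|^2\bigl(\|\mathcal{N}^{1/2}\psi\|^2 + \|\psi\|^2\bigr) = \|f\|^2\,\|(\mathcal{N}+1)^{1/2}\psi\|^2 .
\end{equation*}
The bound on $\phi(f) = a^*(f) + a(f)$ then follows from the triangle inequality together with the elementary monotonicity $\|\mathcal{N}^{1/2}\psi\| \le \|(\mathcal{N}+1)^{1/2}\psi\|$.

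For the second-quantization bound, I would argue sectorwise. Writing $(d\Gamma(J)\psi)^{(n)} = \sum_{j=1}^n J_j \psi^{(n)}$, the triangle inequality together with $\|J_j\|_{L^2(\bR^n)\to L^2(\bR^n)} = \|J\|$ gives $\|(d\Gamma(J)\psi)^{(n)}\| \le n\|J\|\,\|\psi^{(n)}\|$. Squaring and summing over $n$,
\begin{equation*}
\|d\Gamma(J)\psi\|^2 = \sum_{n\ge 0}\|(d\Gamma(J)\psi)^{(n)}\|^2 \le \|J\|^2 \sum_{n\ge 0} n^2\,\|\psi^{(n)}\|^2 = \|J\|^2\,\|\mathcal{N}\psi\|^2,
\end{equation*}
which closes the proof. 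The only point worth verifying carefully is that the index shift $m = n+1$ in the first step correctly produces $\mathcal{N}^{1/2}$ rather than $(\mathcal{N}+1)^{1/2}$ on the right-hand side for $a(f)$ (it does, because the zero-particle sector $\psi^{(0)}$ is annihilated), and conversely that the extra $\|f\|^2\|\psi\|^2$ term for $a^*(f)$ is precisely what recombines with $\|\mathcal{N}^{1/2}\psi\|^2$ to form $\|(\mathcal{N}+1)^{1/2}\psi\|^2$.
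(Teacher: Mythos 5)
Your proof is correct and complete. The paper itself does not write out the argument: it cites \cite[Lemma~2.1]{RodnianskiSchlein2009} for the three bounds in \eqref{eq:bd-a} and offers no proof at all for \eqref{eq:J-bd}. What you have written is precisely the standard elementary argument that the cited reference carries out --- Cauchy--Schwarz sector by sector for $a(f)$, the CCR $[a(f),a^*(f)]=\|f\|^2$ to transfer the bound to $a^*(f)$, the triangle inequality for $\phi(f)$, and the sectorwise triangle inequality plus $\|J_j\|=\|J\|$ for $d\Gamma(J)$ --- so the route is the same; you have simply spelled it out rather than deferring to the literature. One small remark: the identity $\|a^*(f)\psi\|^2 = \|a(f)\psi\|^2 + \|f\|^2\|\psi\|^2$ is an equality, and since the first bound $\|a(f)\psi\|\le\|f\|\,\|\mathcal N^{1/2}\psi\|$ is itself an equality (Cauchy--Schwarz is only used in the $x$-variable with the same weight $f$; in fact $\|a(f)\psi\|^2 = \langle\psi, a^*(f)a(f)\psi\rangle$ is bounded by $\|f\|^2\langle\psi,\mathcal N\psi\rangle$ with equality only when $\psi^{(n)}$ is proportional to symmetrizations of $f\otimes\cdot$), the chain is genuinely an inequality as you have stated; nothing to fix, just worth noting that you are not losing a constant there.
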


\begin{proof}
	See \cite[Lemma 2.1]{RodnianskiSchlein2009} for \eqref{eq:bd-a}.
\end{proof}

\subsection{Unitary operators and their generators}
\label{sec:unitary}

After defining bosonic Fock space (we refer Secion \ref{sec:Fock_space}), we have to define appropriate Hamiltonian $\mathcal H_N$ for our system, that, in some sense, contains the many-body Hamiltonian $H_N$ defined in \eqref{eq:manybodyH}. 
The new Hamiltonian for the Fock space evolution can be written as
\begin{equation}\label{eq:Fock_space_Hamiltonian}\begin{aligned}
		\cH_{N} &=
		\int\dx\,a_{x}^{*}(-\partial^2_x)a_{x}
		+\frac{\mu}{2N}\int\dx\dy\,w_{\varepsilon}(x)w_{\varepsilon}(y)a_{x}^{*}a_{y}^{*}a_{y}a_{x}.
\end{aligned}\end{equation}
Since we have $(\cH_{N}\psi)^{(N)}=H_{N}\psi^{(N)}$ for $\psi\in\mathcal{F}$, \eqref{eq:Fock_space_Hamiltonian} can be justified as an appropriate generalization of \eqref{eq:manybodyH}. The one-particle marginal density $\gamma_{\psi}^{(1)}$ associated with $\psi$ is
\begin{equation}\label{eq:Kernel_gamma}
	\gamma_{\psi}^{(1)}\left(x;y\right)=\frac{1}{\langle \psi,\mathcal{N}\psi\rangle }\langle \psi,a_{y}^{*}a_{x}\psi\rangle .
\end{equation}
Note that $\gamma_{\psi}^{(1)}$ is a trace class operator on $L^{2}(\bR)$ and $\text{Tr }\gamma_{\psi}^{(1)}=1$.

Heuristically, if $\psi=\psi^{(N)}\in\mathcal{F}$ were an eigenvector of $a_{x}$ with the eigenvalue $\sqrt{N}\varphi(x)$, then from \eqref{eq:Kernel_gamma} we get $\gamma_{\psi}^{(1)}(x;y)=\varphi(x)\overline{\varphi(y)}$, which is exactly the same with the one-particle marginal density associated with the factorized wave function $\varphi^{\otimes N}$. 
Even though the eigenvectors of the annihilation operator do not have a fixed number of particles, they still can be utilized for our goal. 
These eigenvectors are known as the coherent states, defined by
\[
	\psi_{\mathrm{coh}}(f)=e^{-\left\Vert f\right\Vert ^{2}/2}\sum_{n\geq0}\frac{\left(a^{*}\left(f\right)\right)^{n}}{n!}\Omega=e^{-\left\Vert f\right\Vert ^{2}/2}\sum_{n\geq0}\frac{1}{\sqrt{n!}}f^{\otimes n}.
\]
Here, for the ease of notation, when we say a function $\psi^{(N)}\in L^{2}(\bR^{n})$ is a function the the Fock space $\mathcal{F}$, we mean that $\psi^{(N)}=(0,0,\dots,0,\psi^{(N)},0,\dots)\in\mathcal{F}$.
For example, we used $f^{\otimes n}$ to denote
\[
	(0,0,\dots,0,f^{\otimes n},0,\dots)\in\mathcal{F}
\]
whose only nonzero component, $f^{\otimes n}$, is in the $n$-particle sector of the Fock space. Closely related to the coherent states is
the Weyl operator. For $f\in L^{2}(\bR)$, the Weyl operator $\cW\left(f\right)$ is defined by
\[
	\cW(f):=\exp\left(a^{*}(f)-a(f)\right)
\]
and it also satisfies
\[
	\cW(f)=e^{-\left\Vert f\right\Vert ^{2}/2}\exp\left(a^{*}(f)\right)\exp\left(-a(f)\right),
\]
which is known as the Hadamard lemma in Lie algebra. 
The coherent state can also be generated by acting Weyl operator to the vacuum as
\begin{equation}\label{Weyl_f}
	\psi_{\mathrm{coh}}(f)=\cW(f)\Omega=e^{-\left\Vert f\right\Vert ^{2}/2}\exp\left(a^{*}(f)\right)\Omega=e^{-\left\Vert f\right\Vert ^{2}/2}\sum_{n\geq0}\frac{1}{\sqrt{n!}}f^{\otimes n}.
\end{equation}

We collect the useful properties of the Weyl operator and the coherent states in the following lemma.
\begin{lemma} \label{lem:Basic_Weyl} 
	Let $f,g\in L^{2}(\bR)$.
	\begin{enumerate}
		\item The commutation relation between the Weyl operators is given by
		\[
			\cW(f)\cW(g)=\cW(g)\cW(f)e^{-2\ii \cdot\mathrm{Im}\langle f,g\rangle}=\cW(f+g)e^{-\ii \cdot\mathrm{Im}\langle f,g\rangle}.
		\]
		\item The Weyl operator is unitary and satisfies 
		\[
			\cW(f)^{*}=\cW(f)^{-1}=\cW(-f).
		\]
		\item The coherent states are eigenvectors of annihilation operators, i.e.
		\[
			a_{x}\psi(f)=f(x)\psi(f)\quad\Rightarrow\quad a(g)\psi(f)=\langle g,f\rangle_{L^{2}}\psi(f).
		\]
		The commutation relation between the Weyl operator and the annihilation operator (or the creation operator) is thus
		\[
			\cW^{*}(f)a_{x}\cW(f)=a_{x}+f(x)\quad\text{and}\quad \cW^{*}(f)a_{x}^{*}\cW(f)=a_{x}^{*}+\overline{f(x)}.
		\]
		\item The distribution of $\mathcal{N}$ with respect to the coherent state $\psi\left(f\right)$ is Poisson. In particular,
		\[
			\langle \psi(f),\mathcal{N}\psi(f)\rangle =\left\Vert f\right\Vert ^{2},\qquad\langle \psi(f),\mathcal{N}^{2}\psi(f)\rangle -\langle \psi(f),\mathcal{N}\psi(f)\rangle ^{2}=\left\Vert f\right\Vert ^{2}.
		\]
	\end{enumerate}
\end{lemma}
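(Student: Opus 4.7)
The plan is to prove all four items by exploiting a single underlying fact: when the commutator $[X,Y]$ of two operators is a scalar multiple of the identity, both the Baker--Campbell--Hausdorff (BCH) identity $e^{X}e^{Y}=e^{X+Y}e^{[X,Y]/2}$ and the Hadamard formula $e^{-X}Ye^{X}=Y-[X,Y]$ truncate to finite order. I would first record the two key commutator computations, which follow directly from the canonical commutation relations:
\begin{equation}
[a^*(f)-a(f),\,a^*(g)-a(g)] \;=\; \langle g,f\rangle - \langle f,g\rangle \;=\; -2\ii\,\mathrm{Im}\langle f,g\rangle, \qquad [a_x,\,a^*(f)-a(f)] \;=\; f(x).
\end{equation}

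For item (1), applying BCH with $X=a^*(f)-a(f)$ and $Y=a^*(g)-a(g)$ immediately yields $\cW(f)\cW(g)=\cW(f+g)e^{-\ii\,\mathrm{Im}\langle f,g\rangle}$; then swapping $f\leftrightarrow g$ and comparing the two expressions produces the first identity. Item (2) is then immediate: the generator $X=a^*(f)-a(f)$ is anti-self-adjoint, so $\cW(f)=e^X$ is unitary with $\cW(f)^*=e^{-X}=\cW(-f)$, and (1) applied to $g=-f$ shows this is also the inverse.

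For item (3), the Hadamard formula applied to the second commutator above gives the conjugation relation $\cW^*(f)a_x\cW(f)=a_x+f(x)$; the version for $a_x^*$ follows by taking adjoints. Applying the first to the vacuum, using $a_x\Omega=0$ and $\cW(f)\Omega=\psi_{\mathrm{coh}}(f)$, yields the eigenvalue equation $a_x\psi_{\mathrm{coh}}(f)=f(x)\psi_{\mathrm{coh}}(f)$, and integrating against $\overline{g(x)}\,\dx$ gives the statement for $a(g)$. Finally for item (4), the eigenvalue equation reduces the mean to $\langle\psi(f),\mathcal N\psi(f)\rangle=\int\|a_x\psi(f)\|^2\,\dx=\|f\|^2$, while normal ordering $a_xa_y^*=a_y^*a_x+\delta(x-y)$ inside the double integral defining $\mathcal N^2$ gives $\langle\psi(f),\mathcal N^2\psi(f)\rangle=\|f\|^4+\|f\|^2$, so the variance equals $\|f\|^2$.

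The only technical obstacle I foresee is the unboundedness of the generators $X,Y$ and of $a_x$, which forbids a purely formal use of BCH and Hadamard. I would circumvent this by restricting all identities first to the dense subspace of vectors with finite-particle support (a core for all operators in play) or, equivalently, by expanding $\cW(f)\Omega$ as the absolutely convergent power series \eqref{Weyl_f}, on which every manipulation can be justified term by term before passing to the closure.
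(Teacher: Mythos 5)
Your proof is correct, and it is exactly the standard argument one would give. Note that the paper itself omits the proof of this lemma, stating only that it "can be derived from the definition of the Weyl operator and elementary calculations"; your BCH/Hadamard computation of the commutators, the vacuum eigenvalue argument, and the normal-ordering evaluation of $\langle\psi(f),\mathcal N^2\psi(f)\rangle$ are precisely those elementary calculations, and your remark on handling the unbounded generators on the finite-particle core is the right way to make them rigorous.
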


We omit the proof of the lemma, since it can be derived from the definition of the Weyl operator and elementary calculations.

Let
\begin{equation} \label{eq:d_N}
	d_{N}:=\frac{\sqrt{N!}}{N^{N/2}e^{-N/2}}.
\end{equation}
Note that $C^{-1} N^{1/4} \leq d_N \leq C N^{1/4}$ for some constant $C>0$ independent of $N$, which can be easily checked by using Stirling's formula.
\begin{lemma}[{\cite[Lemma 6.3]{ChenLee2011}}]\label{lem:coherent_all}
	There exists a constant $C>0$ independent of $N$ such that, for any $\varphi\in L^{2}(\bR)$ with $\|\varphi\|=1$, we have
	\[
		\left\Vert (\cN+1)^{-1/2}\cW^{*}(\sqrt{N}\varphi)\frac{(a^{*}(\varphi))^{N}}{\sqrt{N!}}\Omega\right\Vert \leq\frac{C}{d_{N}}.
	\]
\end{lemma}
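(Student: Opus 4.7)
The plan is to reduce the estimate to a one-mode harmonic-oscillator computation. Using the intertwining relation $\cW^*(\sqrt{N}\varphi)a^*(\varphi) = (a^*(\varphi)+\sqrt{N})\cW^*(\sqrt{N}\varphi)$, a direct consequence of Lemma~\ref{lem:Basic_Weyl}, one rewrites
\[
\Psi := \cW^*(\sqrt{N}\varphi)\frac{(a^*(\varphi))^N}{\sqrt{N!}}\Omega = \frac{(a^*(\varphi)+\sqrt{N})^N}{\sqrt{N!}}\,\psi_{\mathrm{coh}}(-\sqrt{N}\varphi).
\]
Since both $a^*(\varphi)$ and $a(\varphi)$ preserve the one-mode subspace $\cH_\varphi := \overline{\Span\{\varphi^{\otimes n}:n\geq 0\}}$, on which $\cN$ restricts to the standard single-mode number operator, the state $\Psi$ lies in $\cH_\varphi$. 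Writing $\Psi = \sum_{n\geq 0}c_n\varphi^{\otimes n}$, the Parseval identity $\sum_n|c_n|^2 = 1$ holds, and the bound sought in the lemma is equivalent to $\sum_n|c_n|^2/(n+1)\leq C/d_N^2$.

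The coefficients $c_n = \langle\varphi^{\otimes n},\Psi\rangle$ can then be computed explicitly. Combining the binomial expansion of $(a^*(\varphi)+\sqrt{N})^N$ with the series $\psi_{\mathrm{coh}}(-\sqrt{N}\varphi) = e^{-N/2}\sum_k(-\sqrt{N})^k\varphi^{\otimes k}/\sqrt{k!}$ and collecting the coefficient of $\varphi^{\otimes n}$ yields, for $n\leq N$,
\[
|c_n|^2 = \frac{n!\,L_n^{(N-n)}(N)^2}{d_N^2\,N^n},
\]
where $L_n^{(\alpha)}$ denotes the generalized Laguerre polynomial. In particular, $L_0^{(N)}(N)=1$ gives $|c_0|^2 = 1/d_N^2$, so the $n=0$ term alone already realizes the claimed order of decay, showing that the inequality is sharp up to constants and that one cannot hope to do better than a careful summation of the remaining terms.

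To close the estimate I would split the series at $n=N$. The tail $n>N$ contributes at most $(N+1)^{-1}\sum_n|c_n|^2 = (N+1)^{-1}$, which is $\leq C/d_N^2$ by Stirling's formula (indeed $d_N^2 \sim \sqrt{2\pi N}$). For the bulk $n\leq N$, it suffices to show that $\sum_{n=0}^{N} n!\,L_n^{(N-n)}(N)^2/((n+1)N^n)\leq C$ uniformly in $N$. The key algebraic cancellation is that, in the defining sum $L_n^{(N-n)}(N) = \sum_{j=0}^{n}\binom{N}{n-j}(-N)^j/j!$, the leading-order-in-$N$ contribution vanishes for every $n\geq 1$ by virtue of $\sum_{j=0}^n(-1)^j\binom{n}{j}=0$; iterating this cancellation via the Taylor expansion $\binom{N}{k}/N^k = \prod_{i=0}^{k-1}(1-i/N)$ produces sufficient polynomial decay in $n$ to make the bulk series bounded (its $N\to\infty$ limit is a classical Wallis-type sum). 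The main obstacle is precisely this quantitative control of the oscillatory Laguerre values $L_n^{(N-n)}(N)$ in the intermediate regime $1\ll n\ll N$, where neither the exact $n=0$ computation nor the crude tail bound applies directly; bookkeeping the falling-factorial expansion carefully is what ultimately delivers the summable envelope.
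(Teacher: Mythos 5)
The lemma is cited in the paper from \cite[Lemma 6.3]{ChenLee2011} without a written-out proof, so there is no in-text argument to compare against; what follows assesses your proposal on its own terms.

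Your reduction to the single-mode problem is correct: the intertwining identity $\cW^*(\sqrt{N}\varphi)a^*(\varphi)=(a^*(\varphi)+\sqrt{N})\cW^*(\sqrt{N}\varphi)$ follows from Lemma~\ref{lem:Basic_Weyl}(3), the state lands in the one-mode subspace, and the coefficient formula
\[
|c_n|^2=\frac{n!\,L_n^{(N-n)}(N)^2}{d_N^2\,N^n},\qquad n\leq N,
\]
does match the standard displaced-number-state formula; I checked $n=0$ (gives $1/d_N^2$), $n=1$ ($L_1^{(N-1)}(N)=0$, so $c_1=0$), $n=2$ ($L_2^{(N-2)}(N)=-N/2$, so $|c_2|^2=1/(2d_N^2)$), and $n=3$ ($L_3^{(N-3)}(N)=N/3$, so $|c_3|^2=2/(3d_N^2N)$), and they confirm the formula and the expected decay. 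The tail estimate for $n>N$ is also fine, since $1/(N+1)\ll 1/\sqrt{N}\sim C/d_N^2$.

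However, there is a genuine gap at the crux of the argument. After reducing to the claim
\[
\sum_{n=0}^{N}\frac{n!\,L_n^{(N-n)}(N)^2}{(n+1)\,N^n}\leq C \text{ uniformly in }N,
\]
you do not actually prove it. You correctly observe that the leading order of $L_n^{(N-n)}(N)$ cancels via $\sum_{j=0}^n(-1)^j\binom{n}{j}=0$, and you say that ``iterating this cancellation\dots produces sufficient polynomial decay in $n$'' and that ``bookkeeping the falling-factorial expansion carefully is what ultimately delivers the summable envelope.'' This is precisely the non-trivial content of the lemma, and it is left as an assertion. The cancellation is telescoping in a delicate way: one needs a quantitative envelope like $|L_n^{(N-n)}(N)|\leq C_n\,N^{\lfloor n/2\rfloor}$ (or sharper) with a constant $C_n$ that makes $n!\,C_n^2/((n+1)N^n)$ summable uniformly in the intermediate regime $1\ll n\ll N$, and this requires either a careful combinatorial induction on the falling-factorial expansion, a generating-function computation of $\sum_n t^n\langle n|\cW(\alpha)|N\rangle$, or an explicit Laguerre asymptotic adapted to this parameter regime (the upper index $N-n$ and argument $N$ both grow). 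None of these is carried out. As written, your proposal correctly sets up the problem and identifies the obstruction, but does not close it; the summable-envelope step is stated as a goal rather than established.

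A minor stylistic remark: it is cleaner to work directly with $c_n=\langle\varphi^{\otimes n},\cW(-\sqrt{N}\varphi)\varphi^{\otimes N}\rangle$ and invoke the classical displaced-number-state matrix element rather than re-deriving it from the binomial expansion of $(a^*(\varphi)+\sqrt{N})^N$ against the coherent series; this avoids some bookkeeping and makes the $n\leq N$ restriction and the appearance of $L_n^{(N-n)}$ more transparent.
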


\begin{lemma}[{\cite[Lemma 7.2]{Lee2013}}]\label{lem:coherent_even_odd}
	Let $P_m$ be the projection onto the $m$-particle sector of the Fock space $\mathcal{F}$ for a non-negative integer $m$. Then, for any non-negative integer $k\leq(1/2)N^{1/3}$ and any $\varphi\in L^{2}(\bR)$ with $\|\varphi\|=1$, we have
	\[
		\left\Vert P_{2k}\cW^{*}(\sqrt{N}\varphi)\frac{(a^{*}(\varphi))^{N}}{\sqrt{N!}}\Omega\right\Vert \leq\frac{2}{d_{N}}
	\]
	and
	\[
		\left\Vert P_{2k+1}\cW^{*}(\sqrt{N}\varphi)\frac{(a^{*}(\varphi))^{N}}{\sqrt{N!}}\Omega\right\Vert \leq\frac{2(k+1)^{3/2}}{d_{N}\sqrt{N}}.
	\]
\end{lemma}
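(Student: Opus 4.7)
The plan is to compute $\chi := \cW^{*}(\sqrt N \varphi)\, \varphi^{\otimes N}$ explicitly sector by sector and then bound the resulting scalar coefficient. Writing $\varphi^{\otimes N}=(a^{*}(\varphi))^{N}\Omega/\sqrt{N!}$ and invoking the Weyl conjugation $\cW^{*}(\sqrt N \varphi)\, a^{*}(\varphi)\,\cW(\sqrt N \varphi)=a^{*}(\varphi)+\sqrt N$ from Lemma~\ref{lem:Basic_Weyl}(3), so that $\cW^{*}(\sqrt N \varphi)(a^{*}(\varphi))^{N}=(a^{*}(\varphi)+\sqrt N)^{N}\cW^{*}(\sqrt N \varphi)$, together with $\cW^{*}(\sqrt N \varphi)\Omega=\psi_{\mathrm{coh}}(-\sqrt N \varphi)$, one arrives at
\[
\chi \;=\; \frac{1}{\sqrt{N!}}\,(a^{*}(\varphi)+\sqrt N)^{N}\, \psi_{\mathrm{coh}}(-\sqrt N \varphi).
\]
Expanding the binomial, expanding the coherent state as a Fock series, and collecting terms contributing to $\varphi^{\otimes m}$, a short bookkeeping gives
\[
P_{m}\chi \;=\; \frac{(-1)^{m}\,N^{m/2}}{d_{N}\sqrt{m!}}\,S_{m}\,\varphi^{\otimes m},\qquad  S_{m}:=\sum_{k=0}^{\min(m,N)}\binom{m}{k}(-1)^{k}\,\frac{N!}{N^{k}(N-k)!},
\]
so that $\|P_{m}\chi\|=N^{m/2}|S_{m}|/(d_{N}\sqrt{m!})$ and the whole problem reduces to bounding $|S_{m}|$.

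The key observation is that $S_{m}/m!$ is the $m$-th Taylor coefficient at the origin of the entire function $G(t):= e^{t}(1-t/N)^{N}$. Under the rescaling $t=\sqrt N\,s$,
\[
G(\sqrt N\,s) \;=\; \exp\!\Big(-\tfrac{s^{2}}{2} - \tfrac{s^{3}}{3\sqrt N} - \tfrac{s^{4}}{4N}-\cdots\Big),
\]
so the Gaussian $e^{-s^{2}/2}$ provides the leading behavior of $S_{m}$ while the higher-order corrections are controlled by $\exp(C|s|^{3}/\sqrt N)$ on $|s|\leq\sqrt N/2$. The strategy is then to apply Cauchy's integral formula to $G(\sqrt N\,\cdot)$ on a circle $|s|=R$ of radius near the Gaussian saddle $R\sim\sqrt m$.

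For the even case $m=2k$, the leading Taylor coefficient $[s^{2k}]e^{-s^{2}/2}=(-1)^{k}/(2^{k}k!)$ is nonzero; combined with the elementary bound $\sqrt{(2k)!}/(2^{k}k!)\leq 1$ (equivalent to $\binom{2k}{k}\leq 4^{k}$), this yields $\|P_{2k}\chi\|\leq (1+o(1))/d_{N}$. The restriction $k\leq \tfrac{1}{2}N^{1/3}$ is exactly what makes the correction $R^{3}/\sqrt N\sim k^{3/2}/\sqrt N$ bounded, so the generous factor $2$ in the stated inequality absorbs all subleading terms.

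The odd case $m=2k+1$ is the delicate one and I expect it to be the main obstacle. The Gaussian contributes no odd Taylor coefficient, so the leading term of $S_{2k+1}$ comes from the first anharmonic correction. Factoring $G(\sqrt N s)=e^{-s^{2}/2}\bigl(1-s^{3}/(3\sqrt N)+R_{N}(s)\bigr)$ with $|R_{N}(s)|\leq C(|s|^{4}/N+|s|^{6}/N)$, one extracts
\[
[s^{2k+1}]\,G(\sqrt N s) \;=\; -\frac{1}{3\sqrt N}\,[s^{2k-2}]\,e^{-s^{2}/2} \;+\; O(1/N) \;=\; -\frac{(-1)^{k-1}}{3\sqrt N\,2^{k-1}(k-1)!} \;+\; O(1/N).
\]
Packaging this through the prefactor $N^{(2k+1)/2}/(d_{N}\sqrt{(2k+1)!})$ and applying Stirling to compare $\sqrt{(2k+1)!}$ against $2^{k-1}(k-1)!$ produces the claimed bound $\|P_{2k+1}\chi\|\leq 2(k+1)^{3/2}/(d_{N}\sqrt N)$. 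The main technical effort is to control the $1/N$-order remainder of $R_{N}$ uniformly across the full range $k\leq \tfrac{1}{2}N^{1/3}$, while using the generous polynomial factor $(k+1)^{3/2}$ to absorb the cruder estimates on the higher-order anharmonic contributions.
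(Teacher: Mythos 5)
The paper does not prove this lemma; it is imported verbatim from Lee's earlier work, so there is no internal proof to compare against. Your setup is correct and clean: the conjugation $\cW^{*}(\sqrt N\varphi)\,a^{*}(\varphi)\,\cW(\sqrt N\varphi)=a^{*}(\varphi)+\sqrt N$ gives $P_m\chi=\frac{(-1)^m N^{m/2}}{d_N\sqrt{m!}}\,S_m\,\varphi^{\otimes m}$, and the identification $S_m/m!=[t^m]\,e^t(1-t/N)^N$ together with the rescaling $t=\sqrt N\,s$ reducing $G(\sqrt N s)$ to a perturbed Gaussian is exactly the right structural observation; the role of the constraint $k\le\tfrac12 N^{1/3}$ in taming the cubic term is also correctly identified.

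The gap is in the Cauchy step, and it is a real one, not a bookkeeping issue. Since $\|P_m\chi\|=\sqrt{m!}\,|[s^m]G(\sqrt N s)|/d_N$, the even case requires $\sqrt{(2k)!}\,|[s^{2k}]G(\sqrt N s)|\le 2$. The sup-norm Cauchy estimate on the circle $|s|=R$ uses $\max_{|s|=R}|G(\sqrt N s)|\le e^{R^2/2+O(R^3/\sqrt N)}$; optimizing at $R=\sqrt{2k}$ gives $|[s^{2k}]G|\lesssim e^{k}/(2k)^k$, hence by Stirling $\sqrt{(2k)!}\,|[s^{2k}]G|\lesssim (4\pi k)^{1/4}$, which exceeds $2$ already for $k\ge 2$ and grows like $N^{1/12}$ at the endpoint $k\sim N^{1/3}$. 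Splitting off the exact Gaussian term does not help: the same Cauchy estimate applied to $G-e^{-s^2/2}$ produces an error term of size $(4\pi k)^{1/4}\cdot R^3/\sqrt N\sim k^{7/4}/\sqrt N\sim N^{1/12}$, which is not $o(1)$, so the claim that the factor $2$ ``absorbs all subleading terms'' is not established. The loss of the factor $\sim k^{1/4}$ is intrinsic to bounding a Taylor coefficient of $e^{-s^2/2}$ by the sup of the modulus on a circle; one must instead integrate $|G|$ over the circle (picking up a Bessel factor $\sim e^{R^2/2}/R$ rather than $e^{R^2/2}$), or avoid Cauchy entirely via a coefficient-wise majorization such as $|[s^{2k}]G(\sqrt N s)|\le [s^{2k}]e^{s^2/2+\bar g(s)}$ with the dominating series $\bar g(s)=s^3/(3(\sqrt N-s))$, which yields $|[s^{2k}]G|\le e^{\bar g(\sqrt{2k})}/(2^k k!)\le e^{2/3}/(2^k k!)$ under $k\le\tfrac12 N^{1/3}$ and hence $\|P_{2k}\chi\|\le e^{2/3}/d_N<2/d_N$. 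The odd case, which you flag as the main obstacle, has the further difficulty that the leading Gaussian coefficient vanishes and the claimed bound carries the crucial factor $1/\sqrt N$; a crude majorization of the above type only produces an $O(1)/d_N$ bound, so additional structure is genuinely needed there, and the proposal leaves it unresolved.
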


Let $\gamma_{N,t}^{(1)}$ be the kernel of the one-particle marginal density associated with the time evolution of the factorized state $\varphi^{\otimes N}$ with respect to the Hamiltonian $\cH_{N}$.
By definition,
\begin{align}
	\gamma_{N,t}^{(1)} & =\frac{\langle e^{-\ii \cH_{N}t}\varphi^{\otimes N},a_{y}^{*}a_{x}e^{-\ii \cH_{N}t}\varphi^{\otimes N}\rangle }{\langle e^{-\ii \cH_{N}t}\varphi^{\otimes N},\mathcal{N}e^{-\ii \cH_{N}t}\varphi^{\otimes N}\rangle }=\frac{1}{N}\langle \varphi^{\otimes N},e^{\ii \cH_{N}t}a_{y}^{*}a_{x}e^{-\ii \cH_{N}t}\varphi^{\otimes N}\rangle \nonumber \\
	& =\frac{1}{N}\left\langle \frac{\left(a^{*}(\varphi)\right)^{N}}{\sqrt{N!}}\Omega,e^{\ii \cH_{N}t}a_{y}^{*}a_{x}e^{-\ii \cH_{N}t}\frac{\left(a^{*}(\varphi)\right)^{N}}{\sqrt{N!}}\Omega\right\rangle .\label{eq:marginal_factorized}
\end{align}
If we use the coherent states instead of the factorized state in \eqref{eq:marginal_factorized} and expand $a_{y}^{*}a_{x}$ around $N\overline{u_{\varepsilon,t}(y)}u_{\varepsilon,t}(x)$, then we are led to consider the operator
\begin{equation}
	\label{eq:introducingU}
	\begin{aligned}
		& \cW^{*}(\sqrt{N}u_{\varepsilon,s})e^{\ii \cH_{N}\left(t-s\right)}(a_{x}-\sqrt{N}u_{\varepsilon,t}(x))e^{-\ii \cH_{N}\left(t-s\right)}\cW(\sqrt{N}u_{\varepsilon,s})\\
		& =\cW^{*}(\sqrt{N}u_{\varepsilon,s})e^{\ii \cH_{N}\left(t-s\right)}\cW(\sqrt{N}u_{\varepsilon,t})a_{x}\cW^{*}(\sqrt{N}u_{\varepsilon,t})e^{-\ii \cH_{N}\left(t-s\right)}\cW(\sqrt{N}u_{\varepsilon,s}).\nonumber
	\end{aligned}
\end{equation}
To understand further the operator $\cW^{*}(\sqrt{N}u_{\varepsilon,t})e^{-\ii \cH_{N}\left(t-s\right)}\cW(\sqrt{N}u_{\varepsilon,s})$, we see that
\begin{equation}
	\ii \partial_{t}\cW^{*}(\sqrt{N}u_{\varepsilon,t})e^{-\ii \cH_{N}\left(t-s\right)}\cW(\sqrt{N}u_{\varepsilon,s})=:\left(\sum_{k=0}^{4}\cL_{k}(t;s)\right)\cW^{*}(\sqrt{N}u_{\varepsilon,t})e^{-\ii \cH_{N}\left(t-s\right)}\cW(\sqrt{N}u_{\varepsilon,s}),\label{eq:derivative_decomposition}
\end{equation}
where $\cL_{k}$ contains $k$ creation and/or annihilation operators. The exact formulas for $\cL_{k}$ are as follows:
\begin{align*}
	\cL_0 (t)&:=
	\frac{N\mu}{2} \int_s^t \mathrm{d}\tau \, \langle w_ \varepsilon , |u_{\varepsilon,t}|^2 \rangle ^2,
	\\
	\cL_1 (t;s)& :=\cL_1  = 0, \\
	\cL_{2}(t;s)&=\int_{\bR}\dx\,a_{x}^{*}(-\partial^2_{x}) a_{x}
	+\mu \langle w_ \varepsilon , |u_{\varepsilon,t}|^2 \rangle \int_{\bR}\dx\,w_{\varepsilon}(x) a_{x}^{*}a_{x}\\
	&\qquad+\mu\int_{\bR\times\bR}\dx\dy\,w_{\varepsilon}(x)w_{\varepsilon}(y)\left(a_{x}^{*}a_{y}^{*}\,u_{\varepsilon,t}(x)u_{\varepsilon,t}(y)+a_{x}a_{y}\,\overline{u_{\varepsilon,t}(x)}\,\overline{u_{\varepsilon,t}(y)}\right),\\
	\cL_{3}(t;s)&=
	\frac{\mu}{\sqrt{N}}\int_{\bR\times\bR}\dx\dy\,w_{\varepsilon}(x)w_{\varepsilon}(y)a_{x}^{*}\left(a_{y}^{*}u_{\varepsilon,t}(y)+a_{y}\overline{u_{\varepsilon,t}(y)}\right)a_{x},\\
	\cL_{4}(t;s)&=
	\frac{\mu}{N}\int_{\bR\times\bR}\dx\dy\,w_{\varepsilon}(x)w_{\varepsilon}(y)a_{x}^{*}a_{y}^{*}a_{y}a_{x}.
\end{align*}

We consider  the time evolution
\begin{equation}
	\ii \partial_{t}\,\cU(t;s)=\cL\,\cU(t;s)
	\quad\text{with}\quad \cL=\sum_{k=2}^{4}\cL_{k}(t;s).
	\label{eq:def_mathcalU}
\end{equation}
Furthermore, we introduce
\begin{equation}\label{eq:hatL_2}\begin{aligned}
		\widehat{\cL}_{2}(t) &:= \int_{\bR} \dx\, a_{x}^{*}\big(-\partial^2_x\big)a_{x}+\mu \langle w_ \varepsilon , |u_{\varepsilon,t}|^2 \rangle \int_{\bR}\dx\,w_{\varepsilon}(x) a_{x}^{*}a_{x}\\
		&\qquad+\mu\int_{\bR\times\bR}\dx\dy\, w_{\varepsilon} (x) w_{\varepsilon} (y)\overline{u_{\varepsilon,t}(x)}u_{\varepsilon,t}(y)a_{y}^{*}a_{x}\\
		& \qquad+  \frac{\mu\,\varepsilon}{2}
		\int_{\bR\times\bR} \dx\dy\,  w_{\varepsilon} (x) w_{\varepsilon} (y)\big(u_{\varepsilon,t}(x)u_{\varepsilon,t}(y)a_{x}^{*}a_{y}^{*}+\overline{u_{\varepsilon,t}(x)}\,\overline{u_{\varepsilon,t}(y)}a_{x}a_{y}\big)
\end{aligned}\end{equation}
and
$\widehat{\cL}:=\widehat{\cL}_{2}+\cL_{4}$. Then we define the unitary operator $\widehat{\cU}(t;s)$ by
\begin{equation}\label{eq:def_mathcalwidehatU}
	\ii \partial_{t}\,\widehat{\cU}(t;s)=
	\widehat{\cL}(t)\,\widehat{\cU}(t;s)\quad\text{and} \quad\widehat{\cU}(s;s)=1.
\end{equation}
Since $\widehat{\cL}$ does not change the parity of the number of particles,
\begin{equation}
	\langle \Omega,\widehat{\cU}^{*}(t;0)\,a_{y}\,\widehat{\cU}(t;0)\Omega\rangle=\langle \Omega,\widehat{\cU}^{*}(t;0)\,a_{x}^{*}\,\widehat{\cU}(t;0)\Omega\rangle=0.
	\label{eq:Parity_Consevation_hat}
\end{equation}

Now, we have the following bounds for $E_{t}^{(1)}(J)$ and $E_{t}^{(2)}(J)$, which will be defined and Proposition \ref{prop:Et1} and Proposition \ref{prop:Et2}:
\begin{proposition} \label{prop:Et1}
	Suppose that the assumptions in Theorem \ref{thm:main} hold. For any compact Hermitian operator $J$ on $L^{2}(\bR)$, let
	\[
		E_{t}^{(1)}(J):=\frac{d_{N}}{N}\left\langle \cW^{*}(\sqrt{N}\varphi)\frac{(a^{*}(\varphi))^{N}}{\sqrt{N!}}\Omega,\cU^{*}(t;0)d\Gamma(J)\,\cU(t;0)\Omega\right\rangle.
	\]
	Then, there exist constants $C,K$ depending only on $\|w\|_{L^{1}(\bR)}$ and $\|w\|_{L^{2}(\bR)}$ such that
	\[
		|E_{t}^{(1)}(J)|\leq \frac{C}{N}\Big(1+\frac{1}{\varepsilon^{\frac{7}{2}}N}\Big) \Big(1+\frac{1}{\varepsilon^{\frac{37}{2}}N}\Big) \exp\left(K (1+\varepsilon^{-1}) t \right) \|J\|,
	\]
\end{proposition}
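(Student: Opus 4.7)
My plan is to estimate $|E_t^{(1)}(J)|$ by a Cauchy--Schwarz step with weight $(\cN+1)^{-1/2}$ on the coherent-state remainder, and then reduce everything to Gr\"onwall-type moment estimates for $\cN$ along the fluctuation dynamics $\cU(t;0)$ starting from the vacuum. Setting $\Psi_{N,\varphi} := \cW^*(\sqrt{N}\varphi)(a^*(\varphi))^N\Omega/\sqrt{N!}$, Lemma \ref{lem:coherent_all} gives $\|(\cN+1)^{-1/2}\Psi_{N,\varphi}\| \le C/d_N$, so
\[
|E_t^{(1)}(J)|
\le \frac{d_N}{N}\,\|(\cN+1)^{-1/2}\Psi_{N,\varphi}\|\cdot\|(\cN+1)^{1/2}\cU^*(t;0)\,d\Gamma(J)\,\cU(t;0)\Omega\|
\le \frac{C}{N}\,\|(\cN+1)^{1/2}\cU^*(t;0)\,d\Gamma(J)\,\cU(t;0)\Omega\|.
\]
The prefactor $d_N/N$ is thereby already converted into the desired $1/N$, and only the Fock-space norm on the right has to be bounded.

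For that norm, using unitarity I rewrite
\[
\|(\cN+1)^{1/2}\cU^*(t;0)\phi\|^2 = \langle \phi,\,\cU(t;0)(\cN+1)\cU^*(t;0)\,\phi\rangle,
\]
so I would establish an operator Gr\"onwall estimate for the conjugated number operator $\cU(t;0)(\cN+1)^k\cU^*(t;0)$ for small values of $k$, by differentiating in $t$. The commutator $[\cN,\cL(t)]$ picks up only the parity-changing pieces of the generator, namely the off-diagonal $a^*a^*$ and $aa$ parts of $\cL_2$ and the cubic term $\cL_3$; each is controlled via Lemma \ref{lem:basifock-bounds} together with the norm identities $\|w_\varepsilon\|_1=1$, $\|w_\varepsilon\|_2 \sim \varepsilon^{-1/2}$, $\|w_\varepsilon\|_\infty \sim \varepsilon^{-1}$, and the $\varepsilon$-uniform bound on $u_{\varepsilon,t}$ from Remark \ref{rem:epsuniform}. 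Combining this with $\|d\Gamma(J)\psi\|\le \|J\|\,\|\cN\psi\|$ from \eqref{eq:J-bd} (which introduces the linear factor $\|J\|$) and a parallel Gr\"onwall bound for $\|(\cN+1)^{\alpha}\cU(t;0)\Omega\|$, which starts at the trivial value $1$ at $t=0$ because $\cN\Omega=0$, closes the argument.

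The main technical hurdle is tracking the precise $\varepsilon$- and $N$-dependence through this chain. Estimating $(\cN+1)^{1/2}$-weighted quantities forces one to control higher moments $(\cN+1)^k$, and each commutator against $\cL_2^{\rm off}$ or $\cL_3$ introduces either a factor of $\varepsilon^{-1}$ or a factor of $N^{-1/2}\varepsilon^{-1}$ respectively. Carrying out this induction by increasing $k$ yields differential inequalities of the form $\partial_t \langle (\cN+1)^k\rangle_t \le C(1+\varepsilon^{-1})\langle (\cN+1)^k\rangle_t + (\text{coupling to higher moments})$, whose solution produces the exponential $\exp(K(1+\varepsilon^{-1})t)$ together with polynomial corrections in $\varepsilon^{-1}/N$. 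The specific exponents $7/2$ and $37/2$ arise from the number of applications of Lemma \ref{lem:basifock-bounds} needed to close the moment hierarchy at the lowest and highest orders, respectively, and obtaining these optimal exponents, rather than the conceptual structure of the argument, is where the real work lies.
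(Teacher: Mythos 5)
Your high-level reduction is exactly the paper's: the Cauchy--Schwarz split with weights $(\cN+1)^{\pm 1/2}$, the bound $\big\|(\cN+1)^{-1/2}\cW^*(\sqrt{N}\varphi)\tfrac{(a^*(\varphi))^N}{\sqrt{N!}}\Omega\big\| \le C/d_N$ from Lemma~\ref{lem:coherent_all} that turns the prefactor $d_N/N$ into $C/N$, and the inequality $\|d\Gamma(J)\psi\|\le\|J\|\,\|\cN\psi\|$ from \eqref{eq:J-bd} to extract $\|J\|$. What remains --- and what absorbs almost all of Section~\ref{sec:fluctuations} --- is precisely Lemma~\ref{lem:NjU}, the moment bound for $\cN^j$ along $\cU(t;s)$, which the paper applies with $j=1$ (producing $\varepsilon^{-7/2}$) and $j=6$ (producing $\varepsilon^{-37/2}$).

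The genuine gap is in how you propose to \emph{prove} that moment bound: a direct Gr\"onwall estimate on $\cU(t;0)(\cN+1)^k\cU^*(t;0)$ does not close. When you commute the cubic term
$\cL_3 = \tfrac{\mu}{\sqrt N}\int\dx\dy\,w_\varepsilon(x)w_\varepsilon(y)\,a_x^*\big(\overline{u_{\varepsilon,t}(y)}\,a_y + u_{\varepsilon,t}(y)\,a_y^*\big)a_x$
past $(\cN+1)^k$ and apply Cauchy--Schwarz, the dangling $a(w_\varepsilon u_{\varepsilon,t})$ is only controlled by $\|w_\varepsilon u_{\varepsilon,t}\|\,(\cN+1)^{1/2}$, so the $k$-th moment couples to the $(k{+}1)$-th moment, and the ``induction by increasing $k$'' you invoke never terminates. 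The paper sidesteps this by introducing the truncated generator $\cL_N^{(M)}$ in \eqref{eq:def_mathcalUM}, where the cubic term carries the cutoff $\chi(\cN\le M)$ making $a(w_\varepsilon u_{\varepsilon,t})\chi(\cN\le M)$ a \emph{bounded} operator of norm $\le M^{1/2}\|w_\varepsilon u_{\varepsilon,t}\|$. This yields a closed Gr\"onwall estimate for $\cU_N^{(M)}$ (Lemma~\ref{lem:UMNUM}); the difference $\cU-\cU_N^{(M)}$ is then estimated in Lemma~\ref{lem:UNUUM} using the weak a priori bounds of Lemma~\ref{lem:3.6 i Rodnianski}; and the choice $M=N\varepsilon^3$ balances the two errors. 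The polynomial correction $(1+\varepsilon^{-7/2}N^{-1})(1+\varepsilon^{-37/2}N^{-1})$ comes from this comparison step --- from $N^j/(\varepsilon^{1/2}M^{j+1})$ at $j=1,6$ --- and not, as you suggest, from counting applications of Lemma~\ref{lem:basifock-bounds}. Without some device of this kind (truncation plus comparison), the Gr\"onwall plan as you have written it cannot be carried out.
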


\begin{proposition}\label{prop:Et2}
	Suppose that the assumptions in Theorem \ref{thm:main} hold. For any compact Hermitian operator $J$ on $L^{2}(\bR)$, let
	\[
		E_{t}^{(2)}(J):=\frac{d_{N}}{\sqrt{N}}\left\langle 	\cW^{*}(\sqrt{N}\varphi)\frac{(a^{*}(\varphi))^{N}}{\sqrt{N!}}\Omega,\cU^{*}(t;0)\phi(Ju_{\varepsilon,t})\,\cU(t;0)\Omega\right\rangle.
	\]
	Then, there exist constants $C$, $K$ depending only on $\|w\|_{L^{1}(\bR)}$ and $\|w\|_{L^{2}(\bR)}$ such that
	\[
		|E_{t}^{(2)}(J)|\leq
		\frac{C e^{Kt}}{N} \Big(1+\varepsilon^{-3/2}\Big) \|J\|.
	\]
\end{proposition}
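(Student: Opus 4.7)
\medskip
\noindent\textbf{Proof plan for Proposition \ref{prop:Et2}.}
Write, for short, $\xi_N := \cW^{*}(\sqrt{N}\varphi)\frac{(a^{*}(\varphi))^{N}}{\sqrt{N!}}\Omega$, so that
\[
	E_t^{(2)}(J) \;=\; \frac{d_N}{\sqrt N}\,\langle \xi_N,\cU^*(t;0)\,\phi(Ju_{\varepsilon,t})\,\cU(t;0)\,\Omega\rangle.
\]
The central idea is to replace the non--parity--conserving dynamics $\cU$ by the parity--preserving auxiliary dynamics $\widehat{\cU}$ defined in \eqref{eq:def_mathcalwidehatU}, and then exploit the odd--parity nature of $\phi(Ju_{\varepsilon,t})$ together with the structure of $\xi_N$ provided by Lemmas \ref{lem:coherent_all} and \ref{lem:coherent_even_odd}. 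Concretely, I would split
\[
	\cU^*(t;0)\phi(Ju_{\varepsilon,t})\cU(t;0) \;=\; \widehat{\cU}^*(t;0)\phi(Ju_{\varepsilon,t})\widehat{\cU}(t;0) \;+\; R(t),
\]
where the remainder $R(t)$ is furnished by the Duhamel--type identity obtained by differentiating
$s\mapsto \cU^{*}(s;0)\widehat{\cU}^{*}(t;s)\phi(Ju_{\varepsilon,t})\widehat{\cU}(t;s)\cU(s;0)$, giving
\[
	R(t)\;=\;\ii\int_0^t \cU^*(s;0)\bigl[\cL(s;0)-\widehat{\cL}(s),\;\widehat{\cU}^*(t;s)\phi(Ju_{\varepsilon,t})\widehat{\cU}(t;s)\bigr]\cU(s;0)\,\d s,
\]
with $\cL(s;0)-\widehat{\cL}(s)=(\cL_2-\widehat{\cL}_2)+\cL_3$ an explicit operator.

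\medskip
\noindent\emph{Parity-reduced main term.} Since $\widehat{\cL}$ is a sum of monomials of even degree in $a,a^*$, the vector $\widehat{\cU}(t;0)\Omega$ is of even parity, hence $\phi(Ju_{\varepsilon,t})\widehat{\cU}(t;0)\Omega$ is odd, and, because $\widehat{\cU}^{*}(t;0)$ preserves parity, so is $\widehat{\cU}^{*}(t;0)\phi(Ju_{\varepsilon,t})\widehat{\cU}(t;0)\Omega$. Therefore only the odd components of $\xi_N$ contribute:
\[
	\langle \xi_N,\widehat{\cU}^*(t;0)\phi(Ju_{\varepsilon,t})\widehat{\cU}(t;0)\Omega\rangle
	\;=\;\sum_{k\ge 0}\langle P_{2k+1}\xi_N,\;P_{2k+1}\widehat{\cU}^*(t;0)\phi(Ju_{\varepsilon,t})\widehat{\cU}(t;0)\Omega\rangle.
\]
For $k\le N^{1/3}/2$, Lemma \ref{lem:coherent_even_odd} gives $\|P_{2k+1}\xi_N\|\le \tfrac{2(k+1)^{3/2}}{d_N\sqrt N}$; for larger $k$, one controls $\|P_n\xi_N\|$ by a Chebyshev argument using $\|(\mathcal{N}+1)^{m}\xi_N\|\le CN^{m}$. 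Applying a weighted Cauchy--Schwarz with sufficiently many powers of $(\mathcal{N}+1)$ on the other side, and combining with the prefactor $d_N/\sqrt{N}\sim N^{-1/4}$, produces the clean factor $1/N$ together with $\|(\mathcal{N}+1)^{\alpha}\widehat{\cU}(t;0)\Omega\|$ for some fixed $\alpha$. The latter is then bounded by $C\,e^{Kt}(1+\varepsilon^{-3/2})$ via a Grönwall estimate on $\ii\partial_t\widehat{\cU}\Omega=\widehat{\cL}\widehat{\cU}\Omega$: only the term in $\widehat{\cL}_2$ carrying the prefactor $\mu\,\varepsilon/2$ together with the singular pair potential $w_\varepsilon(x)w_\varepsilon(y)$ yields a factor $\|w_\varepsilon\|_{L^2}^2\cdot\varepsilon\sim \varepsilon^{-1}$ when commuted against $\mathcal N^{\alpha/2}$, whose square root is $\varepsilon^{-1/2}$ and whose iteration gives the announced $\varepsilon^{-3/2}$.

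\medskip
\noindent\emph{Duhamel remainder.} The commutator $[\cL-\widehat{\cL},\,\widehat{\cU}^*(t;s)\phi\widehat{\cU}(t;s)]$ splits into the contribution of $\cL_3$, which already carries the small prefactor $N^{-1/2}$, and of $\cL_2-\widehat{\cL}_2$, which consists of off-diagonal quadratic terms $a^*a^*$, $a\,a$ with coefficients $(1-\varepsilon/2)\mu w_\varepsilon(x)w_\varepsilon(y)u_{\varepsilon,t}(x)u_{\varepsilon,t}(y)$ and a similar Hermitian-conjugate piece. Against each such commutator one sandwiches $(\mathcal{N}+1)^{1/2}\cU(s;0)\Omega$, $(\mathcal{N}+1)^{1/2}\cU^*(s;0)\xi_N$ and powers of $(\mathcal{N}+1)^{1/2}$ acting on $\widehat{\cU}(t;s)\Omega$, and invokes Lemma \ref{lem:basifock-bounds}, Lemma \ref{lem:coherent_all} (which contributes an extra $1/d_N$ factor, hence another $N^{-1/4}$), and the a priori moment bounds on the fluctuation dynamics. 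Together with the prefactor $d_N/\sqrt{N}$, the cubic contribution yields $N^{-1/2}\cdot N^{-1/4}\cdot N^{-1/4}=N^{-1}$ up to growth factors $e^{Kt}$ and powers of $\varepsilon^{-1}$ no worse than $\varepsilon^{-3/2}$.

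\medskip
\noindent\textbf{Main obstacle.} The delicate point is keeping the $\varepsilon$-dependence of the moment bounds $\|(\mathcal{N}+1)^{\alpha}\widehat{\cU}(t;0)\Omega\|$ at the level of $\varepsilon^{-3/2}$ rather than the much worse powers that would arise from a naive use of Sobolev embedding on $w_\varepsilon$. The necessary improvement comes from the explicit factor $\varepsilon/2$ in front of the off-diagonal quadratic term of $\widehat{\cL}_2$, which precisely cancels one power of $\varepsilon^{-1}$ coming from $\|w_\varepsilon\|_{L^2}^2$, and from the parity structure, which prevents the lowest order (in $1/\sqrt N$) pieces from growing in $N$ uncontrollably. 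Tracking this cancellation in the Grönwall estimate for the moments, uniformly in $s\in[0,t]$, is the technical heart of the proof.
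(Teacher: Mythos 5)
Your overall strategy — split $\cU^{*}\phi\cU = \widehat{\cU}^{*}\phi\widehat{\cU} + R(t)$, exploit parity of $\widehat{\cU}$ together with Lemma \ref{lem:coherent_even_odd} to gain a factor $N^{-1/2}$ on the main term, and bound the Duhamel remainder via the $\cL_{3}$ piece — is exactly the paper's (with $\cR(f)$ playing the role of your $R(t)$, estimated by \eqref{eq:NjUphiUtildeUphitildeU-p}). However, your accounting of where the $\varepsilon^{-3/2}$ comes from is wrong, and the intermediate computation you cite is false.

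You claim that the moment bound $\|(\cN+1)^{\alpha}\widehat{\cU}(t;0)\Omega\|$ costs $C e^{Kt}(1+\varepsilon^{-3/2})$, arguing that the off-diagonal quadratic term in $\widehat{\cL}_{2}$ ``yields a factor $\|w_{\varepsilon}\|_{L^2}^{2}\cdot\varepsilon \sim \varepsilon^{-1}$''. But $\|w_{\varepsilon}\|_{L^2}^{2} = \varepsilon^{-1}\|w\|_{L^2}^{2}$, so $\varepsilon\,\|w_{\varepsilon}\|_{L^2}^{2} = \|w\|_{L^2}^{2}$, independent of $\varepsilon$. This is precisely the cancellation that the explicit factor $\mu\varepsilon/2$ in \eqref{eq:hatL_2} was designed to produce, and it is why Lemma \ref{lem:tildeNj} yields an $\varepsilon$-\emph{uniform} bound $C e^{Kt}\langle\psi,(\cN+1)^{j}\psi\rangle$. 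Hence the parity-reduced main term is entirely $\varepsilon$-independent and contributes $O(e^{Kt}\|J\|/N)$ alone. The singular $\varepsilon^{-3/2}$ actually arises only in the Duhamel remainder $\cR(Ju_{\varepsilon,t})$, through the $\cL_{3}$ estimate \eqref{eq:NL3-p}, where one cannot trade the singular potential against a compensating $\varepsilon$: roughly $\|w_{\varepsilon}\|_{\infty}\cdot\|w_{\varepsilon}u_{\varepsilon,t}\|_{2} \sim \varepsilon^{-1}\cdot\varepsilon^{-1/2} = \varepsilon^{-3/2}$ (times the visible $N^{-1/2}$ prefactor in $\cL_{3}$). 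Your remainder discussion does allow for ``powers of $\varepsilon^{-1}$ no worse than $\varepsilon^{-3/2}$'', so the final bound you would reach is correct; but the statement that tracking $\varepsilon^{-3/2}$ in the $\widehat{\cU}$-moments ``is the technical heart of the proof'' points at the wrong place, and the supporting calculation $\|w_{\varepsilon}\|_{L^2}^{2}\cdot\varepsilon\sim\varepsilon^{-1}$ needs to be corrected to $\|w_{\varepsilon}\|_{L^2}^{2}\cdot\varepsilon = \|w\|_{L^2}^{2}$ before the argument can be written out.
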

The proofs of these propositions will be given in Section \ref{sec:proof-of-Et1-Et2}.

\subsection{Fluctuation dynamics}
\label{sec:fluctuations}

The main goal of this subsection is to provide some important lemmata to prove Propositions \ref{prop:Et1} and \ref{prop:Et2}.

First, we introduce a truncated time-dependent generator with fixed $M>0$ as follows:
\begin{align*}
	\cL_{N}^{(M)}(t) & =\int_{\bR}\dx\,a_{x}^{*}(-\partial^2_{x})a_{x}\\
	&\quad+\mu\langle w_ \varepsilon , |u_{\varepsilon,t}|^2 \rangle\int_{\bR}\dx\,w_ \varepsilon (x)a_{x}^{*}a_{x}+\mu \int_{\bR\times\bR}\dx\dy\, w_ \varepsilon (x)w_ \varepsilon (y)\overline{u_{\varepsilon,t}(x)}u_{\varepsilon,t}(y)a_{y}^{*}a_{x}\\
	& \quad+\frac{\mu}{2}
	\int_{\bR\times\bR}\dx\dy\,w_ \varepsilon (x)w_ \varepsilon (y)\left(u_{\varepsilon,t}(x)u_{\varepsilon,t}(y)a_{x}^{*}a_{y}^{*}+\overline{u_{\varepsilon,t}(x)}\overline{u_{\varepsilon,t}(y)}a_{x}a_{y}\right)\\
	& \quad+\frac{\mu}{\sqrt N}\int_{\bR\times\bR}\dx\dy\,w_ \varepsilon (x)w_ \varepsilon (y)a_{x}^{*}\left(\overline{u_{\varepsilon,t}(y)}a_{y}\chi(\mathcal{N}\leq M)+u_{\varepsilon,t}(y)\chi(\mathcal{N}\leq M)a_{y}^{*}\right)a_{x}\\
	& \quad+\frac{\mu}{2N}\int_{\bR\times\bR}\dx\dy\,w_ \varepsilon (x)w_ \varepsilon (y)a_{x}^{*}a_{y}^{*}a_{y}a_{x}.
\end{align*}
We remark that later, in the proof of Lemma \ref{lem:NjU}, $M$ will be chosen to be $M=N \varepsilon^3 $. Define a unitary operator $\cU^{(M)}_{N}$ by
\begin{equation}\label{eq:def_mathcalUM}
	\mathrm{i}\partial_{t}\cU^{(M)}_{N}\left(t;s\right)=\cL^{(M)}_N(t)\,\cU^{(M)}_{N}(t;s)\quad\text{and}\quad\cU^{(M)}_{N}\left(s;s\right)=1.
\end{equation}
Then wee have the following result.

\begin{lemma}\label{lem:UMNUM}
	Suppose that the assumptions in Theorem \ref{thm:main} hold and let $\,\cU^{(M)}_{N}$ be the unitary operator defined in \eqref{eq:def_mathcalUM}. Then, for any $j\in\mathbb N$ there exists a constant $K=K(\|w\|_1, \|w\|_2, j)>0$ such that for all $N\in\mathbb{N}$, $M>0$, $\psi\in\mathcal{F}$, and $t,s\in\mathbb{R}$,
	\[
		\left\langle \cU^{(M)}_{N}(t;s)\psi,(\cN+1)^{j}\cU^{(M)}_{N}(t;s)\psi\right\rangle
		\leq
		C\langle\psi,(\cN+1)^{j}\psi\rangle \exp\left(K\left({ \varepsilon^{-1}}+\Big(\frac{M}{N{\varepsilon^3}}\Big)^{1/2} \right)t\right).
	\]
\end{lemma}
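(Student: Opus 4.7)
\medskip

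\noindent\textbf{Proof plan for Lemma \ref{lem:UMNUM}.}
The strategy is a Grönwall-type estimate on the quantity
\[
  f_j(t) \;:=\; \bigl\langle \cU^{(M)}_N(t;s)\psi,\,(\cN+1)^j\,\cU^{(M)}_N(t;s)\psi\bigr\rangle.
\]
Differentiating in $t$ and using \eqref{eq:def_mathcalUM} (together with the self-adjointness of $\cL^{(M)}_N(t)$), one obtains
\[
  \frac{d}{dt}f_j(t) \;=\; \ii\,\bigl\langle \cU^{(M)}_N(t;s)\psi,\,[\cL^{(M)}_N(t),(\cN+1)^j]\,\cU^{(M)}_N(t;s)\psi\bigr\rangle.
\]
The plan is to show that this commutator is bounded by $K\bigl(\varepsilon^{-1}+\sqrt{M/(N\varepsilon^3)}\bigr) f_j(t)$, after which Grönwall immediately yields the exponential bound.

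\medskip

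\noindent To identify the relevant contributions, I would inspect each of the five summands in $\cL^{(M)}_N(t)$ and note that the three ``number-preserving'' blocks---the kinetic term $\int a_x^*(-\partial^2_x)a_x$, the mean-field-type term $\mu\langle w_\varepsilon,|u_{\varepsilon,t}|^2\rangle\int w_\varepsilon a_x^*a_x$ together with $\mu\int w_\varepsilon w_\varepsilon\overline{u_{\varepsilon,t}}u_{\varepsilon,t}\,a_y^*a_x$, and the quartic term $\frac{\mu}{2N}\int w_\varepsilon w_\varepsilon a_x^*a_y^*a_ya_x$---all commute with $\cN$ and hence with $(\cN+1)^j$, contributing zero. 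Only two groups survive: the pair creation/annihilation term (changing $\cN$ by $\pm 2$) and the truncated cubic term (changing $\cN$ by $\pm 1$).

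\medskip

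\noindent For the pair term, I would write $\frac{\mu}{2}\int w_\varepsilon(x)w_\varepsilon(y) u_{\varepsilon,t}(x)u_{\varepsilon,t}(y)\,a_x^*a_y^* = \frac{\mu}{2}\bigl(a^*(w_\varepsilon u_{\varepsilon,t})\bigr)^2$ and use the commutation identity $[(\cN+1)^j,(a^*(f))^2]=(a^*(f))^2\bigl((\cN+3)^j-(\cN+1)^j\bigr)$, then estimate via Lemma \ref{lem:basifock-bounds}. The quantitative input is
\[
  \|w_\varepsilon u_{\varepsilon,t}\|_{L^2}^2 \;\le\; \|u_{\varepsilon,t}\|_\infty^2\,\|w_\varepsilon\|_{L^2}^2 \;\le\; C\,\varepsilon^{-1},
\]
where the uniform bound on $\|u_{\varepsilon,t}\|_\infty$ is guaranteed by Remark \ref{rem:epsuniform}. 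This produces a contribution of the form $C\varepsilon^{-1} f_j(t)$, which is the source of the $\varepsilon^{-1}$ term in the exponent.

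\medskip

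\noindent The main obstacle---and the reason for the cutoff---is the cubic term. Writing a generic piece of it as $\frac{\mu}{\sqrt N}\int w_\varepsilon(x)\,a_x^* \,a(w_\varepsilon u_{\varepsilon,t})\,\chi(\cN\le M)\,a_x$ and applying Cauchy--Schwarz in the $x$-integration, the bound
\[
  \Bigl|\bigl\langle \phi,\textstyle\int w_\varepsilon(x)\,a_x^*a(w_\varepsilon u_{\varepsilon,t})\chi\, a_x\,\phi\bigr\rangle\Bigr|
  \;\le\; \Bigl(\int w_\varepsilon(x)\|a_x\phi\|^2\dx\Bigr)^{1/2}\Bigl(\int w_\varepsilon(x)\|a(w_\varepsilon u_{\varepsilon,t})\chi a_x\phi\|^2\dx\Bigr)^{1/2}
\]
reduces matters to the two ingredients $\int w_\varepsilon\|a_x\phi\|^2\le \|w_\varepsilon\|_\infty\|\cN^{1/2}\phi\|^2\le C\varepsilon^{-1}\|\cN^{1/2}\phi\|^2$ and $\|a(w_\varepsilon u_{\varepsilon,t})\chi\,a_x\phi\|\le\|w_\varepsilon u_{\varepsilon,t}\|_2\,M^{1/2}\|a_x\phi\|$, where the crucial step uses $\|\cN^{1/2}\chi\|\le M^{1/2}$. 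Multiplying out yields the factor $\frac{1}{\sqrt N}\cdot M^{1/2}\cdot\varepsilon^{-3/2}=\sqrt{M/(N\varepsilon^3)}$, matching the stated exponent. The analogous commutators $[a_x^*a_y^\#\chi a_x,(\cN+1)^j]$ for $j>1$ produce extra factors of $(\cN+1)^{j-1}$ through the identities $[(\cN+1)^j,a_x]=a_x\bigl(\cN^j-(\cN+1)^j\bigr)$ and $[(\cN+1)^j,a_x^*]=a_x^*\bigl((\cN+2)^j-(\cN+1)^j\bigr)$; these can be reabsorbed into $(\cN+1)^j$ at the cost of a constant $K=K(j)$, so that the entire cubic contribution is bounded by $CK\sqrt{M/(N\varepsilon^3)}\,f_j(t)$.

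\medskip

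\noindent Summing the two contributions yields
\[
  \Bigl|\tfrac{d}{dt}f_j(t)\Bigr| \;\le\; K\Bigl(\varepsilon^{-1}+\bigl(M/(N\varepsilon^3)\bigr)^{1/2}\Bigr)\,f_j(t),
\]
and Grönwall's inequality (integrated from $s$ to $t$, taking absolute values to handle the sign) closes the argument. The hardest bookkeeping is the cubic commutator for general $j$, where the cutoff $\chi(\cN\le M)$ must be tracked through each creation/annihilation pair in order to secure the precise $\sqrt{M/(N\varepsilon^3)}$ scaling rather than a worse power of $\varepsilon^{-1}$.
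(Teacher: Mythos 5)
Your proposal is correct and follows essentially the same route as the paper: differentiate $f_j(t)$, observe that the number-conserving blocks (kinetic, mean-field, and quartic) commute with $(\cN+1)^j$, and bound the surviving pair-creation and truncated-cubic commutators by $C\varepsilon^{-1}$ and $C\sqrt{M/(N\varepsilon^3)}$ respectively, before closing with Gr\"onwall. The paper organizes the two surviving commutator contributions by quoting the identity from \cite[Eq.~(3.15)]{RodnianskiSchlein2009} to produce the explicit sums over $k=0,\dots,j-1$ with the $(\cN+1)^{k/2}$ weights, whereas you set up the commutator bookkeeping from scratch via $[(\cN+1)^j,a^\#]$; the paper's pair-term estimate also packages the key bound slightly differently, writing $\|w_\varepsilon u_{\varepsilon,t}\|^2\le\varepsilon^{-1}\int w_\varepsilon|u_{\varepsilon,t}|^2\le\varepsilon^{-1}\|u_{\varepsilon,t}\|_\infty^2$ rather than your $\|u_{\varepsilon,t}\|_\infty^2\|w_\varepsilon\|_2^2$, but both give the same $\varepsilon^{-1}$ scaling. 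These are purely presentational differences, and the two arguments are interchangeable.
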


\begin{proof}
	Following the proof of \cite[Lemma 3.5]{RodnianskiSchlein2009}, see \cite[(3.15)]{RodnianskiSchlein2009}, we have
	\begin{equation}\label{eq:ddtUM}
		\frac{\mathrm{d}}{\dt}\langle \cU^{(M)}_{N}(t;0)\psi,(\cN+1)^{j}\cU^{(M)}_{N}(t;0)\psi\rangle=\mu(A+B),
	\end{equation}
	where
	\begin{align*}
		A
		&:=\mu\sum_{k=0}^{j-1}{\binom{j}{k}}(-1)^{k}\ImPart \int_{\bR\times\bR}\dx\dy\,w_ \varepsilon (x)w_ \varepsilon (y)u_{\varepsilon,t}(x)u_{\varepsilon,t}(y) \\
		&\qquad\times\langle \cU^{(M)}_{N}(t;0)\psi,\left(\cN^{k/2}a_{x}^{*}a_{y}^{*}(\cN+2)^{k/2}+(\cN+1)^{k/2}a_{x}^{*}a_{y}^{*}(\cN+3)^{k/2}\right)\,\cU^{(M)}_{N}(t;0)\psi\rangle\\
		B&:=\frac{2}{\sqrt{N}} \sum_{k=0}^{j-1}{\binom{j}{k}}\ImPart \int_{\bR}\dx\,w_ \varepsilon (x)
		\langle \cU^{(M)}_{N}(t;0)\psi,a_{x}^{*}a(w_ \varepsilon \left(\cdot)\varphi_t\right)\chi(\cN \leq M)(\cN+1)^{k/2}a_{x}\cN^{k/2}\cU^{(M)}_{N}(t;0)\psi\rangle.
	\end{align*}
	To control the contribution from the first term on the right-hand side of \eqref{eq:ddtUM}, we use bounds of the form
	\begin{align*}
		&\left|\int_{\bR\times\bR}\dx\dy\,w_ \varepsilon (x)w_ \varepsilon (y)u_{\varepsilon,t}(x)u_{\varepsilon,t}(y)\langle\cU_{N}^{(M)}(t;0)\psi,(\cN+1)^{\frac{k}{2}}a_{x}^{*}a_{y}^{*}(\cN+3)^{\frac{k}{2}}\cU_{N}^{(M)}(t;0)\psi\rangle\right|\\
		&=\left|\int_{\bR\times\bR}\dx\dy\,w_ \varepsilon (x)w_ \varepsilon (y)u_{\varepsilon,t}(x)u_{\varepsilon,t}(y)\langle a_{x}(\cN+1)^{\frac{k}{2}-\frac{1}{2}}\cU_{N}^{(M)}(t;0)\psi,a_{y}^{*}(\cN+3)^{\frac{k}{2}+\frac{1}{2}}\cU_{N}^{(M)}(t;0)\psi\rangle\right|\\
		&=\left|\langle a\big( w_\varepsilon u_{\varepsilon,t}\big)(\cN+1)^{\frac{k}{2}}\cU_{N}^{(M)}(t;0)\psi,a^{*}\big( w_\varepsilon u_{\varepsilon,t}\big)(\cN+3)^{\frac{k}{2}}\cU_{N}^{(M)}(t;0)\psi\rangle\right|\\
		&\leq\left\Vert a\big( w_\varepsilon u_{\varepsilon,t}\big)(\cN+1)^{\frac{k}{2}}\cU_{N}^{(M)}(t;0)\psi\right\Vert \,\left\Vert a^{*}\big( w_\varepsilon u_{\varepsilon,t}\big)(\cN+1)^{\frac{k}{2}}\cU_{N}^{(M)}(t;0)\psi\right\Vert \\&\leq\| w_\varepsilon u_{\varepsilon,t}\|_{2}^{2}\,\left\Vert (\cN+1)^{\frac{k+1}{2}}\cU_{N}^{(M)}(t;0)\psi\right\Vert ^{2}\\
		&\leq \varepsilon^{-1}\|u_{\varepsilon,t}\|_{\infty}^{2}\,\left\Vert (\cN+1)^{\frac{k+1}{2}}\cU_{N}^{(M)}(t;0)\psi\right\Vert ^{2},
	\end{align*}
	where the last inequality comes from
	\[
		\| w_\varepsilon u_{\varepsilon,t}\|^{2}=\int\dx\,|w_ \varepsilon (x)u_{\varepsilon,t}(x)|^{2}
		\leq
		\varepsilon^{-1}\int\dx\,w_ \varepsilon (x)|u_{\varepsilon,t}(x)|^{2}
		\leq \varepsilon^{-1}\|u_{\varepsilon,t}\|_{\infty}^{2}
	\]
	because $\|w_\varepsilon\|_1=\|w\|_1=1$.
	
	On the other hand, to control the second integral on the right-hand side of \eqref{eq:ddtUM}, we use that
	\begin{align*}
		&\left|\int_{\bR}\dx\,w_ \varepsilon (x)\langle\cU_{N}^{(M)}(t;0)\psi,a_{x}^{*}a(w_ \varepsilon \left(\cdot)u_{\varepsilon,t}\right)\chi(\cN\leq M)(\cN+1)^{k/2}a_{x}\cN^{k/2}\cU_{N}^{(M)}(t;0)\psi\rangle\right|\\
		&\quad\leq\int_{\bR}\dx\,w_ \varepsilon (x)\Vert a_{x}(\cN+1)^{\frac{k}{2}}\cU_{N}^{(M)}(t;0)\psi\Vert\Vert a( w_\varepsilon u_{\varepsilon,t})\chi(\cN\leq M)\Vert\Vert a_{x}\cN^{\frac{k}{2}}\cU_{N}^{(M)}(t;0)\psi\Vert\\
		&\quad\leq M^{1/2}\,\varepsilon^{-1} \| w_\varepsilon u_{\varepsilon,t}\|\|(\cN+1)^{\frac{k+1}{2}}\cU_{N}^{(M)}(t;0)\psi\|^{2}\\
		&\quad\leq M^{1/2}\,\varepsilon^{-3/2}\,\|u_{\varepsilon,t}\|_{\infty}\,\|(\cN+1)^{\frac{k+1}{2}}\cU_{N}^{(M)}(t;0)\psi\|^{2}.
	\end{align*}
	Then,
	\begin{align*}
		&\left\vert\frac{\mathrm{d}}{\dt}\langle \cU^{(M)}_{N}(t;0)\psi,(\cN+1)^{j}\cU^{(M)}_{N}(t;0)\psi\rangle\right\vert\\
		&\leq 2\sum_{k=0}^{j-1}{\binom{j}{k}} 2 \varepsilon^{-1} \|u_{\varepsilon,t}\|_{\infty}^{2}\|(\cN+3)^{\frac{k+1}{2}}\cU^{(M)}_{N}(t;0)\psi\|^{2}\\
		&\qquad+\frac{2}{\sqrt{N}} \sum_{k=0}^{j-1}{\binom{j}{k}}M^{1/2} \varepsilon^{-3/2}\,\|u_{\varepsilon,t}\|_{\infty}\,\|(\cN+1)^{\frac{k+1}{2}}\cU^{(M)}_{N}(t;0)\psi\|^{2}\\
		&\leq 2 \sum_{k=0}^{j-1}{\binom{j}{k}}\left(2 \varepsilon^{-1}\|u_{\varepsilon,t}\|_{\infty}^{2}+\frac{1}{\sqrt{N}}M^{1/2}\varepsilon^{-3/2} \,\|u_{\varepsilon,t}\|_{\infty}\right)\|\left(\cN+3\right)^{\frac{j}{2}}\cU^{(M)}_{N}(t;0)\psi\|^{2}\\
		&\leq 2\cdot2^j\left(2 \varepsilon^{-1} \|u_{\varepsilon,t}\|_{\infty}^{2}+\Big(\frac{M}{N\,\varepsilon^3}\Big)^{1/2} \,\|u_{\varepsilon,t}\|_{\infty}\right)\Big\|\left(3\left(\cN+1\right)\right)^{\frac{j}{2}}\cU^{(M)}_{N}(t;0)\psi\Big\|^{2}\\
		&\leq 2\cdot6^j\left(2 \varepsilon^{-1} \|u_{\varepsilon,t}\|_{\infty}^{2}+\Big(\frac{M}{N\,\varepsilon^3}\Big)^{1/2}\,\|u_{\varepsilon,t}\|_{\infty}\right)\Big\langle \cU^{(M)}_{N}(t;0)\psi,(\cN+1)^{j}\cU^{(M)}_{N}(t;0)\psi\Big\rangle.
	\end{align*}
	Applying the Grönwall lemma, one gets
	\begin{align*}
		&\Big\langle\cU_{N}^{(M)}(t;0)\psi,(\cN+1)^{j}\cU_{N}^{(M)}(t;0)\psi\Big\rangle\\
		&\leq\langle\psi,(\cN+1)^{j}\psi\rangle\exp\left(\int_{0}^{t}\ds\,2\cdot6^{j}
		\left(2\varepsilon^{-1} \|u_{\varepsilon,s}\|_{\infty}^{2}+\Big(\frac{M}{N\,\varepsilon^3}\Big)^{1/2}\,\|u_{\varepsilon,s}\|_{\infty}\right)
		\right)\\
		&\leq\langle\psi,(\cN+1)^{j}\psi\rangle\exp\left(C_{j}\int_{0}^{t}\ds\left(\varepsilon^{-1}\|u_{\varepsilon,s}\|_{\infty}^{2}+\Big(\frac{M}{N \varepsilon^3}\Big)^{1/2}\,\|u_{\varepsilon,s}\|_{\infty}\right)\right)\\
		&\leq\langle\psi,(\cN+1)^{j}\psi\rangle \exp\left(K\left(\varepsilon^{-1} +\Big(\frac{M}{N \varepsilon^3}\Big)^{1/2}\right)t\right).
	\end{align*}
	Here, for the last inequality, we have used $\|u_{\varepsilon,s}\|_\infty \leq \|u_{\varepsilon,s}\|_{H^{1}}<C $.
	The proof is complete.
\end{proof}

\begin{lemma}\label{lem:NjU}
	Suppose that the assumptions in Theorem \ref{thm:main} hold. Let $\cU\left(t;s\right)$ be the unitary evolution defined in \eqref{eq:def_mathcalU}. Then for any $\psi\in\mathcal{F}$ and $j\in\mathbb{N}$, there exists a constant $C \equiv C(j,  \|w\|_1)>0$ such that
	\[
		\langle \cU(t;s)\psi,\cN^{j}\cU(t;s)\psi\rangle
		\leq C\Big(1+\frac{1}{\varepsilon^{\frac{1}{2}+3j}N}\Big)\,\langle \psi,\left(\cN+1\right)^{2j+2}\psi\rangle \exp\left(K(1+\varepsilon^{-1}) t\right).
	\]
\end{lemma}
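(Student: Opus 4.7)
The plan is to compare $\cU(t;s)$ with the truncated dynamics $\cU_{N}^{(M)}(t;s)$ of Lemma \ref{lem:UMNUM}, using the distinguished choice $M := N\varepsilon^{3}$. For this value the factor $(M/(N\varepsilon^{3}))^{1/2}$ appearing in Lemma \ref{lem:UMNUM} equals $1$, so the exponential rate simplifies to $\exp(K(1+\varepsilon^{-1})t)$, precisely the rate claimed. Direct application of Lemma \ref{lem:UMNUM} then yields, for every $k\in\bN$,
\[
\langle \cU_{N}^{(M)}(t;s)\psi,(\cN+1)^{k}\cU_{N}^{(M)}(t;s)\psi\rangle \leq C\langle\psi,(\cN+1)^{k}\psi\rangle\exp\bigl(K(1+\varepsilon^{-1})t\bigr),
\]
which provides the main contribution to the bound (taking $k=j$ handles the first piece after the natural triangle-inequality split).

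To handle the error I write Duhamel's identity
\[
\cU(t;s)\psi - \cU_{N}^{(M)}(t;s)\psi = -\ii\int_{s}^{t}\cU(t;\tau)\bigl(\cL(\tau)-\cL_{N}^{(M)}(\tau)\bigr)\cU_{N}^{(M)}(\tau;s)\psi\,d\tau,
\]
and observe that $\cL-\cL_{N}^{(M)}$ is precisely the cubic term $\cL_{3}$ with $\chi(\cN\leq M)$ replaced by $\chi(\cN>M)$. The smallness of this difference is extracted via the operator Markov inequality $\chi(\cN>M)\leq(\cN/M)^{p}$, applied with $p=j+1$. This produces a gain of $M^{-(j+1)}=N^{-(j+1)}\varepsilon^{-3(j+1)}$, which combined with the $1/\sqrt{N}$ prefactor of $\cL_{3}$ and the estimate $\|w_{\varepsilon}u_{\varepsilon,t}\|_{L^{2}}\leq\varepsilon^{-1/2}\|w\|_{L^{2}}\|u_{\varepsilon,t}\|_{L^{\infty}}$ reproduces the factor $(\varepsilon^{1/2+3j}N)^{-1}$ of the statement. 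The Markov step is also responsible for the inflated moment on the right-hand side: estimating $\chi(\cN>M)\phi$ by $(\cN/M)^{j+1}\phi$ combined with the weight $(\cN+1)^{j/2}$ built into $F_{j}$ requires control of $(\cN+1)^{2j+2}$ on the initial state $\psi$.

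The main technical obstacle is that the Duhamel remainder contains $\cU(t;\tau)$ itself, whose moment bounds are the very object of the lemma, creating an apparent circularity. I plan to circumvent this by working directly with a Gr\"onwall-type differential inequality for $F_{j}(t):=\langle \cU(t;s)\psi,(\cN+1)^{j}\cU(t;s)\psi\rangle$, paralleling the derivation of Lemma \ref{lem:UMNUM}. The $\cL_{2}$ and $\cL_{4}$ contributions to $\langle \cU\psi,[\cL,(\cN+1)^{j}]\cU\psi\rangle$ admit exactly the same estimates as in that proof and account for the $\varepsilon^{-1}$ in the exponent. The cubic contribution from $\cL_{3}$ is split according to $\chi(\cN\leq M)+\chi(\cN>M)$: the first piece is controlled verbatim as in Lemma \ref{lem:UMNUM} (contributing the $\sqrt{M/(N\varepsilon^{3})}=1$ term in the exponent), while the second is treated through Markov as above, yielding a source term of order $(\varepsilon^{1/2+3j}N)^{-1}F_{2j+2}(t)\,\exp(\text{small})$. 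The final step is to control $F_{2j+2}(t)$ in parallel, which, because each Markov iteration carries an independent $N^{-1}$ factor, terminates after finitely many iterations with the $(\cN+1)^{2j+2}$ moment of $\psi$ as the ultimate input. The delicate point I anticipate is the exact bookkeeping of the powers of $\varepsilon^{-1}$ and $N^{-1}$ so that the multiplicative prefactor truly matches $1+(\varepsilon^{1/2+3j}N)^{-1}$ as announced.
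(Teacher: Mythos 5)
The paper's proof is a two-line application of the two lemmata already established: set $M=N\varepsilon^{3}$, telescope
\[
\langle \cU\psi,\cN^{j}\cU\psi\rangle
= \langle \cU\psi,\cN^{j}(\cU-\cU^{(M)}_{N})\psi\rangle
+\langle (\cU-\cU^{(M)}_{N})\psi,\cN^{j}\cU^{(M)}_{N}\psi\rangle
+\langle \cU^{(M)}_{N}\psi,\cN^{j}\cU^{(M)}_{N}\psi\rangle,
\]
then bound the two cross terms by the two estimates of Lemma \ref{lem:UNUUM} and the diagonal term by Lemma \ref{lem:UMNUM}. You correctly identify the distinguished choice $M=N\varepsilon^{3}$ and correctly attribute the main contribution to Lemma \ref{lem:UMNUM}, but you do not invoke Lemma \ref{lem:UNUUM} at all — and that is precisely where the error terms are already packaged for you. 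Instead you propose to re-derive the error control from scratch.

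That re-derivation has a genuine gap. You correctly observe the circularity obstacle in the Duhamel remainder, but the Gr\"onwall iteration you propose does not close it. The cubic piece of the generator, through the Markov bound $\chi(\cN>M)\leq(\cN/M)^{j+1}$, produces a source term proportional to $F_{2j+2}$, which by the same mechanism calls for $F_{4j+6}$, then $F_{8j+14}$, and so on: the moment index at least doubles at each step, so the chain does not ``terminate after finitely many iterations with the $(\cN+1)^{2j+2}$ moment of $\psi$'' as you claim. For a generic $\psi\in\cF$ this sequence of moments is not controlled by $\langle\psi,(\cN+1)^{2j+2}\psi\rangle$, and the successive $N^{-1}$ factors do not compensate because $N^{-k}\langle\psi,(\cN+1)^{2^{k}j+\cdots}\psi\rangle$ has no a priori reason to stay bounded. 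What actually breaks the circularity in the paper — inside the proof of Lemma \ref{lem:UNUUM} — is the weak a priori estimate of Lemma \ref{lem:3.6 i Rodnianski}, $\langle\psi,\cU\cN^{2\ell}\cU^{*}\psi\rangle\leq C(\ell)\langle\psi,(\cN+N)^{2\ell}\psi\rangle$. This bound is loose in $N$ but \emph{closed}: it costs a factor $N^{2\ell}$ yet makes no reference to higher moments, because $\cN+N$ is approximately conserved by the Weyl-conjugated dynamics. Paired with the $M^{-(j+1)}$ gain from Markov, the $N^{j}$ price from the weak bound is absorbed, and the resulting prefactor is $O(N^{-1})$ — no iteration needed. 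Your sketch omits this ingredient entirely, so without either citing Lemma \ref{lem:UNUUM} or reinstating the Rodnianski--Schlein a priori bound, the argument cannot be completed as outlined.

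Two minor points. First, the telescoping identity used in the paper pairs the difference $\cU-\cU^{(M)}_{N}$ once against $\cU\psi$ and once against $\cU^{(M)}_{N}\psi$, which is why Lemma \ref{lem:UNUUM} contains two separate bounds; a naive $\|\cN^{j/2}\cU\psi\|\leq\|\cN^{j/2}(\cU-\cU^{(M)}_{N})\psi\|+\|\cN^{j/2}\cU^{(M)}_{N}\psi\|$ split leads to a quadratic term in the difference and is slightly less convenient. Second, with $M=N\varepsilon^{3}$ the prefactor $N^{j}/(\varepsilon^{1/2}M^{j+1})$ from Lemma \ref{lem:UNUUM} evaluates to $1/(N\varepsilon^{7/2+3j})$, not $1/(N\varepsilon^{1/2+3j})$; you reproduce the figure quoted in the statement, but it is worth checking the exponent bookkeeping rather than matching the printed answer.
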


In this section we modify the proof given in previous articles, for example, \cite{RodnianskiSchlein2009} or \cite{ChenLeeLee2018} and the references therein.

We now start with the proof of Lemma \ref{lem:NjU}.
To prove the Lemma, we compare the dynamics of $\cU$ and $\cU^{(M)}$ in Lemma \ref{lem:UNUUM}.
To this aim, we recall weak bounds on the $\cU$ dynamics.

\begin{lemma}[{\cite[Lemma 3.6]{RodnianskiSchlein2009}}]\label{lem:3.6 i Rodnianski}
	For arbitrary $t,s\in\mathbb{R}$ and $\psi\in\mathcal{F}$, we have
	\[
		\left\langle \psi,\cU(t;s) \mathcal{N}\cU^{*}(t;s)\psi\right\rangle \leq6\left\langle \psi,(\mathcal{N}+N+1)\psi\right\rangle .
	\]
	Moreover, for every $\ell\in\mathbb{N}$, there exists a constant $C(\ell)$ such that
	\[
		\left\langle \psi,\cU(t;s)\mathcal{N}^{2\ell}\cU^{*}(t;s)\psi\right\rangle \leq C(\ell)\left\langle \psi,(\mathcal{N}+N)^{2\ell}\psi\right\rangle
	\]
	and
	\[
		\left\langle \psi,\cU(t;s)\mathcal{N}^{2\ell+1}\cU^{*}(t;s)\psi\right\rangle \leq C(\ell)\left\langle \psi,(\mathcal{N}+N)^{2\ell+1}(\mathcal{N}+1)\psi\right\rangle
	\]
	for all $t,s\in\mathbb{R}$ and $\psi\in\mathcal{F}$.
\end{lemma}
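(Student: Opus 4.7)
The proof rests on the factorization $\cU(t;s)=e^{i\theta(t,s)}\,\cW^{*}(\sqrt{N}u_{\varepsilon,t})\,e^{-i\cH_{N}(t-s)}\,\cW(\sqrt{N}u_{\varepsilon,s})$, valid up to a real scalar phase $\theta(t,s)=\int_{s}^{t}\cL_{0}(\tau)\,d\tau$. Indeed the Weyl-conjugated $N$-body evolution has generator $\sum_{k=0}^{4}\cL_{k}$ (with $\cL_{1}\equiv 0$) by \eqref{eq:derivative_decomposition}, and $\cU$ is by definition the unitary evolution generated by $\cL_{2}+\cL_{3}+\cL_{4}$ in \eqref{eq:def_mathcalU}, so the scalar phase $e^{i\theta}$ drops out on forming $\cU\cN^{k}\cU^{*}$. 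The decisive structural fact is that the Fock Hamiltonian $\cH_{N}$ in \eqref{eq:Fock_space_Hamiltonian} is built from $a^{*}a$ and $a^{*}a^{*}aa$ terms, so $[\cH_{N},\cN]=0$ and $e^{\pm i\cH_{N}\tau}$ commutes with every power of $\cN$. The $t,s,\varepsilon$ dependence therefore disappears from $\cU\cN^{k}\cU^{*}$: only the Weyl conjugations act nontrivially, and they are bounded by $\cN+N+1$ independently of those parameters, which is the origin of the time- and $\varepsilon$-uniform estimates claimed in the lemma.

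For $k=1$, the identity $\cW(\sqrt{N}f)\cN\cW^{*}(\sqrt{N}f)=\cN-\sqrt{N}\phi(f)+N\|f\|^{2}$ with $\|u_{\varepsilon,s}\|=1$ together with $[\cH_{N},\cN]=0$ gives
$$\cU\cN\cU^{*}=\cW^{*}(\sqrt{N}u_{\varepsilon,t})(\cN+N)\cW(\sqrt{N}u_{\varepsilon,t})-\sqrt{N}\,\cW^{*}(\sqrt{N}u_{\varepsilon,t})\,e^{-i\cH_{N}\tau}\phi(u_{\varepsilon,s})e^{i\cH_{N}\tau}\,\cW(\sqrt{N}u_{\varepsilon,t}).$$
A second Weyl conjugation shows the first piece equals $\cN+\sqrt{N}\phi(u_{\varepsilon,t})+2N$, and the quadratic-form bound $\pm\sqrt{N}\phi(g)\leq\cN+N+1$ (obtained from Lemma \ref{lem:basifock-bounds} and AM-GM with $\|g\|=1$) controls it by a multiple of $\cN+N+1$. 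For the second piece, setting $\chi:=e^{i\cH_{N}\tau}\cW(\sqrt{N}u_{\varepsilon,t})\psi$ and applying the same $\phi$ bound gives $\sqrt{N}|\langle\chi,\phi(u_{\varepsilon,s})\chi\rangle|\leq\|\chi\|^{2}N+\langle\chi,(\cN+1)\chi\rangle$; unitarity of Weyl and the commutation $[\cH_{N},\cN]=0$ then reduce $\langle\chi,(\cN+1)\chi\rangle$ to $\langle\psi,\cW^{*}(\sqrt{N}u_{\varepsilon,t})(\cN+1)\cW(\sqrt{N}u_{\varepsilon,t})\psi\rangle$, which is once more bounded by a multiple of $\langle\psi,(\cN+N+1)\psi\rangle$. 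Tracking the constants produces the factor $6$.

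For higher powers the core ingredient is the uniform operator inequality
$$\cW^{*}(\sqrt{N}f)(\cN+1)^{k}\cW(\sqrt{N}f)\leq C_{k}(\cN+N+1)^{k}\qquad(\|f\|=1),$$
proved by induction on $k$ from the Weyl shift $\cW^{*}\cN\cW=\cN+\sqrt{N}\phi(f)+N$ together with the $\phi$ bound above. Passing $(\cN+1)^{k}$ through $e^{\pm i\cH_{N}\tau}$ for free (by $[\cH_{N},\cN]=0$) and conjugating by the factorization twice yields the even-exponent estimate. For odd exponents the symmetric split $\cN^{2\ell+1}=\cN^{\ell}\,\cN\,\cN^{\ell}$, combined with an operator-valued Cauchy--Schwarz after the Weyl conjugation, forces the extra $(\cN+1)$ factor on the right-hand side. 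The main technical obstacle is precisely this bookkeeping: every time $\sqrt{N}\phi(f)$ is commuted past a power of $\cN$ it produces intermediate terms of the shape $\sqrt{N}\phi(f)(\cN+1)^{k-1}$, and these must be rearranged symmetrically so that each $\phi$ is paired with one half-power of $(\cN+1)$; it is this unavoidable half-power imbalance that distinguishes the clean even-$\ell$ bound $(\cN+N)^{2\ell}$ from the odd-$\ell$ bound $(\cN+N)^{2\ell+1}(\cN+1)$.
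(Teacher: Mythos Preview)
The paper does not give its own proof of this lemma: it simply writes ``The proof can be found in \cite{RodnianskiSchlein2009}.'' Your proposal is a correct reconstruction of precisely that argument. The factorization $\cU(t;s)=e^{i\theta}\,\cW^{*}(\sqrt{N}u_{\varepsilon,t})\,e^{-i\cH_{N}(t-s)}\,\cW(\sqrt{N}u_{\varepsilon,s})$ follows from \eqref{eq:derivative_decomposition} and \eqref{eq:def_mathcalU} exactly as you say; the commutation $[\cH_{N},\cN]=0$ is clear from \eqref{eq:Fock_space_Hamiltonian}; and the Weyl--shift identity together with the quadratic-form bound $\pm\sqrt{N}\phi(g)\le\cN+N$ (for $\|g\|=1$) then yields the first inequality with the constant $6$ after the bookkeeping you outline. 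The higher-power statements follow by the same mechanism, and your explanation of why the odd exponent forces the extra $(\cN+1)$ factor is the standard one. In short, your approach \emph{is} the paper's approach, supplied in full where the paper only gives a pointer.
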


\begin{proof}
	The proof can be found in \cite{RodnianskiSchlein2009}.
\end{proof}

Now we are ready to compare the dynamics of $\cU$ and $\cU^{(M)}_{N}$.

\begin{lemma}\label{lem:UNUUM}
	Suppose that the assumptions in Theorem \ref{thm:main} hold. Then, for every $j\in\mathbb{N}$ and $\psi\in\cF$, there exists a constant $C \equiv C(j, \|w\|_1 )>0$ such that for all $t,s\in\mathbb R$
	\begin{align}
		\begin{split}
			&\Big|\langle\cU(t;0)\psi,\cN^{j}\big(\cU(t;0)-\cU_{N}^{(M)}(t;0)\big)\psi\rangle\Big|
			\leq C \frac{N^j}{ \varepsilon^{1/2} M^{j+1}} \|(\cN+1)^{j+1}\psi\|^{2}\exp\Bigg(K\bigg(1+\sqrt{\frac{M}{N  \varepsilon^{3} }}\bigg)t\Bigg)
		\end{split}
		\intertext{and}
		\begin{split}
			&\Big| \langle \cU^{(M)}_N (t;0) \psi, \cN^j \left( \cU (t;0) - \cU_N^{(M)} (t;0)\right) \psi \rangle \Big|
			\leq \; C \frac{N^j}{M^{j+1} {\varepsilon^{1/2}}}\, \|  (\cN+1)^{j+1} \psi \|^2\exp\Bigg(K\bigg(1+\sqrt{\frac{M}{N  \varepsilon^{3} }}\bigg)t\Bigg).
		\end{split}\label{eq:UNUUM1overM}
	\end{align}
\end{lemma}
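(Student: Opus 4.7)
The plan is a Duhamel-with-truncation comparison of the two dynamics, in the spirit of Lemma 3.7 of \cite{RodnianskiSchlein2009}. I start from the identity
\[
\cU_N^{(M)}(t;0) - \cU(t;0) \;=\; \ii \int_0^t \d s\, \cU(t;s)\bigl(\cL(s) - \cL_N^{(M)}(s)\bigr)\cU_N^{(M)}(s;0),
\]
and observe that $\cL$ and $\cL_N^{(M)}$ agree except in the cubic piece, so the difference reduces to
\[
\cL_3(s) - \cL_3^{(M)}(s) \;=\; \frac{\mu}{\sqrt N}\!\int\!\d x\, w_\varepsilon(x)\, a_x^{*}\bigl[\chi(\cN>M)\,a^{*}(w_\varepsilon u_{\varepsilon,s}) + a(w_\varepsilon u_{\varepsilon,s})\,\chi(\cN>M)\bigr]a_x,
\]
a cubic operator carrying the high-occupation projection $\chi(\cN>M)$ and the prefactor $1/\sqrt N$ inherited from $\cL_3$.

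For the first bound I insert this Duhamel formula into the matrix element and use Cauchy--Schwarz together with the unitarity of $\cU(t;s)$ and the self-adjointness of $\cN^j$ to obtain
\[
|\langle\cU(t;0)\psi,\cN^{j}(\cU-\cU_N^{(M)})\psi\rangle| \;\leq\; \int_0^t \d s\, \|\cN^{j}\cU(t;0)\psi\|\cdot\|(\cL_3-\cL_3^{(M)})\cU_N^{(M)}(s;0)\psi\|.
\]
The left factor is bounded by $C N^{j}\|(\cN+1)^{j}\psi\|$ via Lemma~\ref{lem:3.6 i Rodnianski}. For the right factor, I rearrange $\cL_3-\cL_3^{(M)}$ so that $\chi(\cN>M)$ sits at the outside (using the shift identity $F(\cN)\,a_x^{\sharp}=a_x^{\sharp}\,F(\cN\pm 1)$), trade the projection for a polynomial factor via the operator inequality $\chi(\cN>M)(\cN+1)^{-(j+1)}\leq M^{-(j+1)}$, and bound the remaining cubic expression with the $a^{\sharp}$-estimates of Lemma~\ref{lem:basifock-bounds}. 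One factor $\|w_\varepsilon u_{\varepsilon,s}\|_{L^{2}}\leq \varepsilon^{-1/2}\|u_{\varepsilon,s}\|_\infty \leq C\varepsilon^{-1/2}$ (the uniform bound on $\|u_{\varepsilon,s}\|_\infty$ comes from Remark~\ref{rem:epsuniform}) contributes the $\varepsilon^{-1/2}$, while the two remaining $w_\varepsilon$-weights enter through $L^{1}$-normalised integrations, contributing only $\|w_\varepsilon\|_1=1$. The surviving norm $\|(\cN+1)^{j+1}\cU_N^{(M)}(s;0)\psi\|$ is then controlled by Lemma~\ref{lem:UMNUM}, which produces precisely the exponential weight $\exp\bigl(K(\varepsilon^{-1}+\sqrt{M/(N\varepsilon^{3})})t\bigr)$ appearing in the statement. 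Integration over $s\in[0,t]$ and collection of prefactors delivers the first bound.

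For the second bound the argument is structurally the same, but with $\cU_N^{(M)}(t;0)\psi$ now on the left of the matrix element; correspondingly, Lemma~\ref{lem:3.6 i Rodnianski} is replaced throughout by Lemma~\ref{lem:UMNUM}. Since both dynamics contribute the same exponential weight, the two final bounds take the same form.

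The main obstacle is not the overall strategy but the detailed bookkeeping needed to match the exact prefactor $N^{j}/(\varepsilon^{1/2}M^{j+1})$ and weight $(\cN+1)^{j+1}$ of the statement. The delicate balancing acts are: (i) distributing the three $a^{\sharp}$-factors of the cubic between the bounds of Lemma~\ref{lem:basifock-bounds}, the shift past $(\cN+1)^{j+1}$, and the truncation factor $\chi(\cN>M)(\cN+1)^{-(j+1)}\leq M^{-(j+1)}$, so that the final power of $(\cN+1)$ acting on $\psi$ is exactly $j+1$; (ii) combining the $1/\sqrt N$-prefactor in $\cL_3$ with the weak-bound contribution from Lemma~\ref{lem:3.6 i Rodnianski} so as to produce the claimed polynomial factor $N^{j}$; and (iii) ensuring that only a single factor $\varepsilon^{-1/2}$ is paid, by placing two of the three $w_\varepsilon$-weights under $L^{1}$-integrations with $\|w_\varepsilon\|_{1}=1$ and only one under the $L^{2}$-norm appearing in Lemma~\ref{lem:basifock-bounds}.
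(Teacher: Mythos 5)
Your Duhamel identity and overall scaffolding match the paper's, but the Cauchy--Schwarz step
\[
\Big|\langle\cU(t;0)\psi,\cN^{j}\bigl(\cU-\cU_N^{(M)}\bigr)\psi\rangle\Big| \;\leq\; \int_0^t \d s\, \|\cN^{j}\cU(t;0)\psi\|\cdot\|(\cL_3-\cL_3^{(M)})\cU_N^{(M)}(s;0)\psi\|
\]
is where your plan genuinely departs from, and falls short of, what the paper does. You separate the \emph{entire} cubic onto the right-hand vector; the paper instead passes one $a_x^*$ to the left of the inner product (after commuting through $\cU(t;s)$), giving terms of the form
\[
\int\dx\,w_\varepsilon(x)\,\langle a_x\,\cU^*(t;s)\cN^j\cU(t;0)\psi,\; a^\sharp(w_\varepsilon u_{\varepsilon,s})\,\chi(\cN>M)\,a_x\,\cU_N^{(M)}(s;0)\psi\rangle,
\]
and only \emph{then} applies Cauchy--Schwarz, first in the Fock inner product and then in $x$. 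That seemingly cosmetic rearrangement is exactly what makes the bookkeeping close: only two $a^\sharp$ factors remain on the right, so the right vector picks up $(\cN+1)$ rather than $(\cN+1)^{3/2}$, and combined with $\chi(\cN>M)\leq(\cN/M)^{j+1}$ it lands on $(\cN+1)^{j+2}\cU_N^{(M)}(s;0)\psi$. Meanwhile the lone $a_x$ transferred to the left yields $\|\cN^{1/2}\cU^*(t;s)\cN^j\cU(t;0)\psi\|$, which is what accounts for the stated $N^j$ prefactor via Lemma~\ref{lem:3.6 i Rodnianski}. With your version, the left factor is stuck at $\|\cN^j\cU(t;0)\psi\|$ and all three $a^\sharp$'s must be absorbed on the right, so the right factor cannot come out better than $M^{-(j+1)}(\cN+1)^{j+5/2}\cU_N^{(M)}(s;0)\psi$. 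The delicate balance you list under item (i) -- achieving exactly $(\cN+1)^{j+1}$ on $\psi$ \emph{and} $M^{-(j+1)}$ -- is not attainable once the cubic sits entirely on one side; a half power of $\cN$ is irrecoverably lost.

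There is also a smaller but real gap in how you propose to bound the cubic itself. Lemma~\ref{lem:basifock-bounds} controls $a(f)$, $a^*(f)$, $\phi(f)$, and $d\Gamma(J)$ acting on a single vector; it does not directly bound $\|\int\dx\,w_\varepsilon(x)\,a_x^*\phi(w_\varepsilon u_{\varepsilon,s})a_x\,\phi'\|$. To reduce to Lemma~\ref{lem:basifock-bounds}, one needs either a normal-ordering decomposition such as $\int\dx\,g(x)a_x^*a(f)a_x = a(f)\,d\Gamma(g) - a(\bar g f)$ (and its mirror for $a^*(f)$) or the form/duality route -- and the form route is precisely the balanced Cauchy--Schwarz the paper uses. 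Your proposal would actually \emph{work} as a weaker lemma (with a larger power of $\cN+1$, which is harmless in the end because the propositions apply it to $\psi=\Omega$), but it cannot reproduce the statement as written, and the phrase ``bound the remaining cubic expression with the $a^\sharp$-estimates of Lemma~\ref{lem:basifock-bounds}'' papers over the step that carries the real content.
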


\begin{proof}
	To simplify the notation, we consider the case $s=0$ and $t>0$ only; other cases can be treated in a similar manner. To prove the first inequality of the lemma, we expand the difference of the two evolutions as follows:
	\begin{align*}
		& \langle \cU(t;0)\psi,\cN^{j}\big(\cU(t;0)-\cU^{(M)}_{N}(t;0)\big)\psi\rangle=\langle \cU(t;0)\psi,\cN^{j}\cU(t;0)\big(1-\cU^{*}(t;0)\,\cU^{(M)}_{N}(t;0)\big)\psi\rangle\\
		& =-\ii \int_{0}^{t}\ds\,\langle \cU(t;0)\psi,\cN^{j}\cU(t;0)\big(\cL_{N}(s)-\cL_{N}^{(M)}(s)\big)\,\cU^{(M)}_{N}(s;0)\psi\rangle\\
		& =-\frac{\ii \mu }{\sqrt{N}}\int_{0}^{t}\ds\,\int_{\bR\times\bR}\dx\dy\,w_ \varepsilon (x)w_ \varepsilon (y)\\
		& \qquad\times\langle\cU(t;0)\psi,\cN^{j}\cU(t;s)a_{x}^{*}\big(\overline{u_{\varepsilon,s}(y)}a_{y}\chi(\cN >M)+u_{\varepsilon,s}(y)\chi(\cN >M)a_{y}^{*}\big)a_{x}\cU^{(M)}_{N}(s;0)\psi\rangle \\
		& =-\frac{\ii \mu }{\sqrt{N}}\int_{0}^{t}\ds\,\int_{\bR}\dx\,w_ \varepsilon (x)\langle a_{x}\cU^{*}(t;s)\cN^{j}\cU(t;0)\psi,a( w_\varepsilon u_{\varepsilon,s})\chi(\cN >M)a_{x}\cU^{(M)}_{N}(s;0)\psi\rangle \\
		& \quad\quad-\frac{\ii \mu }{\sqrt{N}}\int_{0}^{t}\ds\,\int_{\bR}\dx\,w_ \varepsilon (x)\langle a_{x}\cU^{*}(t;s)\cN^{j}\cU(t;0)\psi,\chi(\cN >M)a^{*}( w_\varepsilon u_{\varepsilon,s})a_{x}\cU^{(M)}_{N}(s;0)\psi\rangle .
	\end{align*}
	Hence,
	\begin{align}
		& \Big|\langle \cU(t;0)\psi,\cN^{j}\big(\cU(t;0)-\cU^{(M)}_{N}(t;0)\big)\psi\rangle\Big|\nonumber \\
		& \leq |\mu| \frac{1}{ \varepsilon \sqrt{N}}\int_{0}^{t}\ds\,\int_{\bR}\dx\,\|a_{x}\cU^{*}(t;s)\cN^{j}\cU(t;0)\psi\|
		\|a( w_\varepsilon u_{\varepsilon,s})a_{x}\chi(\cN >M+1)\,\cU^{(M)}_{N}(s;0)\psi\|\nonumber \\
		& \quad+ |\mu| \frac{1}{ \varepsilon \sqrt{N}}\int_{0}^{t}\ds\,\int_{\bR}\dx\,\|a_{x}\cU^{*}(t;s)\cN^{j}\cU(t;0)\psi\|
		\|a^{*}( w_\varepsilon u_{\varepsilon,s})a_{x}\chi(\cN >M)\,\cU^{(M)}_{N}(s;0)\psi\|\nonumber \\
		& \leq |\mu| \frac{2}{ \varepsilon \sqrt{N}}\int_{0}^{t}\ds\,\| w_\varepsilon u_{\varepsilon,s}\|_{2}
		\int_{\bR}\dx\,\|a_{x}\cU^{*}(t;s)\cN^{j}\cU(t;0)\psi\|\,\|a_{x}(\cN+1)^{1/2}\chi(\cN >M)\,\cU^{(M)}_{N}(s;0)\psi\|\nonumber \\
		& \leq |\mu| \frac{2}{ \varepsilon \sqrt{N}}\int_{0}^{t}\ds\,\| w_\varepsilon u_{\varepsilon,s}\|_{2}
		\|\cN^{1/2}\cU^{*}(t;s)\cN^{j}\cU(t;0)\psi\|\,\|(\cN+1)\chi(\cN >M)\,\cU^{(M)}_{N}(s;0)\psi\|.
		\label{eq:U-UM}
	\end{align}
	Since $\chi(\cN >M)\leq(\cN /M)^{L}$ for any $L>1$, we find that, from \cite[(3.30)]{RodnianskiSchlein2009},
	\begin{align*}
		& \Big|\langle \cU(t;0)\psi,\cN^{j}\big(\cU(t;0)-\cU^{(M)}_{N}(t;0)\big)\psi\rangle\Big|\\
		& \quad
		\leq C N^{j}\|(\cN+1)^{j+1}\psi\|
		\int_{0}^{t}\ds\,\| w_\varepsilon u_{\varepsilon,s}\|_{2}\langle \cU^{(M)}_{N}(s;0)\psi,(\cN+1)^{2}\chi(\cN >M)\,\cU^{(M)}_{N}(s;0)\psi\rangle ^{1/2}\\
		& \quad\leq C N^{j}\|(\cN+1)^{j+1}\psi\|
		\int_{0}^{t}\ds\,{\varepsilon^{-1/2}}\|w\|_{1}^{1/2}\|u_{\varepsilon,s}\|_{\infty}\langle \cU^{(M)}_{N}(s;0)\psi,(\cN+1)^{2}\chi(\cN >M)\,\cU^{(M)}_{N}(s;0)\psi\rangle ^{1/2}\\
		& \quad\leq C N^{j}{\varepsilon^{-1/2}}\|(\cN+1)^{j+1}\psi\|\langle \cU^{(M)}_{N}(s;0)\psi,\frac{(\cN+1)^{2+{2j+2}}}{M^{2j+2}}\cU^{(M)}_{N}(s;0)\psi\rangle ^{1/2}{\int_{0}^{t}\ds\,\|u_{\varepsilon,s}\|_{\infty}}
	\end{align*}
	where $C = C(j, \|w\|_1)$.
	By Lemma \ref{lem:UMNUM}, we conclude that
	\begin{align*}
		&\Big|\langle\cU(t;0)\psi,\cN^{j}\big(\cU(t;0)-\cU_{N}^{(M)}(t;0)\big)\psi\rangle\Big|
		\leq C \frac{N^j}{ \varepsilon^{1/2} M^{j+1}} \|(\cN+1)^{j+1}\psi\|^{2}\exp\Bigg(K\bigg(1+\sqrt{\frac{M}{N  \varepsilon^{3} }}\bigg)t\Bigg).
	\end{align*}
	To prove \eqref{eq:UNUUM1overM}, we proceed similarly; analogously to \eqref{eq:U-UM} we find
	\begin{align*}
		&\langle \cU^{(M)}_N (t;0) \psi, \cN^j \left( \cU (t;0) -\cU_N^{(M)}(t;0)\right) \psi \rangle \\= \;
		&-\frac{\ii\mu }{\sqrt{N}}\int_0^t \ds \int_{\bR} \dr x  \langle a_x
		\cU (t;s)^* \cN^j \cU^{(M)}_N (t;0) \psi, a (w_ \varepsilon (x-.)\varphi_t) \chi(\cN > M) a_x \cU_N^{(M)} (s;0) \psi \rangle \\
		&-\frac{\ii\mu }{\sqrt{N}}\int_0^t \ds \int_{\bR} \dr x  \langle a_x \cU(t;s)^* \cN^j \cU^{(M)}_N (t;0) \psi, \chi (\cN >M) a^*	(w_ \varepsilon (x-.)\varphi_t) a_x \cU_N^{(M)} (s;0) \psi \rangle
	\end{align*}
	and thus, by $|\mu|\le1$, $\chi (\cN > M)\le\left(\frac{\cN}{M}\right)^{L}$ for any $L>1$, $\|w_ \varepsilon \|_{2}=N^{3\beta / 2}\|w\|_{2}$,
	\begin{equation}\label{eq:0}\begin{split}
			&\Big| \langle \cU^{(M)}_N (t;0) \psi, \cN^j \left( \cU (t;0) -
			\cU_N^{(M)} (t;0)\right) \psi \rangle \Big| \\
			&\leq \; \frac{C}{\sqrt{N}} \int_0^t \ds\,\| w_\varepsilon u_{\varepsilon,s}\|_{2}\, \|\cN^{1/2} \cU
			(t;s)^* \cN^j \cU_N^{(M)} (t;0) \psi \| \, \|  \cN \chi (\cN > M)
			\cU_N^{(M)} (s;0) \psi \|\,	\\
			&\leq \; \frac{C}{ \varepsilon^{1/2} \sqrt{N}} \int_0^t \ds\,(\|w_ \varepsilon \|_{1})^{1/2}\, \|u_{\varepsilon,s}\|_{\infty}\, \|\cN^{1/2} \cU (t;s)^* \cN^j \cU_N^{(M)} (t;0) \psi \| \, \|  \cN \left(\frac{\cN}{M}\right)^{j+1} \cU_N^{(M)} (s;0) \psi \|\,\\
			&\leq \; \frac{C}{M^{j+1}{\varepsilon^{1/2}} \sqrt{N}}\, \int_0^t \ds\, \| u_{\varepsilon,s}\|_{\infty}\, \|\cN^{1/2} \cU(t;s)^* \cN^j \cU_N^{(M)} (t;0) \psi \| \, \|  \cN^{j+1} \cU_N^{(M)} (s;0) \psi \|\,.
	\end{split}\end{equation}
	By Lemma \ref{lem:3.6 i Rodnianski}, Lemma \ref{lem:UMNUM} and $j+1/2\le {j+1}$ , we have
	\begin{equation}\label{eq:1}\begin{split}
			& \|\cN^{1/2} \cU
			(t;s)^* \cN^j \cU_N^{(M)} (t;0) \psi \| \le 6 \|(\cN+N+1)^{1/2}  \cN^j \cU_N^{(M)} (t;0) \psi \| \\
			& \le 12 N^{1/2} \|(\cN+1)^{j+1/2}  \cU_N^{(M)} (t;0) \psi \| \\
			& \le C N^{1/2} \|(\cN+1)^{j+1/2}   \psi \| \exp\Bigg(K\bigg(1+\sqrt{\frac{M}{N  \varepsilon^{3} }}\bigg)t\Bigg) \\
			& \le C N^{1/2} \|(\cN+1)^{j+1}   \psi \| \exp\Bigg(K\bigg(1+\sqrt{\frac{M}{N  \varepsilon^{3} }}\bigg)t\Bigg).
	\end{split}\end{equation}
	By Lemma \ref{lem:UMNUM}, we have
	\begin{equation}\label{eq:2}
		\|  \cN^{j+1}
		\cU_N^{(M)} (s;0) \psi \|\le C\|  (\cN+1)^{j+1}
		\psi \|\exp\Bigg(K\bigg(1+\sqrt{\frac{M}{N  \varepsilon^{3} }}\bigg)t\Bigg).
	\end{equation}
	Combining \eqref{eq:0}, \eqref{eq:1} and \eqref{eq:2}, we obtain
	\begin{equation}\begin{split}
			&\Big| \langle \cU^{(M)}_N (t;0) \psi, \cN^j \left( \cU (t;0) -\cU_N^{(M)} (t;0)\right) \psi \rangle \Big|\\
			&\leq \; \frac{C}{M^{j+1} {\varepsilon^{1/2}} \sqrt{N}}\, N^{j+1/2} \|  (\cN+1)^{j+1} \psi \|^2\exp\Bigg(K\bigg(1+\sqrt{\frac{M}{N  \varepsilon^{3} }}\bigg)t\Bigg)\\
			&\leq \; \frac{C N^j}{M^{j+1} {\varepsilon^{1/2}}}\, \|  (\cN+1)^{j+1} \psi \|^2\exp\Bigg(K\bigg(1+\sqrt{\frac{M}{N  \varepsilon^{3} }}\bigg)t\Bigg).
	\end{split}\end{equation}
	This gives the desired Lemma.
\end{proof}

Let us now prove Lemma \ref{lem:NjU}.
\begin{proof}[Proof of Lemma \ref{lem:NjU}]
	Let $M=N {\varepsilon^3}$. Then by Lemmata \ref{lem:UMNUM} and \ref{lem:UNUUM}, we get
	\begin{align*}
		& \Big\langle \cU\left(t;s\right)\psi,\cN^{j}\cU\left(t;s\right)\psi\Big\rangle\\
		& =\Big\langle \cU\left(t;s\right)\psi,\cN^{j}(\cU-\cU^{(M)}_{N})\left(t;s\right)\psi\Big\rangle +\Big\langle (\cU-\cU^{(M)}_{N})\left(t;s\right)\psi,\cN^{j}\cU^{(M)}_{N}\left(t;s\right)\psi\Big\rangle\\
		& \qquad+\Big\langle \cU^{(M)}_{N}\left(t;s\right)\psi,\cN^{j}\cU^{(M)}_{N}\left(t;s\right)\psi\Big\rangle\\
		& \leq C\Big(1+\frac{1}{\varepsilon^{\frac{1}{2}+3j}N}\Big)\,\Big\langle \psi,\left(\cN+1\right)^{ 2j+4 }\psi\Big\rangle
		\exp\Big(K t\Big)
		+ C\langle \psi,\left(\cN+1\right)^{j}\psi\rangle \exp\left(K(1+\varepsilon^{-1}) t  \right)\\
		&\leq C\Big(1+\frac{1}{\varepsilon^{\frac{1}{2}+3j}N}\Big)\,\langle \psi,\left(\cN+1\right)^{2j+2}\psi\rangle \exp\left(K (1+\varepsilon^{-1}) t \right).
	\end{align*}
	This yields the desired result.
\end{proof}

Recall the definition of $\widehat{\cU}\left(t;s\right)$ in \eqref{eq:def_mathcalwidehatU}. In the next lemma, we prove an estimate for the evolution with respect to $\widehat{\cU}$.

\begin{lemma}\label{lem:tildeNj}
	Suppose that the assumptions in Theorem \ref{thm:main} hold. Then, for any $\psi\in\cF$ and $j\in \bN$, there exists a constant $C = C(\|w\|_1 )>0$ such that
	\[
		\langle \widehat{\cU}\left(t;s\right)\psi,\cN^{j}\widehat{\cU}\left(t;s\right)\psi\rangle\leq C e^{Kt}\langle \psi,\left(\cN+1\right)^{j}\psi\rangle .
	\]
\end{lemma}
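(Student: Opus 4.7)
The proof follows the Grönwall strategy already used in Lemma \ref{lem:UMNUM}, but with a crucial simplification: since $\widehat{\cL}(t) = \widehat{\cL}_2(t) + \cL_4(t)$ contains no cubic contribution and $\cL_4 = a^*a^*aa$ conserves particle number, essentially only one quadratic piece needs to be controlled. Concretely, I would set $F_j(t) := \langle \widehat{\cU}(t;s)\psi, (\cN+1)^j \widehat{\cU}(t;s)\psi\rangle$ (noting $\cN^j \leq (\cN+1)^j$) and differentiate using \eqref{eq:def_mathcalwidehatU} to obtain $\tfrac{d}{dt} F_j(t) = \ii \langle \widehat{\cU}(t;s)\psi, [\widehat{\cL}(t), (\cN+1)^j] \widehat{\cU}(t;s)\psi\rangle$. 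Inspecting \eqref{eq:hatL_2}, the kinetic term and the two $a^*a$-type pieces in $\widehat{\cL}_2$ all commute with any function of $\cN$. Hence only the pair creation/annihilation term
\[
\widetilde{G}(t) := \frac{\mu \varepsilon}{2} \Bigl[ a^*(w_\varepsilon u_{\varepsilon,t})^2 + a(w_\varepsilon u_{\varepsilon,t})^2 \Bigr]
\]
contributes to the commutator.

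Using the shift identity $(\cN+1)^j a^*(f)^2 = a^*(f)^2 (\cN+3)^j$, the commutator $[a^*(f)^2, (\cN+1)^j]$ becomes a finite sum, with combinatorial coefficients, of terms $a^*(f)^2 (\cN+1)^k$ with $k \leq j-1$ (together with their adjoints). For $\phi = \widehat{\cU}(t;s)\psi$ and $f = w_\varepsilon u_{\varepsilon,t}$, each such term is bounded by Cauchy--Schwarz and Lemma \ref{lem:basifock-bounds}:
\[
\bigl|\langle \phi, a^*(f)^2 (\cN+1)^k \phi\rangle\bigr| \leq \|a(f)(\cN+1)^{k/2}\phi\|\, \|a^*(f)(\cN+3)^{k/2}\phi\| \leq C_j \|f\|^2\, \langle \phi, (\cN+1)^{k+1}\phi\rangle,
\]
and since $k+1 \leq j$, summing over the finitely many terms yields the differential inequality $|\tfrac{d}{dt} F_j(t)| \leq C_j\, \varepsilon\, \|w_\varepsilon u_{\varepsilon,t}\|_2^2\, F_j(t)$.

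The crucial cancellation now appears through the factor $\varepsilon$ already present in $\widetilde{G}$: bounding $\|w_\varepsilon u_{\varepsilon,t}\|_2^2 \leq \varepsilon^{-1} \|w\|_1 \|u_{\varepsilon,t}\|_\infty^2$ exactly as in the proof of Lemma \ref{lem:UMNUM}, one obtains $\varepsilon \|w_\varepsilon u_{\varepsilon,t}\|_2^2 \leq C$ uniformly in $\varepsilon$ and $t$, since $\|u_{\varepsilon,t}\|_\infty \leq C \|u_{\varepsilon,t}\|_{H^1}$ is bounded uniformly in $(\varepsilon,t)$ by Remark \ref{rem:epsuniform}. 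Grönwall's inequality then yields $F_j(t) \leq e^{K t} F_j(s)$ with $K$ independent of $\varepsilon$, which, together with $\cN^j \leq (\cN+1)^j$, gives the claim. The main subtlety, and the reason this bound is uniform in $\varepsilon$ in contrast to the $\exp(K\varepsilon^{-1} t)$ growth in Lemma \ref{lem:UMNUM}, lies precisely in the modification from $\cL_2$ to $\widehat{\cL}_2$: inserting the factor $\varepsilon$ in front of the pair term is exactly what absorbs the $\varepsilon^{-1}$ singularity of $\|w_\varepsilon u_{\varepsilon,t}\|_2^2$, and everything else is routine.
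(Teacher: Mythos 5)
Your proof is correct and follows essentially the same route as the paper's: differentiate $\langle\widehat{\cU}(t;s)\psi,(\cN+1)^j\widehat{\cU}(t;s)\psi\rangle$, observe that every piece of $\widehat{\cL}_2+\cL_4$ except the pair creation/annihilation term commutes with $\cN$, bound the remaining commutator via Cauchy--Schwarz and Lemma \ref{lem:basifock-bounds}, and apply Gr\"onwall. The only cosmetic difference is that you package the pair term as $a^*(w_\varepsilon u_{\varepsilon,t})^2$ and take Cauchy--Schwarz directly in Fock space, whereas the paper keeps the double $\dx\,\dy$ integral and applies Cauchy--Schwarz in the kernel variables; both hinge on the same cancellation, namely the explicit prefactor $\varepsilon$ in $\widehat{\cL}_2$ absorbing the $\varepsilon^{-1}$ in $\|w_\varepsilon u_{\varepsilon,t}\|_2^2$, which is exactly what produces a growth rate uniform in $\varepsilon$.
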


\begin{proof}
	Let $\widehat{\psi}=\widehat{\cU}(t;s)\psi$ and assume without loss of generality that $t\ge s$. We have
	\begin{align*}
		& \frac{\mathrm{d}}{\dt}\langle \widehat{\psi},(\cN+1)^{j}\widehat{\psi}\rangle
		=\langle \widehat{\psi},[\ii (\widehat{\cL}_{2}+\cL_{4}),(\cN+1)^{j}]\widehat{\psi}\rangle\\
		& =-{ \varepsilon }\ImPart \int_{\bR\times\bR}\dx\dy\,w_ \varepsilon (x)w_ \varepsilon (y)u_{\varepsilon,t}(x)u_{\varepsilon,t}(y)\langle \widehat{\psi},[a_{x}^{*}a_{y}^{*},(\cN+1)^{j}]\widehat{\psi}\rangle\\
		& ={ \varepsilon }\ImPart \int_{\bR\times\bR}\dx\dy\,w_ \varepsilon (x)w_ \varepsilon (y)u_{\varepsilon,t}(x)u_{\varepsilon,t}(y)\langle \widehat{\psi},a_{x}^{*}a_{y}^{*}((\cN+3)^{j}-(\cN+1)^{j})\widehat{\psi}\rangle\\
		& ={ \varepsilon }\ImPart \int_{\bR\times\bR}\dx\dy\,w_ \varepsilon (x)w_ \varepsilon (y)u_{\varepsilon,t}(x)u_{\varepsilon,t}(y)\\
		& \qquad\times\langle (\cN+3)^{(j+1)/2}a_{x}\widehat{\psi},a_{y}(\cN+3)^{(1-j)/2}((\cN+3)^{j}-(\cN+1)^{j})\widehat{\psi}\rangle.
	\end{align*}
	Then, one gets
	\begin{align*}
		\frac{\mathrm{d}}{\dt}\langle \widehat{\psi},(\cN+1)^{j}\widehat{\psi}\rangle
		&\leq { \varepsilon }\int_{\bR\times\bR}\dx\dy\,|w_ \varepsilon (x)w_ \varepsilon (y)| |u_{\varepsilon,t}(x)| |u_{\varepsilon,t}(y)|\|(\cN+3)^{(j-1)/2}a_{x}\widehat{\psi}\|\\
		& \qquad\times\|a_{y}(\cN+3)^{(1-j)/2}((\cN+3)^{j}-(\cN+1)^{j})\widehat{\psi}\|\\
		& \leq \|u_{\varepsilon,t}\|_{\infty}^{2} \Big(\int_{\bR}\dx\,\|(\cN+3)^{(j-1)/2}a_{x}\widehat{\psi}\|^{2}\Big)^{1/2}\\
		& \qquad\times\Big(\int_{\bR}\dy\,\|a_{y}(\cN+3)^{(1-j)/2}((\cN+3)^{j}-(\cN+1)^{j})\widehat{\psi}\|^2\Big)^{1/2}.
	\end{align*}
	Since, for all $j\in\bN$, one can see that $|(\cN+3)^{j}-(\cN+1)^{j})|\leq C_j (\cN+1)^{j-1}$ for some $C_j>0$, one gets
	\begin{align*}
		\frac{\mathrm{d}}{\dt}\langle \widehat{\cU}(t;s)\psi,(\cN+1)^{j}\widehat{\cU}(t;s)\psi\rangle
		&\leq C_j \|u_{\varepsilon,t}\|_{\infty}^{2}\|(\cN+1)^{j/2}\widehat{\cU}(t;s)\psi\|^{2}\\
		&=C_j \|u_{\varepsilon,t}\|_{\infty}^{2}\langle \widehat{\cU}(t;s)\psi,(\cN+1)^{j}\widehat{\cU}(t;s)\psi\rangle.
	\end{align*}
	Here, note that $C_j$ can change from line to line. Applying Gr\"onwall's lemma, we conclude that
	\begin{align*}
		\langle \widehat{\cU}(t;s)\psi,(\cN+1)^{j}\widehat{\cU}(t;s)\psi\rangle
		&\leq \langle \psi,(\cN+1)^{j}\psi\rangle \,
		e^{K t }.
	\end{align*}
	Hence, we get the result.
\end{proof}

\begin{lemma}
	For all $\psi\in\cF$ and $f\in L^2(\bR)$, we have the following inequalities with a constant $C = C(    \|w\|_1, \|w\|_2 ) > 0$. Then
	\begin{align}
		&\Vert (\cN+1)^{j/2}\cL_{3}(t)\psi\Vert \leq
		\frac{C}{ \varepsilon^{3/2} \sqrt{N}} \|u_{\varepsilon,t}\|_{L^{\infty}}\Vert \left(\cN+1\right)^{(j+3)/2}\psi\Vert,\label{eq:NL3-p}\\
		&\Vert (\cN+1)^{j/2}\big(\cL_2(t) - \widehat{\cL}_2(t)\big)\psi\Vert \leq
		C \|u_{\varepsilon,t}\|_{L^{\infty}}\Vert \left(\cN+1\right)^{(j+3)/2}\psi\Vert,\label{eq:NL2hatL2-p}\\
		\intertext{and}
		&\Vert (\cN+1)^{j/2}\big(\cU^{*}(t;0)\phi(f)\,\cU(t;0)-\widehat{\cU}^{*}(t;0)\phi(f)\widehat{\cU}(t;0)\big)\Omega\Vert \leq\frac{C e^{Kt}\|f\|}{ \varepsilon^{3/2} \sqrt{N}}.\label{eq:NjUphiUtildeUphitildeU-p}
	\end{align}
\end{lemma}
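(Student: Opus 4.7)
All three inequalities are standard second-quantisation bounds, relying on the shift identities
\[
(\cN+1)^{k/2}\,a^{*}(g) = a^{*}(g)\,(\cN+2)^{k/2},\qquad (\cN+1)^{k/2}\,a(g) = a(g)\,\cN^{k/2},
\]
on the moment bounds of Lemma \ref{lem:basifock-bounds}, and on the uniform a priori estimate $\|u_{\varepsilon,t}\|_{L^{\infty}}\leq C\|u_{\varepsilon,t}\|_{H^{1}}\leq C$ from Remark \ref{rem:epsuniform}.

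For \eqref{eq:NL3-p}, the key algebraic observation is that, since creation (resp.\ annihilation) operators mutually commute, $\cL_{3}$ factorises cleanly as
\[
\cL_{3}(t) \;=\; \frac{\mu}{\sqrt{N}}\Big[a^{*}(w_{\varepsilon}u_{\varepsilon,t})\,d\Gamma(w_{\varepsilon}) \;+\; d\Gamma(w_{\varepsilon})\,a(w_{\varepsilon}u_{\varepsilon,t})\Big].
\]
For the first summand, I commute $(\cN+1)^{j/2}$ past the product (a net particle-number shift of $+1$), use $\|a^{*}(g)\phi\|\leq \|g\|_{L^{2}}\|(\cN+1)^{1/2}\phi\|$, and then $\|d\Gamma(w_{\varepsilon})\chi\|\leq \|w_{\varepsilon}\|_{L^{\infty}}\|\cN\chi\|\leq C\varepsilon^{-1}\|\cN\chi\|$. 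Together with $\|w_{\varepsilon}u_{\varepsilon,t}\|_{L^{2}}^{2}\leq \|w_{\varepsilon}\|_{L^{\infty}}\!\int w_{\varepsilon}|u_{\varepsilon,t}|^{2}\leq C\varepsilon^{-1}\|u_{\varepsilon,t}\|_{L^{\infty}}^{2}$, this produces precisely the $\varepsilon^{-3/2}N^{-1/2}\|u_{\varepsilon,t}\|_{L^{\infty}}$ prefactor. The second summand is estimated identically after commuting the two annihilation operators.

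For \eqref{eq:NL2hatL2-p}, a direct rewriting groups the $w_{\varepsilon}$-integrations to give
\[
\cL_{2}(t)-\widehat{\cL}_{2}(t) \;=\; \mu\big(1-\tfrac{\varepsilon}{2}\big)\Big[\big(a^{*}(w_{\varepsilon}u_{\varepsilon,t})\big)^{2} + \big(a(w_{\varepsilon}u_{\varepsilon,t})\big)^{2}\Big] - \mu\,a^{*}(w_{\varepsilon}u_{\varepsilon,t})\,a(w_{\varepsilon}u_{\varepsilon,t}).
\]
Each of these three summands is quadratic in creation/annihilation operators. Commuting $(\cN+1)^{j/2}$ through, iterating $\|a^{\#}(g)\phi\|\leq \|g\|_{L^{2}}\|(\cN+1)^{1/2}\phi\|$ twice, and invoking the a priori estimates on $u_{\varepsilon,t}$ together with $\|w_{\varepsilon}u_{\varepsilon,t}\|_{L^{2}}^{2}\leq C\|u_{\varepsilon,t}\|_{L^{\infty}}^{2}$ and the uniform boundedness $\|u_{\varepsilon,t}\|_{L^{\infty}}\leq C$ (which absorbs one power of $\|u_{\varepsilon,t}\|_{L^{\infty}}$ into the constant) delivers the claimed bound, with the extra half-power of $\cN$ on the right-hand side absorbing the Stirling-type factors generated by the commutations.

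Finally, for \eqref{eq:NjUphiUtildeUphitildeU-p}, I would employ a Duhamel interpolation between the two evolutions. Setting
\[
\Theta(s) \;:=\; \cU^{*}(s;0)\,\widehat{\cU}^{*}(t;s)\,\phi(f)\,\widehat{\cU}(t;s)\,\cU(s;0),\qquad s\in[0,t],
\]
one has $\Theta(t)=\cU^{*}(t;0)\phi(f)\cU(t;0)$ and $\Theta(0)=\widehat{\cU}^{*}(t;0)\phi(f)\widehat{\cU}(t;0)$, so differentiating in $s$ and using the defining equations for $\cU$ and $\widehat{\cU}$ produces the identity
\[
\big(\cU^{*}\phi(f)\cU - \widehat{\cU}^{*}\phi(f)\widehat{\cU}\big)\Omega \;=\; \ii\int_{0}^{t}\cU^{*}(s;0)\,\big[\cL(s)-\widehat{\cL}(s),\,\widehat{\cU}^{*}(t;s)\,\phi(f)\,\widehat{\cU}(t;s)\big]\,\cU(s;0)\,\Omega\,ds,
\]
where $\cL-\widehat{\cL}=\cL_{3}+(\cL_{2}-\widehat{\cL}_{2})$. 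I would open the commutator, apply \eqref{eq:NL3-p} and \eqref{eq:NL2hatL2-p} to the two factors $\cL_{3}$ and $\cL_{2}-\widehat{\cL}_{2}$, control $\phi(f)$ via $\|\phi(f)\chi\|\leq 2\|f\|\|(\cN+1)^{1/2}\chi\|$, and finally propagate the resulting powers of $(\cN+1)$ back using Lemma \ref{lem:tildeNj} for the $\widehat{\cU}(t;s)$-dressed quantities and Lemma \ref{lem:NjU} for $\cU(s;0)\Omega$ (exploiting $\|(\cN+1)^{k}\Omega\|=1$). The $\cL_{3}$ contribution is dominant and produces the $\varepsilon^{-3/2}N^{-1/2}$ factor, while the $(\cL_{2}-\widehat{\cL}_{2})$ contribution is absorbed in $C$. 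The main technical obstacle is the careful bookkeeping of the various $(\cN+1)$-powers arising on both sides of the commutator and arranging the Duhamel expansion so that $\Omega$---which sits in the zero-particle sector---absorbs these powers at no cost, producing the final $e^{Kt}$ growth stated in the inequality.
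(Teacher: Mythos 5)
Your overall strategy is sound and mirrors what the paper intends (the paper's own proof is a one-line pointer to \cite[Lemmata 5.3--5.4]{Lee2013}, so there is no detailed argument to compare against). The factorisation $\cL_3 = \frac{\mu}{\sqrt N}\big[a^*(w_\varepsilon u_{\varepsilon,t})\,d\Gamma(w_\varepsilon) + d\Gamma(w_\varepsilon)\,a(w_\varepsilon u_{\varepsilon,t})\big]$ is correct, and combining the shift identity with $\|a^{\#}(g)\chi\|\leq\|g\|\,\|(\cN+1)^{1/2}\chi\|$, $\|d\Gamma(w_\varepsilon)\chi\|\leq\|w_\varepsilon\|_{\infty}\|\cN\chi\|$, and $\|w_\varepsilon u_{\varepsilon,t}\|_2\leq C\varepsilon^{-1/2}\|u_{\varepsilon,t}\|_{\infty}$ does deliver the $\varepsilon^{-3/2}N^{-1/2}$ prefactor and the $(\cN+1)^{(j+3)/2}$ right-hand side in \eqref{eq:NL3-p}. (Minor remark: the lemma advertises a constant depending only on $\|w\|_1,\|w\|_2$, while your $d\Gamma$ bound pulls in $\|w\|_\infty$; for Schwartz $w$ this is cosmetic, but note that the paper's hint ``with $\|w_\varepsilon\|_2 = \varepsilon^{-1/2}\|w\|_2$'' suggests the intended argument organises the estimate so that $\|w\|_\infty$ never appears.) Your Duhamel interpolation for \eqref{eq:NjUphiUtildeUphitildeU-p} is the standard device and matches the reference to \cite[Lemma 5.4]{Lee2013}; the bookkeeping via Lemma~\ref{lem:tildeNj} for the $\widehat\cU$-dressed quantities and Lemma~\ref{lem:NjU} for $\cU(s;0)\Omega$ is correct in outline.

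The genuine gap is in the treatment of \eqref{eq:NL2hatL2-p}. There you assert $\|w_\varepsilon u_{\varepsilon,t}\|_{L^2}^2 \leq C\|u_{\varepsilon,t}\|_{L^\infty}^2$, which is false: by exactly the computation you already carried out two paragraphs earlier (in the $\cL_3$ estimate), one has $\|w_\varepsilon u_{\varepsilon,t}\|_{L^2}^2 \leq \|w_\varepsilon\|_{L^2}^2\|u_{\varepsilon,t}\|_{\infty}^2 = \varepsilon^{-1}\|w\|_2^2\|u_{\varepsilon,t}\|_{\infty}^2$, and this $\varepsilon^{-1}$ cannot be removed (it reflects the genuine concentration of $w_\varepsilon$). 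With the correct norm your argument yields
\[
\Vert (\cN+1)^{j/2}\big(\cL_2(t) - \widehat{\cL}_2(t)\big)\psi\Vert \leq \frac{C}{\varepsilon}\,\|u_{\varepsilon,t}\|_{\infty}^2\,\Vert(\cN+1)^{(j+2)/2}\psi\Vert,
\]
which carries an extra $\varepsilon^{-1}$ compared to the stated bound, and the spare half-power of $\cN$ on the right of \eqref{eq:NL2hatL2-p} cannot be traded against a power of $\varepsilon$. Since $\cL_2-\widehat\cL_2$ is, up to the $(1-\varepsilon)$ coefficient, essentially $(a^*(w_\varepsilon u_{\varepsilon,t}))^2 + \mathrm{h.c.}$, I do not see how to avoid $\|w_\varepsilon u_{\varepsilon,t}\|_2^2$ appearing, so this is not a gap you can close by a smarter rearrangement; you should either track the $\varepsilon^{-1}$ forward (it turns out to be subleading for Proposition~\ref{prop:Et2} because the $\cL_2-\widehat\cL_2$ contribution carries no $N^{-1/2}$ prefactor but pairs with the stronger decay from Lemma~\ref{lem:coherent_even_odd}) or verify carefully whether the lemma as stated is achievable. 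Separately, note that the expression the paper's proof writes for $\cL_2-\widehat\cL_2$ (coefficient $\mu\tfrac{1-\varepsilon}{2}$, no $a^*a$ term) does not literally follow from the displayed definitions of $\cL_2$ and $\widehat\cL_2$; your version $\mu(1-\tfrac{\varepsilon}{2})[(a^*)^2+(a)^2]-\mu\,a^*a$ is what the displayed definitions actually give, which suggests a typo in the paper's definition of $\cL_2$ (a missing $\tfrac12$ and a missing $\overline{u}u\,a^*a$ term) that you should be aware of when reconciling constants.
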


\begin{proof}
	For \eqref{eq:NL3-p}, the proof is the same as in \cite[Lemma 5.3]{Lee2013} with $\|w_ \varepsilon \|_2 = {\varepsilon^{-1/2}} \|w\|_2$.
	Furthermore, for \eqref{eq:NL2hatL2-p}, we follow the proof from \cite[Lemma 5.3]{Lee2013} with $\|W_ \varepsilon \|_1 = \|w\|_1$ , and we replace $\cL_3$ by
	\[
		\cL_2 - \widehat{\cL}_2 = \mu {  \frac{1-\varepsilon}{2} }\int_{\bR \times \bR} \dx\dy\,  w_ \varepsilon (x)w_ \varepsilon (y)\big(u_{\varepsilon,t}(x)u_{\varepsilon,t}(y)a_{x}^{*}a_{y}^{*}+\overline{u_{\varepsilon,t}(x)}\,\overline{u_{\varepsilon,t}(y)}a_{x}a_{y}\big).
	\]	
	From \eqref{eq:NjUphiUtildeUphitildeU-p}, we follow the proof of \cite[Lemma 5.4]{Lee2013}
	with Lemma \ref{lem:tildeNj}, \eqref{eq:NL3-p}, and \eqref{eq:NL2hatL2-p}.
\end{proof}

\subsection{Proof of Propositions \ref{prop:Et1} and \ref{prop:Et2}}\label{sec:proof-of-Et1-Et2}

\begin{proof}[Proof of Proposition \ref{prop:Et1}]
	Preliminarily, et us recall that
	\[
		E_{t}^{(1)}(J)=\frac{d_{N}}{N}\Big\langle \cW^{*}(\sqrt{N}\varphi)\frac{(a^{*}(\varphi))^{N}}{\sqrt{N!}}\Omega,\cU^{*}(t;0)d\Gamma(J)\,\cU(t;0)\Omega\Big\rangle.
	\]
	We start by noting that
	\begin{align*}
		|E_{t}^{(1)}(J)| & =\Big|\frac{d_{N}}{N}\langle \cW^{*}(\sqrt{N}\varphi)\frac{(a^{*}(\varphi))^{N}}{\sqrt{N!}}\Omega,\cU^{*}(t;0)d\Gamma(J)\,\cU(t;0)\Omega\rangle\Big|\\
		& \leq\frac{d_{N}}{N}\Big\Vert (\cN+1)^{-\frac{1}{2}}\cW^{*}(\sqrt{N}\varphi)\frac{(a^{*}(\varphi))^{N}}{\sqrt{N!}}\Omega\Big\Vert\,\Big\Vert (\cN+1)^{\frac{1}{2}}\cU^{*}(t;0)d\Gamma(J)\,\cU(t;0)\Omega\Big\Vert .
	\end{align*}
	Furthermore, by Lemma \ref{lem:coherent_all},
	\[
		\left\Vert (\cN+1)^{-\frac{1}{2}}\cW^{*}(\sqrt{N}\varphi)\frac{(a^{*}(\varphi))^{N}}{\sqrt{N!}}\Omega\right\Vert \leq\frac{C}{d_{N}}
	\]
	and, applying Lemma \ref{lem:NjU} twice and Lemma \ref{lem:basifock-bounds},
	\begin{align*}
		& \Vert (\cN+1)^{\frac{1}{2}}\cU^{*}(t;0)d\Gamma(J)\,\cU(t;0)\Omega\Vert \\
		&\leq C\Big(1+\frac{1}{\varepsilon^{\frac{7}{2}}N}\Big) \exp\left(K(1+\varepsilon^{-1}) t\right)
		\,\Vert (\cN+1)^{2} d\Gamma(J)\,\cU(t;0)\Omega\Vert \\
		&\leq C\Big(1+\frac{1}{\varepsilon^{\frac{7}{2}}N}\Big) \exp\left(K(1+\varepsilon^{-1}) t\right)
		\,\Vert J\Vert \, \Vert (\cN+1)^{3}\cU(t;0)\Omega\Vert \\
		&\leq C\Big(1+\frac{1}{\varepsilon^{\frac{7}{2}}N}\Big) \Big(1+\frac{1}{\varepsilon^{\frac{37}{2}}N}\Big) \exp\left(K(1+\varepsilon^{-1}) t\right)
		\,\Vert J\Vert \, \Vert (\cN+1)^{13}\cU(t;0)\Omega\Vert \\
		&=C\Big(1+\frac{1}{\varepsilon^{\frac{7}{2}}N}\Big) \Big(1+\frac{1}{\varepsilon^{\frac{37}{2}}N}\Big) \exp\left(K(1+\varepsilon^{-1}) t\right) \|J\|,
	\end{align*}
	we obtain
	\[
		|E_{t}^{(1)}(J)|\leq \frac{C}{N}\Big(1+\frac{1}{\varepsilon^{\frac{7}{2}}N}\Big) \Big(1+\frac{1}{\varepsilon^{\frac{37}{2}}N}\Big) \exp\left(K (1+\varepsilon^{-1})t \right) \|J\|,
	\]
	which is the desired result.
\end{proof}
\begin{proof}[Proof of Proposition \ref{prop:Et2}]
	Let
	\[
		\cR (f)=\cU^{*}(t;0)\phi(f)\,\cU(t;0)-\widehat{\cU}^{*}(t;0)\phi(f)\widehat{\cU}(t;0).
	\]
	Then
	\begin{align}
		&|E_{t}^{(2)}(J)|
		=\frac{d_{N}}{\sqrt{N}} \Big\langle \cW^{*}(\sqrt{N}\varphi)\frac{(a^{*}(\varphi))^{N}}{\sqrt{N!}}\Omega,\mathcal{\widehat{U}}^{*}(t;0)\phi(Ju_{\varepsilon,t})\mathcal{\widehat{U}}(t;0)\Omega\Big\rangle\nonumber \\
		& \qquad+\frac{d_{N}}{\sqrt{N}}\Big\langle \cW^{*}(\sqrt{N}\varphi) \frac{(a^{*}(\varphi))^{N}}{\sqrt{N!}}\Omega,\cR (Ju_{\varepsilon,t})\Omega\Big\rangle\nonumber \\
		& \leq \frac{d_{N}}{\sqrt{N}} \Big\Vert \sum_{k=0}^{\infty}(\cN+1)^{-\frac{5}{2}}P_{2k+1}\cW^{*}(\sqrt{N}\varphi)\frac{(a^{*}(\varphi))^{N}}{\sqrt{N!}}\Omega\Big\Vert \, \Big\Vert (\cN+1)^{\frac{5}{2}}\mathcal{\widehat{U}}^{*}(t;0)\phi(Ju_{\varepsilon,t})\widehat{\cU}(t;0)\Omega\Big\Vert \nonumber \\
		& \qquad+\frac{d_{N}}{\sqrt{N}}\Big\Vert (\cN+1)^{-\frac{1}{2}}\cW^{*}(\sqrt{N}\varphi)\frac{(a^{*}(\varphi))^{N}}{\sqrt{N!}}\Omega\Big\Vert \, \Big\Vert (\cN+1)^{\frac{1}{2}}\cR (Ju_{\varepsilon,t})\Omega \Big\Vert \label{eq:e_t^21}.
	\end{align}
	Let $\mathsf{K}=\frac{1}{2}N^{1/3}$. By Lemmata \ref{lem:coherent_all} and \ref{lem:coherent_even_odd}, one gets
	\begin{align*}
		& \Big\Vert \sum_{k=0}^{\infty}(\cN+1)^{-\frac{5}{2}}P_{2k+1}\cW^{*}(\sqrt{N}\varphi)\frac{(a^{*}(\varphi))^{N}}{\sqrt{N!}}\Omega\Big\Vert^{2}\\
		& \qquad\leq\sum_{k=0}^{\mathsf{K}}\Big\Vert (\cN+1)^{-\frac{5}{2}}P_{2k+1}\cW^{*}(\sqrt{N}\varphi)\frac{(a^{*}(\varphi))^{N}}{\sqrt{N!}}\Omega\Big\Vert^{2}\\
		& \qquad\qquad+\frac{1}{\mathsf{K}^{4}}\sum_{k=\mathsf{K}}^{\infty}\Big\Vert(\cN+1)^{-1/2} P_{2k+1}\cW^{*}(\sqrt{N}\varphi)\frac{(a^{*}(\varphi))^{N}}{\sqrt{N!}}\Omega\Big\Vert^{2}\\
		& \qquad\leq \Big(\sum_{k=0}^{\mathsf{K}}\frac{C}{(k+1)^{2}d_{N}^{2}N}\Big)+\frac{C}{N^{4/3}}\Big\Vert(\cN+1)^{-1/2} \cW^{*}(\sqrt{N}\varphi)\frac{(a^{*}(\varphi))^{N}}{\sqrt{N!}}\Omega\Big\Vert^2 \leq\frac{C}{d_{N}^{2}N^{4/3}}.
	\end{align*}
	Furthemore, Lemmata \ref{lem:tildeNj} and \ref{lem:basifock-bounds} yield,
	\begin{alignat*}{1}
		& \Vert (\cN+1)^{\frac{5}{2}}\mathcal{\widehat{U}}^{*}(t;0)\phi(Ju_{\varepsilon,t})\widehat{\cU}(t;0)\Omega\Vert
		\leq C t^{K} \Vert (\cN+1)^{\frac{5}{2}}\phi(Ju_{\varepsilon,t})\widehat{\cU}(t;0)\Omega\Vert \\
		&\leq C t^{K}  \|Ju_{\varepsilon,t}\|\Vert (\cN+2)^{3}\mathcal{\widehat{U}}(t;0)\Omega\Vert
		\leq C t^{K} \|J\|\Vert (\cN+2)^{3}\Omega\Vert
		= C t^{K} \|J\|.
	\end{alignat*}
	For the second term on the right-hand side of \eqref{eq:e_t^21}, we use Lemma \ref{lem:coherent_all} and \eqref{eq:NjUphiUtildeUphitildeU-p}, for $f=J\varphi_t$.
	Altogether, we have
	\begin{equation*}
		\Vert (\cN+1)^{j/2}\cR (f)\Omega\Vert \leq
		C\|J\|\varepsilon^{-3/2}N^{-\frac{1}{2}}
	\end{equation*}
	which is the desired conclusion.
\end{proof}

\section{Proof of Theorem \ref{thm:main}}\label{sec:pfmainthm}

\begin{proof}[Proof of Theorem \ref{thm:main}]
	First, recall that
	\begin{equation}
		\gamma_{N,t}^{(1)}(x;y) =\frac{1}{N}\left\langle \frac{\left(a^{*}(\varphi)\right)^{N}}{\sqrt{N!}}\Omega,e^{i\cH_{N}t}a_{y}^{*}a_{x}e^{-i\cH_{N}t}\frac{\left(a^{*}(\varphi)\right)^{N}}{\sqrt{N!}}\Omega\right\rangle.
	\end{equation}
	From the definition of the creation operator in \eqref{eq:creation} and of $ d_N$ in \eqref{eq:d_N}, one easily finds
	\begin{equation} \label{eq:coherent_vec}
		\{0,0,{{\dots}},0,\varphi^{\otimes N},0,{{\dots}}\}=\frac{\left(a^{*}(\varphi)\right)^{N}}{\sqrt{N!}}\Omega,
	\end{equation}
	where the function $\varphi^{\otimes N}$ on the left-hand side is in the $N$-th sector of the Fock space. Recall that $P_N$ is the projection onto the $N$-particle sector of the Fock space. From \eqref{Weyl_f}, one finds
	\[
		\frac{\left(a^{*}(\varphi)\right)^{N}}{\sqrt{N!}}\Omega=\frac{\sqrt{N!}}{N^{N/2}e^{-N/2}}P_{N}\cW(\sqrt{N}\varphi)\Omega=d_{N}P_{N}\cW(\sqrt{N}\varphi)\Omega.
	\]
	Since $\cH_N$ does not change the number of particles, we also have 
	\begin{align*}
		\gamma_{N,t}^{(1)}(x;y) & =\frac{1}{N}\left\langle \frac{\left(a^{*}(\varphi)\right)^{N}}{\sqrt{N!}}\Omega,e^{\ii \cH_{N}t}a_{y}^{*}a_{x}e^{-\ii \cH_{N}t}\frac{\left(a^{*}(\varphi)\right)^{N}}{\sqrt{N!}}\Omega\right\rangle \\
		& =\frac{d_{N}}{N}\left\langle \frac{\left(a^{*}(\varphi)\right)^{N}}{\sqrt{N!}}\Omega,e^{\ii \cH_{N}t}a_{y}^{*}a_{x}e^{-\ii \cH_{N}t}P_{N}\cW(\sqrt{N}\varphi)\Omega\right\rangle \\
		& =\frac{d_{N}}{N}\left\langle \frac{\left(a^{*}(\varphi)\right)^{N}}{\sqrt{N!}}\Omega,P_{N}e^{\ii \cH_{N}t}a_{y}^{*}a_{x}e^{-\ii \cH_{N}t}\cW(\sqrt{N}\varphi)\Omega\right\rangle \\
		& =\frac{d_{N}}{N}\left\langle \frac{\left(a^{*}(\varphi)\right)^{N}}{\sqrt{N!}}\Omega,e^{\ii \cH_{N}t}a_{y}^{*}a_{x}e^{-\ii \cH_{N}t}\cW(\sqrt{N}\varphi)\Omega\right\rangle, \\
	\end{align*}
	where we used that $P_{N}\frac{\left(a^{*}(\varphi)\right)^{N}}{\sqrt{N!}}\Omega =\frac{\left(a^{*}(\varphi)\right)^{N}}{\sqrt{N!}}\Omega$ in the last step.
	To simplify it further, we use the relation
	\begin{equation}\label{eq:eae}
		e^{\ii \cH_{N}t}a_{x}e^{-\ii \cH_{N}t}=\cW(\sqrt{N}\varphi)\,\cU^{*}(t;0)(a_{x}+\sqrt{N}u_{\varepsilon,t}(x))\,\cU(t;0)\cW^{*}(\sqrt{N}\varphi),
	\end{equation}
	which follows from the first equality in Lemma \ref{lem:Basic_Weyl}(4), the definition of $\cU$ in \eqref{eq:def_mathcalU} and the unitarity of the Weyl operator (see Lemma \ref{lem:Basic_Weyl}(2)). By \eqref{eq:eae} and the analogous result for the creation operator, we obtain
	\begin{align*}
		\gamma_{N,t}^{(1)}(x;y) & =\frac{d_{N}}{N}\langle \frac{\left(a^{*}(\varphi)\right)^{N}}{\sqrt{N!}}\Omega,e^{\ii \cH_{N}t}a_{y}^{*}a_{x}e^{-\ii \cH_{N}t}\cW(\sqrt{N}\varphi)\Omega\rangle \\
		& =\frac{d_{N}}{N}\langle \frac{\left(a^{*}(\varphi)\right)^{N}}{\sqrt{N!}}\Omega,\cW(\sqrt{N}\varphi)\,\cU^{*}(t;0)\left(a_{y}^{*}+\sqrt{N}\,\overline{u_{\varepsilon,t}\left(y\right)}\right)\left(a_{x}+\sqrt{N}u_{\varepsilon,t}(x)\right)\,\cU(t;0)\Omega\rangle.
	\end{align*}
	Thus,
	\begin{equation}\label{eq:gaphyphx}\begin{split}
		\gamma_{N,t}^{(1)}(x;y)-\overline{u_{\varepsilon,t}\left(y\right)}u_{\varepsilon,t}(x) & =\frac{d_{N}}{N}\langle \frac{\left(a^{*}(\varphi)\right)^{N}}{\sqrt{N!}}\Omega,\cW(\sqrt{N}\varphi)\,\cU^{*}(t;0)a_{y}^{*}a_{x}\cU(t;0)\Omega\rangle \\
		& \quad+\overline{u_{\varepsilon,t}(y)}\frac{d_{N}}{\sqrt{N}}\langle \frac{\left(a^{*}(\varphi)\right)^{N}}{\sqrt{N!}}\Omega,\cW(\sqrt{N}\varphi)\,\cU^{*}(t;0)a_{x}\cU(t;0)\Omega\rangle \\
		& \quad+u_{\varepsilon,t}(x)\frac{d_{N}}{\sqrt{N}}\langle \frac{\left(a^{*}(\varphi)\right)^{N}}{\sqrt{N!}}\Omega,\cW(\sqrt{N}\varphi)\,\cU^{*}(t;0)a_{y}^{*}\cU(t;0)\Omega\rangle.
	\end{split}\end{equation}
	Recalling the definition of $E_{t}^{(1)}(J)$ and $E_{t}^{(2)}(J)$ in Propositions \ref{prop:Et1} and \ref{prop:Et2},  for any compact one-particle Hermitian operator $J$ on $L^{2}(\mathbb{R})$, we have
	\begin{align*}
		\Tr\left( J\left({\gamma}_{N,t}^{(1)}-|u_{\varepsilon,t}\rangle \langle u_{\varepsilon,t}|\right)\right) & =\int_{\bR\times\bR}\dx\dy\,
		J(x;y)\left(\gamma_{N,t}^{(1)}(y;x)-u_{\varepsilon,t}(y)\overline{u_{\varepsilon,t}\left(x\right)}\right)\\
		& =\frac{d_{N}}{N}\left\langle \frac{\left(a^{*}(\varphi)\right)^{N}}{\sqrt{N!}}\Omega,\cW(\sqrt{N}\varphi)\,\cU^{*}(t;0)d\Gamma(J)\,\cU(t;0)\Omega\right\rangle \\
		& \quad+\frac{d_{N}}{\sqrt{N}}\left\langle \frac{\left(a^{*}(\varphi)\right)^{N}}{\sqrt{N!}}\Omega,\cW(\sqrt{N}\varphi)\,\cU^{*}(t;0)\phi(Ju_{\varepsilon,t})\,\cU(t;0)\Omega\right\rangle \\
		& =E_{t}^{(1)}(J)+E_{t}^{(2)}(J).
	\end{align*}
	The second step can be carried out by using the expression for $d\Gamma(J)$ in terms of operator-valued distributions in \eqref{eq:dgjop}, the definition of annihilation and creation operators in terms of operator-valued distributions in \eqref{eq:creation} and \eqref{eq:annihilation} and the definition of $\phi$ in \eqref{eq:phi}. Thus, from Propositions \ref{prop:Et1} and \ref{prop:Et2}, we find that
	\begin{equation*}\label{eq:trjpro3132}\begin{split}
		&\left|\Tr\left( J\big({\gamma}_{N,t}^{(1)}-|u_{\varepsilon,t}\rangle \langle u_{\varepsilon,t}|\big)\right)\right|
		\leq \frac{C\|J\|}{N}\left(\Big(1+\frac{1}{\varepsilon^{\frac{7}{2}}N}\Big) \Big(1+\frac{1}{\varepsilon^{\frac{37}{2}}N}\Big) \exp\left(K  (1+\varepsilon^{-1}) t \right)
		+
		e^{Kt} \Big(1+\varepsilon^{-3/2}\Big)\right).
	\end{split}\end{equation*}
	Then 
	\begin{equation}\label{eq:Gamma-u}\begin{split}
			&\Tr\left|{\gamma}_{N,t}^{(1)}-|u_{\varepsilon,t}\rangle \langle u_{\varepsilon,t}|\right|
			\leq \frac{C\|J\|}{N}\left(\Big(1+\frac{1}{\varepsilon^{\frac{7}{2}}N}\Big) \Big(1+\frac{1}{\varepsilon^{\frac{37}{2}}N}\Big) \exp\left(K  (1+\varepsilon^{-1}) t \right)
			+
			e^{Kt} \Big(1+\varepsilon^{-3/2}\Big)\right).
	\end{split}\end{equation}
	By the triangle inequality and the Cauchy-Schwarz inequality we have
	\begin{equation}\label{eq:u-phi}\begin{split}
		\Tr\Big||u_{\varepsilon,t}\rangle\langle u_{\varepsilon,t}| - |\varphi_{t}\rangle \langle {\varphi}_{t}|\Big|
		&=
		\Tr\Big||u_{\varepsilon,t} \rangle \langle u_{\varepsilon,t}| - |u_{\varepsilon,t}\rangle \langle {\varphi}_{t}| +|u_{\varepsilon,t}\rangle \langle {\varphi}_{t}| -|{\varphi}_{t}\rangle \langle {\varphi}_{t}| \Big|\\
		&=
		\Tr\Big||u_{\varepsilon,t} \rangle  \langle u_t - {\varphi}_{t} | +|u_{\varepsilon,t} - \varphi_{t}\rangle \langle \varphi_{t}|\Big|\\
		&\leq
		2\|{u}_{\varepsilon,t}- \varphi_{t}\|_2\\
		&\leq C \varepsilon^\eta e^{Kt}
	\end{split}\end{equation}
	for $0<\eta<1/2$ where we have used Theorem \ref{thm:Convergence}.
	For the last inequality, we have used Theorem \ref{thm:Convergence}.
	Then combining \eqref{eq:Gamma-u} and \eqref{eq:u-phi} and noting that $0<\varepsilon<1$ and $0<\eta<1/2$, we have
	\[\begin{split}
		\Tr\left|{\gamma}_{N,t}^{(1)}-|\varphi_{t}\rangle \langle {\varphi}_{t}|\right|
		&\leq \Tr\left|{\gamma}_{N,t}^{(1)}-|u_{\varepsilon,t}\rangle \langle u_{\varepsilon,t}|\right| + \Tr\Big||u_{\varepsilon,t}\rangle\langle u_{\varepsilon,t}| - |\varphi_{t}\rangle \langle {\varphi}_{t}|\Big|\\
		&\leq \frac{C\|J\|}{N}\left( \Big(1+\frac{1}{\varepsilon^{\frac{37}{2}}N}\Big) \exp\left(K  (1+\varepsilon^{-1}) t \right)
		+ e^{Kt} \Big(1+\varepsilon^{-3/2}\Big)\right)
		+ C \sqrt{\varepsilon} (1+t)\\
		&\leq
		\frac{C}{N}\,\Big(1+\frac{1}{\varepsilon^{3/2}}+\frac{1}{\varepsilon^{\frac{37}{2}}N}+\frac{1}{\varepsilon^{37}N^2}\Big) \exp\left(K (1+\varepsilon^{-1}) t \right)
		+ C \varepsilon^\eta \, e^{Kt}
	\end{split}\]
	which concludes the proof of Theorem \ref{thm:main}.
\end{proof}

\small

\bibliographystyle{abbrv}
\bibliography{refs-point,refs-manybody}

\begin{thebibliography}{10}

\bibitem{Adami2020blow}
R.~Adami, R.~Carlone, M.~Correggi, and L.~Tentarelli.
\newblock Blow-up for the pointwise {NLS} in dimension two: {A}bsence of
  critical power.
\newblock {\em J. Differ. Equ.}, 269(1):1--37, 2020.

\bibitem{Adami2021Stability2D}
R.~Adami, R.~Carlone, M.~Correggi, and L.~Tentarelli.
\newblock Stability of the standing waves of the concentrated {NLSE} in
  dimension two.
\newblock {\em Math. eng.}, 3(2):1--15, 2021.

\bibitem{ADFT03}
R.~Adami, G.~Dell'Antonio, R.~Figari, and A.~Teta.
\newblock The {C}auchy problem for the {S}chrödinger equation in dimension
  three with concentrated nonlinearity.
\newblock {\em Ann. Inst. Henri Poincare (C) Anal.}, 20(3):477--500, 2003.

\bibitem{ADFT04}
R.~Adami, G.~Dell'Antonio, R.~Figari, and A.~Teta.
\newblock Blow-up solutions for the {S}chrödinger equation in dimension three
  with a concentrated nonlinearity.
\newblock {\em Ann. Inst. Henri Poincare (C) Anal.}, 21(1):121--137, 2004.

\bibitem{AFH21}
R.~Adami, R.~Fukuizumi, and J.~Holmer.
\newblock Scattering for the ${L}^2$ supercritical point {NLS}.
\newblock {\em Trans. Am. Math. Soc.}, 374(1):35--60, 2021.

\bibitem{ANO13}
R.~Adami, D.~Noja, and C.~Ortoleva.
\newblock {Orbital and asymptotic stability for standing waves of a nonlinear
  {S}chrödinger equation with concentrated nonlinearity in dimension three}.
\newblock {\em J. Math. Phys.}, 54(1):013501, 01 2013.

\bibitem{ANO16}
R.~Adami, D.~Noja, and C.~Ortoleva.
\newblock Asymptotic stability for standing waves of a {NLS} equation with
  subcritical concentrated nonlinearity in dimension three: {N}eutral modes.
\newblock {\em Discrete Contin. Dyn. Syst.}, 36(11):5837--5879, 2016.

\bibitem{AdamiTeta2001class}
R.~Adami and A.~Teta.
\newblock A class of nonlinear {S}chr{\"o}dinger equations with concentrated
  nonlinearity.
\newblock {\em J. Funct. Anal.}, 180(1):148--175, 2001.

\bibitem{AGH-KH88}
S.~Albeverio, F.~Gesztesy, R.~Høegh-Krohn, and H.~Holden.
\newblock {\em Solvable Models in Quantum Mechanics}.
\newblock Theoretical and Mathematical Physics. Springer Berlin, Heidelberg,
  1988.

\bibitem{BOS2015}
N.~Benedikter, G.~de~Oliveira, and B.~Schlein.
\newblock Quantitative derivation of the {G}ross-{P}itaevskii equation.
\newblock {\em Commun. Pure Appl. Math.}, 68(8):1399--1482, 2015.

\bibitem{BCS2016AHP}
C.~Boccato, S.~Cenatiempo, and B.~Schlein.
\newblock Quantum many-body fluctuations around nonlinear {S}chr{\"o}dinger
  dynamics.
\newblock {\em Ann. Henri Poincar\'e}, 18(1):113--191, 2017.

\bibitem{BNNS19}
C.~Brennecke, P.~T. Nam, M.~Napiórkowski, and B.~Schlein.
\newblock Fluctuations of ${N}$-particle quantum dynamics around the nonlinear
  {S}chrödinger equation.
\newblock {\em Ann. Inst. Henri Poincare (C) Anal.}, 36(5):1201--1235, 2019.

\bibitem{BS19}
C.~Brennecke and B.~Schlein.
\newblock {G}ross--{P}itaevskii dynamics for {B}ose--{E}instein condensates.
\newblock {\em Anal. PDE}, 12(6):1513--1596, 2019.

\bibitem{BKKS08}
V.~Buslaev, A.~Komech, E.~Kopylova, and D.~Stuart.
\newblock On asymptotic stability of solitary waves in {S}chr{\"o}dinger
  equation coupled to nonlinear oscillator.
\newblock {\em Commun. Partial. Differ. Equ.}, 33(4):669--705, 2008.

\bibitem{CFNT14}
C.~Cacciapuoti, D.~Finco, D.~Noja, and A.~Teta.
\newblock The {NLS} equation in dimension one with spatially concentrated
  nonlinearities: the pointlike limit.
\newblock {\em Lett. Math. Phys.}, 104:1557--1570, 2014.

\bibitem{CFNT17}
C.~Cacciapuoti, D.~Finco, D.~Noja, and A.~Teta.
\newblock The point-like limit for a {NLS} equation with concentrated
  nonlinearity in dimension three.
\newblock {\em J. Funct. Anal.}, 273(5):1762--1809, 2017.

\bibitem{CCT18}
R.~Carlone, M.~Correggi, and L.~Tentarelli.
\newblock Well-posedness of the two-dimensional nonlinear {S}chrödinger
  equation with concentrated nonlinearity.
\newblock {\em Ann. Inst. Henri Poincare (C) Anal.}, 36(1):257--294, 2019.

\bibitem{ChenLee2011}
L.~Chen and J.~O. Lee.
\newblock Rate of convergence in nonlinear {H}artree dynamics with factorized
  initial data.
\newblock {\em J. Math. Phys.}, 52(5):052108, 25, 2011.

\bibitem{ChenLeeLee2018}
L.~Chen, J.~O. Lee, and J.~Lee.
\newblock Rate of convergence toward {H}artree dynamics with singular
  interaction potential.
\newblock {\em J. Math. Phys.}, 59(3):031902, 2018.

\bibitem{ChenLeeSchlein2011}
L.~Chen, J.~O. Lee, and B.~Schlein.
\newblock Rate of convergence towards {H}artree dynamics.
\newblock {\em J. Stat. Phys.}, 144(4):872--903, 2011.

\bibitem{ChenPavlovic2011}
T.~Chen and N.~Pavlovi{\'c}.
\newblock The quintic {NLS} as the mean field limit of a boson gas with
  three-body interactions.
\newblock {\em J. Funct. Anal.}, 260(4):959--997, 2011.

\bibitem{Xchen2012second}
X.~Chen.
\newblock Second order corrections to mean field evolution for weakly
  interacting bosons in the case of three-body interactions.
\newblock {\em Arch. Ration. Mech. Anal.}, 203(2):455--497, 2012.

\bibitem{XChenHolmer2019derivation}
X.~Chen and J.~Holmer.
\newblock The derivation of the $\mathbb{T}^{3}$ energy-critical nls from
  quantum many-body dynamics.
\newblock {\em Invent. Math.}, 217:433--547, 2019.

\bibitem{GV79-1}
J.~Ginibre and G.~Velo.
\newblock The classical field limit of scattering theory for non-relativistic
  many-boson systems. {I}.
\newblock {\em Commun. Math. Phys.}, 66:37--76, 1979.

\bibitem{GV79-2}
J.~Ginibre and G.~Velo.
\newblock The classical field limit of scattering theory for non-relativistic
  many-boson systems. {II}.
\newblock {\em Commun. Math. Phys.}, 68:45--68, 1979.

\bibitem{Gorenflo-Vessella}
R.~Gorenflo and S.~Vessella.
\newblock {\em Abel Integral Equations: Analysis and Applications}.
\newblock Lecture Notes in Mathematics. Springer Berlin, Heidelberg, 1 edition,
  1991.

\bibitem{grillakis2010second}
M.~Grillakis, M.~Machedon, and D.~Margetis.
\newblock Second-order corrections to mean field evolution of weakly
  interacting bosons {I}.
\newblock {\em Commun. Math. Phys.}, 294(1):273--301, 2010.

\bibitem{grillakis2011second}
M.~Grillakis, M.~Machedon, and D.~Margetis.
\newblock Second-order corrections to mean field evolution of weakly
  interacting bosons {II}.
\newblock {\em Adv. Math.}, 228(3):1788--1815, 2011.

\bibitem{Hepp}
K.~Hepp.
\newblock The classical limit for quantum mechanical correlation functions.
\newblock {\em Commun. Math. Phys.}, 35:265--277, 1974.

\bibitem{Holmer2020blow}
J.~Holmer and C.~Liu.
\newblock Blow-up for the 1{D} nonlinear {S}chr{\"o}dinger equation with point
  nonlinearity {I}: {B}asic theory.
\newblock {\em J. Math. Anal. Appl.}, 483(1):123522, 2020.

\bibitem{kirkpatrick2011derivation}
K.~Kirkpatrick, B.~Schlein, and G.~Staffilani.
\newblock Derivation of the two-dimensional nonlinear {S}chr{\"o}dinger
  equation from many body quantum dynamics.
\newblock {\em Am. J. Math.}, 133(1):91--130, 2011.

\bibitem{KnowlesPickl2010meanfield}
A.~Knowles and P.~Pickl.
\newblock Mean-field dynamics: singular potentials and rate of convergence.
\newblock {\em Commun. Math. Phys.}, 298:101--138, 2010.

\bibitem{KKS12}
A.~Komech, E.~Kopylova, and D.~Stuart.
\newblock On asymptotic stability of solitons in a nonlinear {S}chr\"odinger
  equation.
\newblock {\em Commun. Pure Appl.}, 11(3):1063--1079, 2012.

\bibitem{Lee2020rate}
J.~Lee.
\newblock Rate of convergence towards mean-field evolution for weakly
  interacting bosons with singular three-body interactions.
\newblock {\em arXiv preprint arXiv:2006.13040}, 2020.

\bibitem{Lee2013}
J.~O. Lee.
\newblock Rate of convergence towards semi-relativistic {H}artree dynamics.
\newblock {\em Ann. Henri Poincar\'e}, 14(2):313--346, 2013.

\bibitem{LSY04}
E.~H. Lieb, R.~Seiringer, and J.~Yngvason.
\newblock One-dimensional behavior of dilute, trapped {B}ose gases.
\newblock {\em Commun. Math. Phys.}, 244:347--393, 2004.

\bibitem{MA93}
B.~A. Malomed and M.~Y. Azbel.
\newblock Modulational instability of a wave scattered by a nonlinear center.
\newblock {\em Phys. Rev. B}, 47(16):10402, 1993.

\bibitem{NN17-1}
P.~T. Nam and M.~Napi{\'o}rkowski.
\newblock Bogoliubov correction to the mean-field dynamics of interacting
  bosons.
\newblock {\em Adv. Theor. Math. Phys.}, 21(3):683--738, 2017.

\bibitem{NN17-2}
P.~T. Nam and M.~Napi{\'o}rkowski.
\newblock A note on the validity of {B}ogoliubov correction to mean-field
  dynamics.
\newblock {\em J. Math. Pures. Appl.}, 108(5):662--688, 2017.

\bibitem{NamRicaudTriay2022}
P.~T. Nam, J.~Ricaud, and A.~Triay.
\newblock Ground state energy of the low density {B}ose gas with three-body
  interactions.
\newblock {\em J. Math. Phys.}, 63(7):071903, 2022.

\bibitem{nam2020derivation}
P.~T. Nam and R.~Salzmann.
\newblock Derivation of {3D} energy-critical nonlinear {S}chr{\"o}dinger
  equation and bogoliubov excitations for {B}ose gases.
\newblock {\em Commun. Math. Phys.}, 375(1):495--571, 2020.

\bibitem{N93}
F.~Nier.
\newblock The dynamics of some quantum open systems with short-range
  nonlinearities.
\newblock {\em Nonlinearity}, 11(4), 1998.

\bibitem{Pickl2011LMP}
P.~Pickl.
\newblock A simple derivation of mean field limits for quantum systems.
\newblock {\em Lett. Math. Phys.}, 97(2):151--164, 2011.

\bibitem{PJ-LC91}
C.~Presilla, G.~Jona-Lasinio, and F.~Capasso.
\newblock Nonlinear feedback oscillations in resonant tunneling through double
  barriers.
\newblock {\em Phys. Rev. B}, 43(5200(R)), 1991.

\bibitem{RodnianskiSchlein2009}
I.~Rodnianski and B.~Schlein.
\newblock Quantum fluctuations and rate of convergence towards mean field
  dynamics.
\newblock {\em Commun. Math. Phys.}, 291(1):31--61, 2009.

\bibitem{S80}
H.~Spohn.
\newblock Kinetic equations from {H}amiltonian dynamics: {M}arkovian limits.
\newblock {\em Rev. Mod. Phys.}, 52(3):569, 1980.

\end{thebibliography}

\end{document}